\newtheorem{theorem}{Theorem}[section]
\newtheorem{corollary}[theorem]{Corollary}
\newtheorem{lemma}[theorem]{Lemma}
\newtheorem{observation}[theorem]{Observation}
\newtheorem{definition}[theorem]{Definition} 
\newtheorem{conjecture}[theorem]{Conjecture}
\newtheorem{claim}[theorem]{Claim}
\newcommand{\C}{{{\cal{C}}}}
\newcommand{\G}{{{\cal{G}}}}
\newcommand{\old}[1]{{}}
\newcommand{\fullVer}[1]{{}}
\title{Dual Power Assignment via Second Hamiltonian Cycle
}
\author{A. Karim Abu-Affash \ \ \ Paz Carmi \ \ \ Anat Parush Tzur \\
\\
{\small Department of Computer Science, Ben-Gurion University, Israel} \\
{\small {\tt $\{$abuaffas,carmip,parusha$\}$@cs.bgu.ac.il}}\\ }
\begin{document}

\pagestyle{plain}
\maketitle

\begin{abstract}
A \emph{power assignment} is an assignment of transmission power to each of the wireless nodes of a wireless network, so that the induced graph satisfies some desired properties. The \emph{cost} of a power assignment is the sum of the assigned powers. In this paper, we consider the dual power assignment problem, in which each wireless node is assigned a high- or low-power level, so that the induced graph is strongly connected and the cost of the assignment is minimized. 
We improve the best known approximation ratio from $\frac{\pi^2}{6}-\frac{1}{36}+\epsilon\thickapprox 1.617$ to $\frac{11}{7}\thickapprox 1.571$. 

Moreover, we show that the algorithm of Khuller et al.~\cite{Young02} for the strongly connected spanning subgraph problem,
which achieves an approximation ratio of $1.61$, is $1.522$-approximation algorithm for symmetric directed graphs. 
The innovation of this paper is in achieving these results via utilizing interesting properties for the existence of a second Hamiltonian cycle.  
 
\end{abstract}


\section{Introduction}
Given a set $P$ of wireless nodes distributed in a two-dimensional plane, a {\em power assignment} (or a range assignment), in the context of wireless networks, is an assignment of transmission range $r_u$ to each wireless node $u\in P$, so that the induced communication graph has some desired properties, such as strong connectivity. The \emph{cost} of a power assignment is the sum of the assigned powers, i.e., $\sum_{u\in P}r_u^\alpha$, where $\alpha$ is a constant called the {\em distance-power gradient} whose typical value is between $2$ and $5$. A power assignment induces a (directed) \emph{communication graph} $G=(P,E)$, where a directed edge $(u,v)$ belongs to the edge set $E$ if and only if $|uv| \le r_u$, where $|uv|$ is the Euclidean distance between $u$ and $v$. The communication graph $G$ is \emph{strongly connected} if, for any two nodes $u, v \in P$, there exists a directed path from $u$ to $v$ in $G$. In the standard power assignment problem, one has to find a power assignment of $P$ such that (i) its cost is minimized, and (ii) the induced communication graph is strongly connected.

When the available transmission power levels for each wireless node are continuous in a range of reals, many researchers have proposed algorithms for the strong connectivity power assignment problem~\cite{Chen89, Clementi00, Clementi99, Kirousis00, Lloyd02}. In particular, $2$-approximation algorithms based on minimum spanning trees were proposed in~\cite{Chen89, Kirousis00}. When the wireless nodes are deployed in the $2$-dimensional or the $3$-dimensional space, the problem is known to be NP-hard~\cite{Clementi99, Kirousis00}. A survey covering many variations of the problem is given in~\cite{Clementi02}.

In this paper, we study a dual power assignment version, in which each wireless node can transmit in one of two ({\em high} or {\em low}) transmission power levels. Let $r_H$ and $r_L$ denote the transmission ranges of the high- and low-transmission powers, respectively. Since assigning more wireless nodes with the high power level results in a larger power consumption, the objective in the dual power assignment problem is equivalent to minimizing the number of wireless nodes that are assigned high-transmission range $r_H$.

The dual power assignment (DPA) problem was shown to be NP-hard~\cite{Carmi07, Rong04}. Rong et al.~\cite{Rong04} gave a $2$-approximation algorithm, while Carmi and Katz in ~\cite{Carmi07} gave a $9/5$-approximation algorithm and a faster $11/6$-approximation algorithm. Later, Chen et al.~\cite{Chen05} proposed an $O(n^2)$ time algorithm with approximation ratio of $7/4$. Recently, Calinescu~\cite{Calinescu10} improved this approximation ratio to $\approx 1.61$, using in a novel way the algorithm of Khuller et al.~\cite{Khuller02, Young02} for computing a minimum strongly connected subgraph.

A related version asks for a power assignment that induces a connected (also called ``symmetric'' or ``bidirected'') graph. This version is also known to be NP-hard. 
The best known approximation algorithm is based on techniques that were applied to Steiner trees, and achieves approximation ratio of $3/2$~\cite{Nutov09}.

\subsection{Our results}
We present a conjecture regarding an interesting characterization for the existence of a second Hamiltonian cycle and its applications. 
We prove the conjecture for some special cases that are utilized 
(i) to improve the best known approximation ratio for the DPA problem from $\frac{\pi^2}{6}-\frac{1}{36}+\epsilon\thickapprox 1.617$ to $\frac{11}{7}\thickapprox 1.571$, and
(ii) to show that the algorithm of Khuller et al.~\cite{Young02} for the strongly connected spanning subgraph problem,
which achieves a approximation ratio of $1.61$, is $1.522$-approximation algorithm for symmetric unweighted directed graphs. 
Moreover, the correctness of the aforementioned conjecture implies that the approximation algorithm of Khuller et al. is actually a  $3/2$-approximation algorithm in symmetric unweighted digraphs.


\section{Second Hamiltonian Cycle}
A cycle in a graph is Hamiltonian if it visits each node of the graph exactly once; if a graph contains such a cycle, it is called a Hamiltonian graph. Deciding whether a graph is Hamiltonian has been shown to be NP-hard. 
A Hamiltonian graph $G$ contains a second Hamiltonian cycle (\textsc{SecHamCycle} for short) if there exist two Hamiltonian cycles in $G$ that are differed by at least one edge.
A classic result of Smith~\cite{Tutte46} states that each edge in a $3$-regular graph is contained in an even number of Hamiltonian cycles. Thomason~\cite{Thomason78} extended Smith's theorem to all graphs in which all nodes have an odd degree (Thomason's lollipop argument). 
In addition, Thomassen~\cite{Thomassen98} showed that every Hamiltonian $r$-regular graph, where $r \ge 72$, contains \textsc{SecHamCycle}. 
This bound on $r$ was reduced to $23$ by Haxell et al.~\cite{Haxell07}. 

All these related works have considered the existence of \textsc{SecHamCycle} on the whole set of nodes.
In this section, we consider the existence of \textsc{SecHamCycle} also with respect to a subset of the nodes.

Let $G=(V,E)$ be a connected graph and let $\Gamma$ be a subset of $V$.
We say that $G$ contains a Hamiltonian cycle on $\Gamma$ if there exists a simple cycle in $G$ whose nodes are exactly the nodes of $\Gamma$, i.e, the subgraph induced by $\Gamma$ is a Hamiltonian graph.
A cycle in $G$ is $\Gamma$-Hamiltonian with respect to $\Gamma$ if there exists a subset of nodes $U \subseteq (V \setminus \Gamma)$ such that $G$ contains a Hamiltonian cycle on $\Gamma \cup U$. 
We denote such a cycle by $H_G(\Gamma)$;
If $G$ contains $H_G(\Gamma)$, then it is called a $\Gamma$-Hamiltonian graph.
Moreover, we say that $G$ contains a second $\Gamma$-Hamiltonian cycle (\textsc{Sec}-$\Gamma$-\textsc{HamCycle} for short), if $G$ contains a Hamiltonian cycle $H$ on $\Gamma$ and a $\Gamma$-Hamiltonian cycle $H_G(\Gamma)$, that are  differed by at least one edge.

Fleischner~\cite{Fleischner94} constructed a $3$-regular graph $G$ that has a dominating cycle $\Gamma$, such that no other  
\textsc{Sec}-$\Gamma$-\textsc{HamCycle} exists. Below, we conjecture that replacing the regularity requirement with a connectivity requirement, implies the existence of \textsc{Sec}-$\Gamma$-\textsc{HamCycle}.   
  
\begin{conjecture}\label{con:conjectureMain}
Let  $G=(V,E)$ be a connected graph and let $\Gamma \subseteq V$, such that $G$ contains a Hamiltonian cycle $H$ on $\Gamma$ and the graph $(V, E \setminus H)$ is connected. Then $G$ contains a \textsc{Sec}-$\Gamma$-\textsc{HamCycle}.
\end{conjecture}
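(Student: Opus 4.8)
We may assume $|\Gamma|\ge 3$. The plan is to run a rotation argument of Smith--Thomason type~\cite{Tutte46,Thomason78}, extended so as to cope with the fact that a $\Gamma$-Hamiltonian cycle may use vertices of $V\setminus\Gamma$. Fix an edge $e=uu'$ of $H$ and let $u''$ be the other $H$-neighbour of $u$. Call a simple path $P$ of $G$ a \emph{$\Gamma$-path} if $\Gamma\subseteq V(P)$, one endpoint of $P$ is $u$, and the first edge of $P$ is $uu'$; write $x(P)$ for the free endpoint and say $P$ is \emph{closable} if $x(P)u\in E$. Then $P_0:=H\setminus\{uu''\}$ is a closable $\Gamma$-path with $P_0+x(P_0)u=H$; every closable $\Gamma$-path $P$ gives a $\Gamma$-Hamiltonian cycle $P+x(P)u$, this assignment is a bijection onto the $\Gamma$-Hamiltonian cycles through $e$, and $P\neq P_0$ yields a cycle differing from $H$ by an edge. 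So the first step is the reduction: \emph{it suffices to find a closable $\Gamma$-path other than $P_0$.}

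The second step is to build the ``rotation graph'' $\mathcal R$ whose vertices are the $\Gamma$-paths and whose edges record the moves available at the free endpoint $x=x(P)$: a \emph{rotation} along an edge $xy$ with $y$ an internal vertex of $P$ (delete the $P$-edge leaving $y$ towards $x$, add $xy$), a \emph{growth} along an edge $xw$ with $w\in V\setminus\Gamma$ and $w\notin V(P)$ (append $w$), and a \emph{shrink} deleting $x$ when $x\in V\setminus\Gamma$. In the classical setting ($\Gamma=V$, no growth or shrink, and $G$ with all degrees odd) Smith's theorem amounts to the statement that the closable vertices of $\mathcal R$ are precisely its odd-degree vertices, whence by the handshake lemma their number is even and, since $P_0$ is one of them, a second one exists. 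The third step is to reproduce this count under the present hypotheses: the intended role of the connectivity of $(V,E\setminus H)$ is to serve as a substitute for the missing odd-degree condition --- controlling how many rotations are available at each free endpoint and forcing the parity defect to cancel over the connected component of $P_0$ in $\mathcal R$ --- while the growth and shrink moves are arranged so that $\Gamma$-paths of different lengths are reconciled.

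The main obstacle is exactly this third step. Without the odd-degree condition the parity of $\deg_{\mathcal R}(P)$ is not determined by whether $P$ is closable, so the handshake lemma by itself does not single out the closable paths; and the mutually inverse growth and shrink moves create unbounded ``breathing'' chains through $V\setminus\Gamma$ that frustrate any global count. Two easier-looking routes genuinely fail: a single $2$-opt swap exchanging two parallel chords $v_iv_j$ and $v_{i+1}v_{j+1}$ of $H$ need not be available even when $E\setminus H$ is connected; and rerouting $H$ through one fundamental cycle of a spanning tree of $(V,E\setminus H)$ breaks down whenever every such fundamental cycle meets $\Gamma$ in three or more vertices, which can be forced. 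Fleischner's $3$-regular example~\cite{Fleischner94} shows, moreover, that a correct argument must use the connectivity of $E\setminus H$ in an essential and non-local way, since $3$-regularity together with $\Gamma$ being a dominating cycle does not suffice. I therefore expect the full conjecture to require a genuinely new idea; short of that, the rotation argument can still be pushed through in restricted situations --- for instance when $|V\setminus\Gamma|$ is small enough to analyse the breathing chains by hand, or when $(V,E\setminus H)$ contains a spanning path or a spanning star that lets an explicit rotation or swap be written down --- which is what the applications in this paper actually require.
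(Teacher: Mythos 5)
Your proposal is not a proof, and you say so yourself: the entire weight of the argument rests on the third step --- extracting a second closable $\Gamma$-path from the rotation graph $\mathcal R$ --- and you explicitly concede that without an odd-degree condition the handshake-lemma parity count does not distinguish closable paths, and that the growth/shrink moves through $V\setminus\Gamma$ destroy any global count. That is precisely the missing idea, and nothing in the proposal substitutes for it, so as a proof of the statement it has a genuine, acknowledged gap. It is worth being clear, however, that the statement is a \emph{conjecture} in the paper: the authors do not prove it either. What they actually do is (i) reduce it, via Lemma~\ref{lem:lemma2.4}, to the bipartite form of Conjecture~\ref{con:conjecture2.1} (a Hamiltonian cycle $H$ on $V$ plus a connected bipartite tree $(V,U,E)$), and (ii) prove that bipartite form only in special cases: when every node of $U$ has degree at most $3$ (Lemma~\ref{lem:lemma2.1}, Claim~\ref{claim:claim2.1}, Lemma~\ref{lem:lemma2.2}), when two consecutive nodes of $H$ jointly see all high-degree $U$-nodes (Lemma~\ref{lem:lemma2.3}), and when $|V|\le 15$ or $|V|<24$ (Lemmas~\ref{lem:lemma2.5} and~\ref{lem:lemma24}). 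These are the cases the applications actually use.

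Your plan is in the same spirit as theirs --- both ultimately lean on Thomason's lollipop argument --- but the mechanism differs in an instructive way. You propose running the rotation argument directly on $\Gamma$-paths in $G$, hoping connectivity of $(V,E\setminus H)$ replaces the odd-degree hypothesis; the paper instead \emph{manufactures} the odd-degree hypothesis: in the degree-$2$ case it contracts each $U$-node to an edge, decomposes the resulting spanning tree into a forest in which every node of $V$ has odd degree, and applies Thomason verbatim to $H$ plus that forest, handling odd $|V|$ by a doubling trick; higher-degree $U$-nodes are split into degree-$\le 3$ gadgets when the covering condition of Lemma~\ref{lem:lemma2.3} holds, and the $|V|<24$ case is a minimal-counterexample case analysis. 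Your suggested fallback cases (small $|V\setminus\Gamma|$, a spanning path or star in $E\setminus H$) are neither the cases the paper proves nor the ones its Theorem~\ref{theo:theorem1approxRatio} and the $35/23$ bound require, so even as a partial result the proposal does not recover what the paper needs. If you want to push your rotation approach, the concrete open problem is exactly your step three: a parity (or other counting) invariant on $\mathcal R$ that survives non-odd degrees and the growth/shrink moves, with the connectivity of $(V,E\setminus H)$ entering essentially --- Fleischner's example, as you correctly note, rules out anything purely local.
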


The following conjecture, which is a special case of Conjecture~\ref{con:conjectureMain}, is shown in Lemma~\ref{lem:lemma2.4} to be actually equivalent, i.e., the correctness of Conjecture~\ref{con:conjecture2.1} yields the correctness of Conjecture~\ref{con:conjectureMain}.

\begin{conjecture}\label{con:conjecture2.1}
Let $H$ be a Hamiltonian cycle on a set of nodes $V$. Every connected bipartite graph $G_b=(V, U, E)$ admits that 
the graph $G=(V \cup U, H \cup E)$ contains a \textsc{Sec}-$V$-\textsc{HamCycle}.
\end{conjecture}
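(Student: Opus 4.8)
The plan is to restate the claim as a purely combinatorial fact about the cycle $H$, then clear away the easy configurations, reduce (by induction on $|U|$) to a very restricted class of bipartite graphs $G_b$, and finally build the required cycle explicitly on that class; I expect the last step to be the hard one.

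\emph{Reformulation.} The only edges of $G=(V\cup U,\,H\cup E)$ joining two vertices of $V$ are the $n$ edges of $H$, so every cycle of $G$ through all of $V$ arises by deleting some $k\ge 0$ edges of $H$ — which breaks $H$ into $k$ arcs — and reconnecting these arcs into a single cycle by $k$ ``bridges'' $v - u - v'$, one per vertex $u$ of a chosen subset $U'\subseteq U$; each bridge must use edges of $E$, so $v,v'\in N_{G_b}(u)$, and distinct bridges use distinct vertices of $U$ (a repeated $u$ would have degree $4$). A second $V$-Hamiltonian cycle is exactly such a reconnection with $k\ge 1$. Equivalently, taking the symmetric difference with $H$, we seek a cycle $C$ in $G$ that, at every vertex of $V$ it meets, uses exactly one $H$-edge and one $E$-edge, and for which $H\triangle C$ is connected; then $H'=H\triangle C$ is the desired \textsc{Sec}-$V$-\textsc{HamCycle}. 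I would work throughout in this ``delete-and-reconnect'' picture.

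\emph{Easy cases and reduction.} If some $u\in U$ has two $H$-consecutive neighbours $v_i,v_{i+1}$, then splicing $u$ into the edge $v_iv_{i+1}$ of $H$ already yields a second $V$-Hamiltonian cycle (the case $k=1$), so assume no such $u$ exists. Now induct on $|U|$: if some $u\in U$ is not a cut vertex of $G_b$ — in particular if $\deg_{G_b}(u)=1$ — then $G_b-u$ is still a connected bipartite graph, the inductive hypothesis gives a \textsc{Sec}-$V$-\textsc{HamCycle} of $(V\cup(U\setminus\{u\}),\,H\cup(E\setminus\delta(u)))$, and this same cycle works in $G$. The base case $|U|=1$ forces $N_{G_b}(u)=V$, which contains an $H$-consecutive pair, so it is already handled. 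Hence we may assume that \emph{every} vertex of $U$ is a cut vertex of $G_b$; this confines $G_b$ to a tree-like family in which every $U$-vertex is an articulation point carrying its own pendant part, and it is only on this family that real work remains.

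\emph{Main step and obstacle.} For such a $G_b$ I would fix a spanning tree, pick two of its leaves that lie in $V$, and walk along the tree path between them: it alternates $V$-vertices with distinct $U$-vertices and so supplies a chain of virtual edges $\{z_0,z_1\},\{z_1,z_2\},\dots$ carried by distinct members of $U$. The goal is to choose a sub-collection of these virtual edges, together with a matching set of arcs of $H$ to delete, that reassembles $H$ into a single new cycle. The prototype to aim for is the ``parallel shift'' of order $2$: indices $i<j$ and distinct $u,u'\in U$ with $v_i,v_j\in N_{G_b}(u')$ and $v_{i+1},v_{j+1}\in N_{G_b}(u)$, giving the cycle that runs $v_{i+1}\!\to\!\cdots\!\to\! v_j$ along $H$, then $v_j - u' - v_i$, then $v_i\!\to\!\cdots\!\to\! v_{j+1}$ backwards along $H$, then $v_{j+1} - u - v_{i+1}$. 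The crux, and what I expect to be the main obstacle, is that an arbitrary batch of virtual edges usually deletes the wrong number of $H$-edges or glues the arcs into several disjoint cycles instead of one; one has to use the connectivity of $G_b$ — not merely that the virtual edges cover $V$ — to steer the choice so that the counts balance and the outcome is a single cycle. That connectivity really is needed is witnessed by Fleischner's $3$-regular example, in which the analogous conclusion fails, so a correct argument must rely on a connectivity-specific device (a cut-vertex or ear-decomposition analysis of $G_b$, or a careful path contraction) rather than a parity argument of Smith/Thomason type. Finally, by Lemma~\ref{lem:lemma2.4}, a full proof of this statement would also establish Conjecture~\ref{con:conjectureMain}.
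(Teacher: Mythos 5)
Your proposal does not prove the statement: everything after your reductions is a plan, not an argument. The reformulation (a second $V$-Hamiltonian cycle is obtained by deleting $H$-edges and reconnecting the resulting arcs through distinct vertices of $U$) and the preliminary reductions (a vertex of $U$ adjacent to two $H$-consecutive vertices gives an immediate second cycle; degree-one or non-cut vertices of $U$ may be deleted by induction on $|U|$) are sound, and they essentially coincide with the assumptions the paper records immediately after stating Conjecture~\ref{con:conjecture2.1} (no common $U$-neighbour of consecutive $H$-vertices, every $U$-vertex of degree at least two, $G_b$ a tree). But your ``main step'' is precisely the whole content of the conjecture, and you leave it unresolved: you offer no mechanism guaranteeing that some sub-collection of your virtual edges deletes a matching set of $H$-edges and reassembles the arcs into a \emph{single} cycle, and your order-two ``parallel shift'' prototype need not exist in general. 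You should also be aware that the statement is posed in the paper as a conjecture: the paper does not prove it in full either, only in special cases --- every $U$-vertex of degree at most $3$ (Lemma~\ref{lem:lemma2.2}, via Claim~\ref{claim:claim2.1} and Lemma~\ref{lem:lemma2.1}), the coverage condition of Lemma~\ref{lem:lemma2.3} and Corollary~\ref{cor:cor2.10}, and $|V|<24$ (Lemmas~\ref{lem:lemma2.5} and~\ref{lem:lemma24}) --- which is all its algorithmic applications require. So the honest assessment is that your write-up reproduces the easy reductions and stops exactly at the open part.

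One further point: your claim that a parity argument ``of Smith/Thomason type'' cannot be the right device is at odds with how the paper actually makes progress. In the tractable cases the paper uses the connectivity of $G_b$ to extract a spanning forest in which every vertex of $V$ has odd degree and every $U$-vertex has degree $2$ (Claim~\ref{claim:claim2.1}), replaces each degree-$2$ vertex of $U$ by a direct edge between its two neighbours, and then applies Thomason's lollipop theorem to the resulting all-odd-degree Hamiltonian graph, with a duplication trick when $|V|$ is odd (Lemma~\ref{lem:lemma2.1}); vertices of larger degree are handled by further transformations (Lemma~\ref{lem:lemma2.3}, Claim~\ref{cl:AcycleThatAdimtCycle}). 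Fleischner's example shows only that Thomason's theorem cannot be invoked directly for dominating cycles when the connectivity hypothesis of Conjecture~\ref{con:conjectureMain} is absent; it does not rule out parity arguments after a connectivity-based reduction, which is the one route that has produced partial results. If you want to go beyond the paper, extending that odd-degree-forest decomposition to $U$-vertices of unbounded degree is the concrete obstacle to attack, rather than steering an unstructured batch of virtual edges by cut-vertex analysis alone.
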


Notice that if two consecutive nodes in $H$ share a common adjacent node of $U$ in $G_b$, then Conjecture~\ref{con:conjecture2.1} is obviously true. Thus, we assume that no such two nodes exist. In addition, since nodes of $U$ of degree $1$ (in $G_b$) can be removed without affecting the correctness of the conjecture, we may assume that each node in $U$ is of degree at least $2$. Finally, we may assume that $G_b$ is a tree. In the following lemmas, we prove Conjecture~\ref{con:conjecture2.1} for some special cases that are essential for proving Theorem~\ref{theo:theorem1approxRatio} in the sequel section.

\begin{lemma} \label{lem:lemma2.1}
If each node in $U$ is of degree $2$, then the conjecture is true.
\end{lemma}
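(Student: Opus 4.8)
The plan is to first strip the statement down to a purely combinatorial fact about a Hamiltonian cycle and a spanning tree, and then attack that fact. Under the degree‑$2$ hypothesis each $u\in U$ has exactly two neighbours $a_u,b_u\in V$, which are non‑consecutive on $H$ by the first reduction; and counting the edges of the tree $G_b$ two ways (it has $|V|+|U|-1$ edges, and since every node of $U$ has degree $2$ it also has $2|U|$ edges) gives $|U|=|V|-1=:n-1$. Contracting each $u$ to the edge $a_ub_u$ therefore turns $G_b$ into a graph $T$ on $V$ with $n-1$ edges that is connected (because $G_b$ is) and simple (two nodes of $U$ with the same neighbour pair would close a $4$‑cycle in the tree $G_b$); hence $T$ is a spanning tree of $V$ all of whose edges are chords of $H$. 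Moreover a \textsc{Sec}-$V$-\textsc{HamCycle} of $G=(V\cup U,\,H\cup E)$ is exactly a Hamiltonian cycle of $H\cup T$ different from $H$: using a tree edge $a_ub_u$ corresponds to routing $a_u-u-b_u$, and a spanning cycle of $H\cup T$ that uses no tree edge can only be $H$ itself. So it suffices to prove: \emph{if $T$ is a spanning tree of $V$ edge‑disjoint from a Hamiltonian cycle $H$ on $V$, then $H\cup T$ contains a Hamiltonian cycle other than $H$.}

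One natural route for this is induction on $n=|V|$. For $n\le 4$ no such configuration survives the standing reductions (too few chords are available, forcing either a repeated neighbour pair, which contradicts $G_b$ being a tree, or a node of $U$ joining two consecutive vertices, which makes the conjecture trivially true), and the single genuine base case $n=5$ is checked by hand — there $T$ is, up to symmetry, a Hamiltonian path of the ``pentagram'', and the desired second cycle is exhibited directly. For the inductive step, pick a leaf $\ell$ of $T$, let $p$ be its unique neighbour in $T$, and let $\ell^-,\ell^+$ be the two neighbours of $\ell$ on $H$ (so $p\notin\{\ell^-,\ell^+\}$ since $\{\ell,p\}$ is a chord). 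Delete $\ell$ from $H$ by splicing $\ell^-\!-\!\ell\!-\!\ell^+$ into the single edge $e_0=\{\ell^-,\ell^+\}$, and set $T'=T-\ell$; then $H'$ is a Hamiltonian cycle of $V'=V\setminus\{\ell\}$ and $T'$ a spanning tree of $V'$. If $e_0\in T$ — equivalently some node of $U$ joins $\ell^-,\ell^+$ — then $\ell^-,\ell^+$ become consecutive on $H'$ with a common $U$‑neighbour and a second cycle of the smaller instance is produced by the trivial reduction; otherwise $E(T')\cap E(H')=\emptyset$ and the induction hypothesis supplies a Hamiltonian cycle $C'\ne H'$ of $H'\cup T'$. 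If $C'$ uses the splice edge $e_0$, replacing it by $\ell^-\!-\!\ell\!-\!\ell^+$ yields a Hamiltonian cycle of $V$ that still differs from $H$ (it retains a difference edge of $C'$ versus $H'$ not touching $\ell$), and we are done; if $C'$ avoids $e_0$, then $C'$ already lives in $(H\cup T)-\ell$ and $\ell$ must be reinserted through its tree edge to $p$, e.g.\ by replacing an edge $\{\ell^-,p\}$ of $C'$ with $\ell^-\!-\!\ell\!-\!p$ when such an edge is available.

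The main obstacle is precisely this last case: nothing in the argument forces $C'$ to cooperate near $\ell$, and when none of $e_0$, $\{\ell^-,p\}$, $\{\ell^+,p\}$ is simultaneously present in $H\cup T$ and used by $C'$, the reinsertion fails. I expect to overcome it by one of three routes: (i) choosing the leaf $\ell$ more carefully — using that $T$ has at least two leaves, and exploiting vertices of small $T$‑degree near $\ell$ — so that a usable reinsertion edge always exists; (ii) strengthening the induction hypothesis to ``for every edge $e\in H$, the graph $H\cup T$ has a Hamiltonian cycle $\ne H$ through $e$'', which would let us demand that $C'$ use $e_0$ and always land in the easy case; or (iii) bypassing the leaf reduction and building the new cycle by an explicit alternating re‑routing of $H$ along $T$ (a ``$k$-opt'' move), the simplest being the $2$‑opt: if $T$ contains two chords $\{x,y\}$ and $\{x^-,y^-\}$ on four distinct vertices, then deleting $\{x^-,x\}$ and $\{y^-,y\}$ from $H$ and adding these two chords reconnects the two resulting $H$‑arcs into a single new Hamiltonian cycle. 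Conceptually this is the Hamiltonian‑cycle analogue of the parity phenomena behind Smith's and Thomason's theorems, but since $H\cup T$ is neither cubic nor odd‑regular those results do not apply verbatim, so the counting must be replaced by a constructive argument — doing that cleanly is the crux of the lemma.
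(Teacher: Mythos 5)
Your opening reduction is sound and matches the paper's first step: replace each degree-$2$ node $u\in U$ by the chord joining its two neighbours, so that the statement becomes ``if a spanning tree $T$ of $V$ is edge-disjoint from the Hamiltonian cycle $H$, then $H\cup T$ has a Hamiltonian cycle other than $H$.'' But the proof of that core fact is not there: your inductive leaf-deletion argument breaks exactly where you say it does (when the smaller cycle $C'$ avoids both the splice edge $e_0$ and any edge of $C'$ incident to $p$ that could be traded for the path $\ell^-\!-\!\ell\!-\!p$ or $\ell^+\!-\!\ell\!-\!p$), and the three repair strategies (careful leaf choice, a ``through every $H$-edge'' strengthening, $2$-opt rerouting) are only sketched, not carried out. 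In particular the $2$-opt move can fail outright when $T$ has no two vertex-disjoint chords in the required position, and the strengthened hypothesis is itself harder than the lemma; so as written the argument has a genuine gap at its crux.

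The paper closes this gap by the parity route you set aside. Thomason's extension of Smith's theorem does not require regularity — it applies to any graph in which \emph{every vertex has odd degree} — and the trick is to prune $T$ to a spanning forest $T'\subseteq T$ in which every vertex of $V$ has odd degree (such a forest exists whenever $|V|$ is even, by an easy induction). Then every vertex of $H\cup T'$ has degree $2+\mathrm{odd}=\mathrm{odd}$, so the lollipop argument yields a second Hamiltonian cycle of $H\cup T'$, hence of $H\cup T$ and of $G$. When $|V|$ is odd, the paper doubles the graph: take two copies of $(V,H\cup T)$, break one $H$-edge $\{v_i,v_j\}$ in each copy and rejoin the copies by the edges $\{v_i,v_i'\},\{v_j,v_j'\},\{v_j,v_i'\}$, producing an even instance whose second Hamiltonian cycle through the linking edges projects back to one in $G$. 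If you want to salvage your write-up, the cleanest fix is to drop the leaf induction and prove the odd-degree-forest claim (your option of invoking parity, done correctly), since that is precisely what makes the Thomason machinery applicable here.
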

\begin{proof}
Since each $u\in U$ is connected to two nodes of $V$, $G_b$ can be converted to a spanning tree $T=(V,E_T)$ of $V$ by connecting any two adjacent nodes of a node $u \in U$ via an edge and deleting $u$ and the edges incident to it; that is, $G=(V, H \cup E_T)$. We distinguish two cases:
\begin{itemize}
	\item {\bf $|V|$ is even:} decompose $T$ into a forest $T'=(V,E_{T'})$ s.t. each node of $V$ has an odd degree in $T'$. The existence of such a decomposition can be easily proven by induction on $|V|$. 
The graph $G'=(V, H \cup E_{T'})$ is a Hamiltonian graph with nodes of odd degree; 
therefore, by Thomason's lollipop argument~\cite{Thomason78}, it contains a \textsc{SecHamCycle} on $V$ that yields a \textsc{SecHamCycle} on $V$ in $G$. 
\item {\bf $|V|$ is odd:} duplicate $G$ to get a new graph $G_{d}=(V\cup V',H\cup E' \cup E_T\cup E'_T)$, in which $V'$ is a copy of $V$, and, for each edge $\{v_i,v_j\}\in H$ (resp., $\{v_i,v_j\}\in E_{T}$), there is an edge $\{v'_i,v'_j\} \in E'$ (resp., $\{v'_i,v'_j\} \in E'_T$). Let $v_i$ and $v_j$ be two consecutive nodes in $H$. Connect $v_i$ (resp., $v_j$) to its duplicated node $v'_i$ (resp., $v'_j$) by an edge denoted by $e_i$ (resp., $e_j$), and connect $v_j$ to $v'_i$ by an edge. Finally, remove from $G_d$ the edges $\{v_i,v_j\}$ and $\{v'_i,v'_j\}$. The obtained graph $G_d$ contains a Hamiltonian cycle and a spanning tree on $V\cup V'$ that are edge disjoint. By case 1, since $|V\cup V'|$ is even, we conclude that $G_d$ contains a \textsc{SecHamCycle} on $V\cup V'$; that contains $e_i$ and $e_j$, and yields a \textsc{SecHamCycle} on $V$ in $G$.
\end{itemize}
\end{proof}

\begin{claim}~\label{claim:claim2.1} 
Let $T=(V, U,E_T)$ be a bipartite spanning tree of $V\cup U$, s.t. $|V|$ is even and all nodes of $U$ are of degree $2$ or $3$. Then, there exists a forest $T'=(V, U,E^{'}_{T})$, in which (i) $E^{'}_{T} \subseteq E_T$, (ii) each node in $V$ is of odd degree, and (iii) each node in $U$ is of degree $2$. 
\end{claim}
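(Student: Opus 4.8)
The plan is to induct on $|U|$, the number of degree-$2$-or-$3$ vertices, reducing each $U$-vertex to an edge or a path and invoking the known even-order decomposition result for the resulting (multi)graph on $V$. First I would recall the fact already used in Lemma~\ref{lem:lemma2.1}: any forest on an even number of vertices decomposes into a spanning forest in which every vertex has odd degree (a straightforward induction, taking a leaf $v$ with neighbor $w$, deleting the leaf edge, recursing, and re-attaching if parity demands). I want to push this through the contraction of $U$-vertices while controlling what happens to the degrees of the surviving $U$-vertices.

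The key step is the reduction. For a $U$-vertex $u$ of degree $2$ with neighbors $a,b\in V$ (note $T$ is bipartite, so both neighbors lie in $V$), I replace $u$ and its two incident edges by a single edge $\{a,b\}$; the resulting graph $T_1$ is a forest on $V\cup(U\setminus\{u\})$ and is still a tree when restricted appropriately, and every $U$-vertex of $T_1$ still has degree $2$ or $3$. For a $U$-vertex $u$ of degree $3$ with neighbors $a,b,c\in V$, I cannot simply contract without changing degrees, so instead I observe that exactly one of the three incident edges must be discarded in any valid $T'$ restricted near $u$ (since $u$ must end with degree $2$); I branch on which edge is kept, say keep $\{a,u\},\{b,u\}$ and drop $\{c,u\}$ — but dropping $\{c,u\}$ may disconnect $c$'s subtree. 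The cleaner device is: since $u$ has degree $3$ and $T$ is a tree, removing $u$ splits $T$ into three components $A\ni a$, $B\ni b$, $C\ni c$; I pick the two of these three components, say $A$ and $B$ (chosen later to make parities work), contract $u$ with its edges to $a$ and $b$ into an edge $\{a,b\}$, and separately note that the third edge $\{c,u\}$ becomes a pendant structure on component $C$ — more precisely, I reattach $c$ to, say, $a$ by a new edge so that $C$ is not orphaned, keeping the host graph connected, and record that this new edge will be used to flip $c$'s parity if needed. After all $U$-vertices are processed this way we obtain a forest $\widehat{T}$ on $V$ alone with $|V|$ even; apply the odd-degree decomposition to get $\widehat{T}'$ with every vertex of odd degree; then lift $\widehat{T}'$ back to $V\cup U$ by re-expanding each contracted edge through its $U$-vertex, so that $U$-vertex regains degree exactly $2$, and the $V$-endpoints keep the same parity they had in $\widehat{T}'$.

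The main obstacle I anticipate is the bookkeeping at degree-$3$ vertices: when I expand a contracted edge $\{a,b\}$ back through a degree-$3$ vertex $u$ whose third neighbor $c$ was reattached elsewhere, I must ensure that $c$ still ends up with odd degree and that no $V$-vertex is left with even degree because an edge it ``expected'' went to $u$ instead. To handle this I would set up the induction to carry a stronger hypothesis: for any prescribed vertex $v_0\in V$ one can obtain the odd-degree decomposition so that, in addition, a chosen pendant edge at $v_0$ is either present or absent — i.e. the decomposition has enough flexibility at one designated vertex to absorb the parity correction forced by each degree-$3$ contraction. With that strengthened statement the degree-$3$ case goes through: route the ``correction'' of $c$ through the designated flexible vertex of component $C$, recurse, and re-expand. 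Conditions (i) $E'_T\subseteq E_T$ and (iii) $\deg_U=2$ are immediate from the construction since we only delete edges and every re-expanded $U$-vertex is given exactly its two retained edges; condition (ii) is exactly what the strengthened odd-degree decomposition guarantees after lifting.
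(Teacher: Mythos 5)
There is a genuine gap in your handling of the degree-$3$ nodes of $U$. Your device of re-attaching the third neighbor $c$ by a \emph{new} edge (e.g.\ $\{c,a\}$) already threatens condition (i), since that edge is not in $E_T$, and your proposed repair---a strengthened induction hypothesis saying the odd-degree decomposition has ``enough flexibility at one designated vertex'' to include or exclude a chosen pendant edge---is not available: in a tree (or in a forest all of whose components have even order) the edge subset giving every vertex odd degree is \emph{unique}. Indeed, if $F_1,F_2$ were two such subsets, every vertex would have even degree in $F_1\triangle F_2$, so $F_1\triangle F_2$ would be a union of cycles, impossible in a forest. Whether a given edge lies in the decomposition is forced (it lies in it iff one side of the tree minus that edge has odd order), so there is no parity correction you can ``route through'' a designated vertex, and if the unique decomposition of your contracted tree uses the artificial edge $\{c,a\}$ you cannot lift it back into $E_T$. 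As written, the degree-$3$ reduction therefore does not go through.

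The missing idea is a much simpler parity pigeonhole, which is how the paper argues: induct on the number of degree-$3$ nodes of $U$. If $u\in U$ has the three neighbors $v_i,v_j,v_k\in V$, the three subtrees of $T$ hanging at these neighbors (not containing $u$) partition $V$; since $|V|$ is even, they cannot all contain an odd number of $V$-nodes, so some subtree, say the one at $v_i$, contains an even number. Deleting the single edge $\{u,v_i\}$ (an edge of $E_T$, so (i) is preserved) splits $T$ into two trees, each with an even number of $V$-nodes and with strictly fewer degree-$3$ nodes of $U$. Iterating, you reach a forest in which every node of $U$ has degree $2$ and every component has an even number of $V$-nodes, and then the construction of case 1 in the proof of Lemma~\ref{lem:lemma2.1} (contract each degree-$2$ node of $U$ to an edge between its two $V$-neighbors, take the odd-degree subforest, and expand back) finishes the claim. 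Your degree-$2$ contraction and the final lifting step match the paper; it is the degree-$3$ step that needs to be replaced by this even-subtree argument rather than re-attachment plus a flexibility assumption.
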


\begin{proof}
The claim can be proven by an induction on the number of nodes of degree $3$ in $U$. Consider a node $u \in U$ of degree $3$ that is connected to three nodes $v_i,v_j$ and $v_k$ from $V$. Since $|V|$ is even, at least one of the three subtrees rooted at $v_i,v_j$ and $v_k$ (and not containing $u$) has an even number of nodes from $V$. Assume w.l.o.g. that the subtree rooted at $v_i$ has an even number of nodes from $V$. Thus, removing the edge $\{u,v_i\}$ from $T$ decomposes $T$ into two subtrees each has less number of nodes of degree $3$ from $U$ than $T$. Once we have a forest of subtrees each has even number of nodes from $V$ and each node from $U$ has a degree $2$, we can convert it to a forest $T'$ as in case 1 in the proof of Lemma~\ref{lem:lemma2.1}. 
\end{proof}

By this claim and by Lemma~\ref{lem:lemma2.1}, we have the following lemma.
\begin{lemma}\label{lem:lemma2.2}
If each node in $U$ is of degree at most $3$, then the conjecture is true.
\end{lemma}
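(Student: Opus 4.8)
The plan is to mimic the proof of Lemma~\ref{lem:lemma2.1}, using Claim~\ref{claim:claim2.1} in place of the ad hoc forest decomposition available there. By the reductions preceding Lemma~\ref{lem:lemma2.1} we may assume $G_b$ is a spanning tree $T=(V,U,E_T)$ of $V\cup U$, that every node of $U$ has degree exactly $2$ or $3$, and that no two $H$-consecutive nodes of $V$ have a common neighbour in $U$. Suppose first that $|V|$ is even. Invoke Claim~\ref{claim:claim2.1} to get a subforest $T'=(V,U,E'_T)$, $E'_T\subseteq E_T$, in which every $v\in V$ has odd degree and every $u\in U$ has degree $2$. Now \emph{short-circuit} each $u\in U$: delete $u$ and its two edges and join its two $V$-neighbours directly. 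Since $T'$ is a forest this yields a forest $T''$ on $V$; no parallel edge appears (two $U$-nodes smoothed to the same pair would close a cycle in $T'$), and no edge of $T''$ equals an edge of $H$ (the two $V$-neighbours of a degree-$2$ node $u$ are not $H$-consecutive). Each $v\in V$ keeps its odd $T'$-degree, so in $G''=(V,\,H\cup E_{T''})$ every vertex has odd degree, and $G''$ is a simple Hamiltonian graph. Thomason's lollipop argument~\cite{Thomason78} gives a Hamiltonian cycle $C\neq H$ of $G''$, and $C$ must use some edge of $T''$ (a Hamiltonian cycle built only from edges of the cycle $H$ is $H$). Undoing the short-circuits turns $C$ into a $V$-Hamiltonian cycle of $G$ that visits some node of $U$ and hence differs from $H$ — the desired \textsc{Sec}-$V$-\textsc{HamCycle}.

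When $|V|$ is odd I would reduce to the even case by doubling, as in the second case of Lemma~\ref{lem:lemma2.1}. Fix two $H$-consecutive nodes $v_i,v_j$, take a disjoint copy $(V',U',E'_T,H')$ of the whole configuration, delete $\{v_i,v_j\}$ from $H$ and $\{v'_i,v'_j\}$ from $H'$, and add the crossing edges $e_i=\{v_i,v'_i\}$ and $e_j=\{v_j,v'_j\}$; the two Hamiltonian paths then splice into a single Hamiltonian cycle $\tilde H$ on $V\cup V'$. To keep the bipartite part a tree while joining the two copies (a direct $v_j v'_i$ edge would destroy bipartiteness, so Claim~\ref{claim:claim2.1} could not be applied), introduce one fresh node $u^{*}$ of degree $2$ adjacent to $v_j$ and $v'_i$. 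Then $\hat T=\bigl(V\cup V',\ U\cup U'\cup\{u^{*}\},\ E_T\cup E'_T\cup\{u^{*}v_j,\,u^{*}v'_i\}\bigr)$ is a bipartite spanning tree of the doubled vertex set, all its $U$-side nodes have degree $2$ or $3$, it is edge-disjoint from $\tilde H$, the genericity assumption is inherited, and $|V\cup V'|$ is even. Running the even-case argument on $\bigl(V\cup V',\ \tilde H\cup\hat T\bigr)$ yields, after short-circuiting, a simple all-odd-degree Hamiltonian graph $G''_d$ on $V\cup V'$ together with a Hamiltonian cycle $\hat C\neq\tilde H$ of $G''_d$.

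It then remains to project $\hat C$ back into $G$, and this is where the connector $u^{*}$ earns its keep. In $G''_d$ the only edges joining the two copies are $e_i$, $e_j$, and the smoothing $\{v_j,v'_i\}$ of $u^{*}$; a Hamiltonian cycle of $G''_d$ uses an even number of them, hence (there being only three) exactly two, and deleting those two from $\hat C$ splits it into two arcs, one inside each copy. If the two deleted edges shared an endpoint inside a copy — which happens for the pairs $\{e_i,\{v_j,v'_i\}\}$ and $\{e_j,\{v_j,v'_i\}\}$ — that copy's arc would begin and end at the same vertex and so be a single vertex, forcing $|V|=1$; since $|V|\ge 3$, $\hat C$ must use exactly $e_i$ and $e_j$. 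Therefore $\hat C$ restricts to a Hamiltonian $v_i$--$v_j$ path on $V$ (through $U$ once the short-circuits are undone) and likewise to a Hamiltonian path on $V'$; closing these with $\{v_i,v_j\}$ and $\{v'_i,v'_j\}$ respectively gives two $V$-Hamiltonian cycles of $G$, and because $\hat C\neq\tilde H$ at least one of them differs from $H$, which is the required \textsc{Sec}-$V$-\textsc{HamCycle}.

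The bookkeeping steps — simplicity after short-circuiting, edge-disjointness of $\tilde H$ and $\hat T$, and preservation of the genericity assumption under doubling — are routine. The two substantive points are that the even case must be settled first so that Claim~\ref{claim:claim2.1} becomes applicable, and, above all, the crossing-edge parity argument: the doubled graph has to be engineered (here via $u^{*}$ together with the deletions at $v_i,v_j$) so that the \emph{only} admissible way for a second Hamiltonian cycle to cross between the copies is through $e_i$ and $e_j$. I expect this last point to be the crux of the proof, exactly as it is in Lemma~\ref{lem:lemma2.1}.
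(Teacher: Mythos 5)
Your proof is correct and takes essentially the paper's approach: the paper proves this lemma simply by combining Claim~\ref{claim:claim2.1} with the method of Lemma~\ref{lem:lemma2.1} (odd-degree subforest plus Thomason's lollipop argument for even $|V|$, duplication for odd $|V|$), which is exactly what you carry out in detail. Your only deviation is the bookkeeping tweak of subdividing the connector edge by the degree-$2$ node $u^{*}$ so that Claim~\ref{claim:claim2.1} remains applicable to the doubled bipartite tree, together with the crossing-edge parity argument forcing $e_i,e_j$ into the second cycle --- details the paper leaves implicit but which are precisely what its odd case relies on.
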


The following corollary obtained by applying the duplication technique from the proof of Lemma~\ref{lem:lemma2.1}.
\begin{corollary}\label{cor:corollary2.1} 
If each node in $U$ is of degree at most $3$, then, for any edge $e$ of $H$, there exists a \textsc{Sec}-$V$-\textsc{HamCycle} in $G$ that contains $e$.
\end{corollary}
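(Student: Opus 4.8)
The plan is to re-examine the constructions behind Lemmas~\ref{lem:lemma2.1} and~\ref{lem:lemma2.2} and to exploit the fact that, once $G$ has been turned into a Hamiltonian graph all of whose vertices have odd degree, Thomason's lollipop argument~\cite{Thomason78} produces a second Hamiltonian cycle through \emph{every} edge of the given one, not merely \emph{some} second cycle. As in the proof of Lemma~\ref{lem:lemma2.1}, I would split according to the parity of $|V|$.

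When $|V|$ is even, I would run the argument of Lemma~\ref{lem:lemma2.2}: apply Claim~\ref{claim:claim2.1} to the bipartite spanning tree $G_b$ and contract each resulting degree-$2$ node of $U$ into an edge joining its two $V$-neighbours, obtaining a forest $F$ on $V$ in which every vertex has odd degree; by the standing assumption that no two $H$-consecutive nodes of $V$ share a common neighbour in $U$, the edge sets $F$ and $H$ are disjoint, so $G'=(V,H\cup F)$ is Hamiltonian with all degrees odd. Fixing the prescribed edge $e\in H$ and noting that $e$ lies on the Hamiltonian cycle $H$ of $G'$, Thomason's lollipop argument forces $e$ onto an even number of Hamiltonian cycles of $G'$, hence onto at least two; picking one, say $C\neq H$, and expanding the contracted nodes of $U$ back into paths along $C$ yields a $V$-Hamiltonian cycle of $G$ that contains $e$ and differs from $H$.

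When $|V|$ is odd, I would import the duplication device of Lemma~\ref{lem:lemma2.1}, aligned with $e=\{v_i,v_j\}$ (with $v_i,v_j$ consecutive on $H$): take a disjoint copy $(V'\cup U',\,H'\cup E')$ of $G$, delete $\{v_i,v_j\}$ and $\{v'_i,v'_j\}$, add the link edges $e_i=\{v_i,v'_i\}$ and $e_j=\{v_j,v'_j\}$, and join $v_j$ to $v'_i$ through a fresh degree-$2$ node $u_f$ so that the instance stays bipartite. On the even vertex set $W=V\cup V'$ the resulting graph $G_d$ carries the Hamiltonian cycle $C_0$ formed from $H\setminus\{v_iv_j\}$, $e_j$, $H'\setminus\{v'_iv'_j\}$ and $e_i$ (so $C_0$ uses both $e_i$ and $e_j$), and its bipartite part is again a spanning tree with all $U$-degrees at most $3$; Claim~\ref{claim:claim2.1} followed by the same contraction then supplies an odd-degree forest $F_d$ on $W$ edge-disjoint from $C_0$, so $G_d'=(W,\,C_0\cup F_d)$ is Hamiltonian with all degrees odd. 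Thomason's argument now hands me a Hamiltonian cycle $C_1\neq C_0$ of $G_d'$ through $e_i$, and a short parity-and-connectivity argument on the at most three edges of $G_d'$ crossing the cut $(V,V')$ shows that $C_1$ uses exactly $e_i$ and $e_j$. Folding $V'$ back onto $V$ and re-expanding the contracted nodes, $C_1$ splits into two $V$-Hamiltonian paths of $G$ from $v_i$ to $v_j$, at least one of which differs from $H\setminus\{v_iv_j\}$ (otherwise $C_1=C_0$); adjoining $e$ to that path closes it into a \textsc{Sec}-$V$-\textsc{HamCycle} of $G$ containing $e$.

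I expect the odd case to be the main obstacle, and within it two bookkeeping points: (a) arranging $G_d$ so that it simultaneously contains the Hamiltonian cycle $C_0$ through $e_i,e_j$ and an edge-disjoint odd-degree forest coming from Claim~\ref{claim:claim2.1} --- this is precisely why $v_j$ and $v'_i$ are linked through an auxiliary degree-$2$ node rather than by a raw edge; and (b) the folding step, where one has to verify both that the cut argument really pins $C_1$ down to $e_i$ and $e_j$ and that the folded object, after the contracted nodes of $U$ are re-expanded, is a genuine simple $V$-Hamiltonian cycle of $G$ distinct from $H$.
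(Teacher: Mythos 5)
Your proposal is correct and follows essentially the paper's route: the paper obtains this corollary by the same duplication technique of Lemma~\ref{lem:lemma2.1} keyed to the chosen edge (combined with Claim~\ref{claim:claim2.1} and Thomason's theorem), which is exactly your odd-$|V|$ case, while your even-$|V|$ shortcut via the per-edge form of Thomason's theorem is a harmless simplification. You also supply details the paper leaves implicit --- the cut/parity argument forcing the second cycle of the duplicated graph through both $e_i$ and $e_j$, and the subdividing node $u_f$ that keeps the duplicated instance bipartite before applying Claim~\ref{claim:claim2.1} --- and these checks go through as you describe.
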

\begin{corollary}\label{cor:corollaryForest1} 
Let $H$ be a Hamiltonian cycle on a set of nodes $V$, and let $F=(V, U,E)$ be a bipartite forest, such that 
(i) each node in $U$ is of degree at most 3, 
(ii) each tree in $F$ contains an even number of nodes of $V$.
Then, the graph $G=(U \cup V, H \cup E)$ contains  a \textsc{Sec}-$V$-\textsc{HamCycle}.
\end{corollary}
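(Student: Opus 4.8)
The plan is to reduce the statement to a single application of Thomason's lollipop argument~\cite{Thomason78}, following the structure of the proofs of Lemma~\ref{lem:lemma2.1} and Claim~\ref{claim:claim2.1}, but using hypothesis (ii) to bypass the parity/duplication case analysis entirely: since every tree of $F$ already contains an even number of nodes of $V$, each tree can be processed independently.

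First I would do some harmless preprocessing. As in the discussion following Conjecture~\ref{con:conjecture2.1}, if some $u\in U$ is adjacent in $F$ to two nodes $v_i,v_j$ that are consecutive on $H$, then replacing the edge $\{v_i,v_j\}$ of $H$ by the path $v_i,u,v_j$ already yields a $V$-Hamiltonian cycle differing from $H$, so assume no such $u$ exists. Next, delete from $F$ every node of $U$ of degree at most $1$ together with its incident edge (if any). Deleting a pendant node keeps each tree connected and does not change the number of $V$-nodes in any tree, so hypothesis (ii) is preserved; moreover, since $F$ is bipartite, such deletions only lower degrees of $V$-nodes, so after one pass every remaining node of $U$ has degree exactly $2$ or $3$. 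A \textsc{Sec}-$V$-\textsc{HamCycle} in the resulting subgraph is one in $G$, so I may work with this reduced forest, still called $F$; note every tree of $F$ with a positive (hence even, hence $\ge 2$) number of $V$-nodes still contains at least one node of $U$.

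Now apply Claim~\ref{claim:claim2.1} to each tree $T_\ell$ of $F$ (valid, since $|V\cap T_\ell|$ is even and all its $U$-nodes have degree $2$ or $3$): this produces a sub-forest $T'_\ell$ of $T_\ell$ in which every node of $V\cap T_\ell$ has odd degree and every retained node of $U$ has degree $2$. Let $F'=\bigcup_\ell T'_\ell$; then $F'\subseteq F$ spans $V$ (each $V$-node has odd, hence positive, degree), every $V$-node has odd degree in $F'$, and every retained $U$-node has degree $2$. Contract each such degree-$2$ node $u\in U$, with neighbours $v,w$, into the edge $\{v,w\}$; call the resulting graph on vertex set $V$ the forest $\hat F$. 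Contracting degree-$2$ vertices of a forest again gives a forest, and $\hat F$ is simple: two distinct nodes of $U$ cannot share both their neighbours, since that would create a $4$-cycle in the forest $F$. The degree of each $v\in V$ is unchanged by these contractions, so every node of $V$ has odd degree in $\hat F$; and $\hat F$ is edge-disjoint from $H$, because an edge $\{v,w\}$ of $\hat F$ arises from a $u\in U$ adjacent to both $v$ and $w$, so $\{v,w\}\in H$ would contradict the preprocessing assumption.

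Finally, consider $\hat G=(V,H\cup\hat F)$. It is Hamiltonian (it contains $H$), and every node has degree $2$ from $H$ plus an odd number from $\hat F$, hence odd. By Thomason's lollipop argument~\cite{Thomason78}, every edge of $\hat G$ lies in an even number of Hamiltonian cycles; picking any edge $e$ of $H$, which lies in at least one, yields a Hamiltonian cycle $H''\neq H$ of $\hat G$. Expanding $H''$ back into $G$ by replacing each edge of $H''\cap\hat F$ with the corresponding length-$2$ path through its node of $U$ (distinct $\hat F$-edges using distinct $U$-nodes) gives a simple cycle $\tilde H$ through all of $V$ and a subset of $U$, i.e., a $V$-Hamiltonian cycle of $G$; since $H''\neq H$ forces $H''$ to use an edge of $\hat F$, the cycle $\tilde H$ uses an edge incident to a node of $U$ and so differs from $H$. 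Thus $H$ and $\tilde H$ witness a \textsc{Sec}-$V$-\textsc{HamCycle} in $G$. The main obstacle is the bookkeeping in the contraction step — ensuring that contracting the degree-$2$ nodes of $U$ neither creates parallel edges nor an edge of $H$, and that the odd-degree property of the $V$-nodes survives — all of which hinge on $F$ being acyclic together with the preprocessing assumption; once $\hat G$ is seen to have all degrees odd, the conclusion is immediate from Thomason's theorem.
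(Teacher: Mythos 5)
Your proof is correct and follows essentially the same route the paper intends for this corollary: apply Claim~\ref{claim:claim2.1} tree by tree (using hypothesis (ii) to get an even number of $V$-nodes in each tree), reduce to a graph on $V$ in which every node has odd degree, and invoke Thomason's lollipop argument as in case~1 of the proof of Lemma~\ref{lem:lemma2.1}, then lift the second Hamiltonian cycle back through the degree-$2$ nodes of $U$. The extra bookkeeping you supply (removing degree-$\le 1$ nodes of $U$, the no-common-neighbour-of-consecutive-$H$-nodes assumption, simplicity and edge-disjointness of the contracted forest) is exactly the implicit content of the paper's reduction, so nothing is missing.
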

\begin{corollary}\label{cor:corollaryForest2} 
Let $H$ be a Hamiltonian cycle on a set of nodes $V$, and let $F=(V, U,E)$ be a bipartite forest, such that 
(i) each node in $U$ is of degree at most 3, 
(ii) each tree in $F$ contains an even number of nodes of $V$ \textbf{except of exactly one tree}. 
Then, the graph $G=(U \cup V, H \cup E)$ contains  a \textsc{Sec}-$V$-\textsc{HamCycle}.
\end{corollary}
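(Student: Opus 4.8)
The plan is to reduce to the machinery behind Corollary~\ref{cor:corollaryForest1} via the duplication technique used in the $|V|$-odd case of Lemma~\ref{lem:lemma2.1} (cf.\ Corollary~\ref{cor:corollary2.1}). First I would observe that, since every tree of $F$ contains an even number of nodes of $V$ except exactly one tree $T_0$, the cardinality $\lvert V\rvert$ is odd; I would also delete from $F$ every $U$-node of degree at most $1$, as this changes neither the nodes of $V$ nor the parities of the trees, so that each $U$-node has degree $2$ or $3$. Pick a node $v_i\in T_0\cap V$ and let $v_j$ be its $H$-successor. Duplicate $G$: take disjoint copies $V',U',H',F'$ of $V,U,H,F$, delete the edges $\{v_i,v_j\}$ and $\{v_i',v_j'\}$, and add $e_i=\{v_i,v_i'\}$ and $e_j=\{v_j,v_j'\}$. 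Then $H_d:=(H\cup H'\cup\{e_i,e_j\})\setminus\{\{v_i,v_j\},\{v_i',v_j'\}\}$ is a Hamiltonian cycle on $V\cup V'$, and in the duplicated forest $F\cup F'$ every tree has an even number of nodes of $V\cup V'$ except the two copies $T_0$ and $T_0'$ of $T_0$, each with an odd number.

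Next I would merge $T_0$ and $T_0'$ (possibly together with one even tree) into trees of even $V$-size using only edges that project back to edges of $G$. Add $\varepsilon_1=\{v_i,v_j'\}$, which, under identifying primed with unprimed nodes, projects to the deleted edge $\{v_i,v_j\}\in H$. If $v_j\in T_0$, then $\varepsilon_1$ already joins $T_0$ and $T_0'$ into a tree with $2\lvert T_0\cap V\rvert$ (even) $V\cup V'$-nodes. If $v_j$ lies in some (necessarily even) tree $T_m\neq T_0$, also add $\varepsilon_2=\{v_j',v_i'\}$, which likewise projects to $\{v_j,v_i\}$; then $\varepsilon_1,\varepsilon_2$ join $T_0$, $T_m'$ and $T_0'$ into a single tree with $\text{odd}+\text{even}+\text{odd}$ (even) $V\cup V'$-nodes, while $T_m$ stays untouched. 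In either case the result is still a forest, no $U$-node degree is increased (the new edges are incident only to nodes of $V\cup V'$), and every tree now has an even number of $V\cup V'$-nodes. Calling this forest $F_d$ and $G_d=(U\cup V\cup U'\cup V',\,H_d\cup F_d)$, I would then run the argument behind Corollary~\ref{cor:corollaryForest1} on $G_d$ — Claim~\ref{claim:claim2.1} to extract a subforest in which every node of $V\cup V'$ has odd degree and every node of $U\cup U'$ has degree $2$, then contract the degree-$2$ $U$-nodes and apply Thomason's lollipop argument~\cite{Thomason78}. This argument goes through even though $F_d$ is not bipartite, since the extra edges touch only $V\cup V'$ and the construction only manipulates $U$-nodes; it produces a Hamiltonian cycle $C_d$ on $V\cup V'$ in $G_d$ distinct from $H_d$.

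Finally I would fold $C_d$ back to $G$. The only edges of $G_d$ joining an unprimed node to a primed node are $e_i$, $e_j$ and $\varepsilon_1$ (the edge $\varepsilon_2$, if present, lies inside the primed copy), so the cut separating $U\cup V$ from $U'\cup V'$ has at most these three crossing edges; a Hamiltonian cycle crosses it an even, positive number of times, hence $C_d$ uses exactly two of them, and a short case check — using that $\varepsilon_1$ shares an endpoint with each of $e_i$ and $e_j$, so taking $\varepsilon_1$ with either one would strand the other side — forces these two to be $e_i$ and $e_j$. Deleting $e_i$ and $e_j$ splits $C_d$ into a Hamiltonian path of the unprimed copy from $v_i$ to $v_j$ through all of $V$ and some nodes of $U$, and a corresponding path in the primed copy; re-inserting the deleted edge $\{v_i,v_j\}\in E(G)$ closes the former into a cycle in $G$ on $V$ plus some nodes of $U$, an $H_G(V)$. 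Since $C_d\neq H_d$, at least one of these two paths differs from the corresponding arc of $H$ (resp.\ $H'$), so the cycle obtained from the appropriate copy differs from $H$, giving the required \textsc{Sec}-$V$-\textsc{HamCycle}. The main obstacle is precisely this merge step: one must repair the parity of the duplicated forest without leaving $G$ and without introducing extra crossing edges that would let $C_d$ evade $e_i$ and $e_j$; the two-edge gadget $\{v_i,v_j'\},\{v_j',v_i'\}$ is what makes this work uniformly, including the degenerate case in which $T_0$ is a single node of $V$.
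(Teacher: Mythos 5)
Your proof is correct and follows essentially the same route as the paper: duplicate $G$, rewire a few edges between the two copies so that the doubled forest has only even trees, apply Corollary~\ref{cor:corollaryForest1}, and fold the resulting cycle back using the cut/parity argument. The paper's only difference is the choice of gluing edges --- it makes $\{v_i,v'_j\}$ and $\{v_j,v'_i\}$ the new Hamiltonian-cycle edges and uses the single forest connector $\{v_i,v'_i\}$, which joins the two odd trees directly and thus avoids your case split on whether $v_j$ lies in $T_{odd}$.
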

\begin{proof}
Let $T_{odd} \in F$ be the tree that contains an odd number of nodes of $V$, and let $(v_i, v_j)$ be an edge  of $H$, such that  $v_i \in T_{odd}$.
Consider the duplication technique from the proof of Lemma~\ref{lem:lemma2.1}. Instead of connecting $v_i$ (resp., $v_j$) to its duplicated node $v'_i$ (resp., $v'_j$), we connect $v_i$ (resp., $v_j$) to $v'_j$ (resp., $v'_i$) and $v_i$ to $v'_i$.
Then, by Corollary~\ref{cor:corollaryForest1} we are done.
\end{proof}

%
%
Given a bipartite graph $(V,U,E)$, for a node $v \in V$ and a subset $W \subseteq U$, denote by $N_{v}(W)$ the set of neighbors of $v$ in $W$, 
i.e., $N_{v}(W) = \{u \in W  : \ \{u,v\} \in E \}$. 

\begin{lemma}\label{lem:lemma2.3}
Let $U'$ be the subset of $U$ containing all nodes of degree at least $4$. 
If there exist two consecutive nodes $v_i$, $v_j$ of $H$ such that $N_{v_i}(U') \cup N_{v_j}(U') = U'$, 
then the conjecture is true. 
\end{lemma}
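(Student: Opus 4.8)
Let me think about this carefully. We have a Hamiltonian cycle $H$ on $V$, a connected bipartite tree $G_b = (V, U, E)$ (we're allowed to assume $G_b$ is a tree, each $u \in U$ has degree $\geq 2$, and no two consecutive nodes of $H$ share a common neighbor in $U$). We let $U' \subseteq U$ be the high-degree nodes (degree $\geq 4$). The hypothesis: there exist consecutive $v_i, v_j$ on $H$ with $N_{v_i}(U') \cup N_{v_j}(U') = U'$ — i.e., every high-degree node of $U$ is adjacent to $v_i$ or to $v_j$.

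The strategy should be to reduce to the case already handled — Lemma~\ref{lem:lemma2.2} / Corollary~\ref{cor:corollaryForest1} / Corollary~\ref{cor:corollaryForest2}, where all nodes of $U$ have degree at most 3. The idea is to "split" each high-degree node $u \in U'$ into several degree-$\leq 3$ nodes while preserving the relevant structure. Since $u$ is adjacent to $v_i$ or $v_j$, and these two are consecutive on $H$, we have a lot of freedom: we can split $u$'s neighborhood into pieces, each of size $2$ or $3$, each piece containing a copy of the edge to $v_i$ (or $v_j$). Because $v_i$ and $v_j$ are consecutive, the "rerouting through $u$" that a second Hamiltonian cycle would use can be localized near the edge $\{v_i,v_j\}$ of $H$.

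Here's a concrete approach. Pick the edge $e = \{v_i, v_j\} \in H$. For each $u \in U'$, say $u$ is adjacent to $v_i$ (the case $u \sim v_j$ is symmetric), with neighbors $v_i = w_0, w_1, \dots, w_{d-1}$ in $V$ where $d \geq 4$. Replace $u$ by $\lceil (d-1)/2 \rceil$ new nodes in $U$, each adjacent to $v_i$ and to one or two of the $w_1, \dots, w_{d-1}$, so that each new node has degree $2$ or $3$ and together they cover all original incidences. This makes $v_i$'s degree grow, but all new $U$-nodes have degree $\leq 3$; the rest of $G_b$ is untouched. Call the resulting bipartite graph $G_b^{\star}=(V, U^{\star}, E^{\star})$ — every node of $U^{\star}$ has degree $\leq 3$. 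However, $G_b^{\star}$ need not be connected anymore (removing $u$ and re-attaching pieces only at $v_i$ can disconnect subtrees that hung off the other neighbors $w_1,\dots,w_{d-1}$). So instead of splitting into stars at $v_i$, I would split more carefully: process the high-degree nodes, and when splitting $u$ with neighbors $w_0=v_i,w_1,\ldots,w_{d-1}$, keep a path-like chain $u_1$–$u_2$–$\cdots$ where $u_1\sim\{v_i,w_1,w_2\}$, $u_2\sim\{v_i,w_2,w_3\}$ (sharing $w_2$ to preserve connectivity), etc. Actually a cleaner fix: split $u$ into a chain of degree-3 nodes $u_1, \dots, u_{d-2}$ with $u_t \sim \{v_i, w_t, w_{t+1}\}$ for $t=1,\dots,d-2$ — wait, $u_1\sim\{v_i,w_1,w_2\}$, this uses $v_i$ each time; and consecutive $u_t,u_{t+1}$ both touch $w_{t+1}$, so the image stays connected and spans the same vertex set of $V$. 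Each $u_t$ has degree $3$. Good — connectivity of $G_b^{\star}$ is preserved, and $|V|$ is unchanged.

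Now apply Lemma~\ref{lem:lemma2.2} to $G_b^{\star}$: the graph $G^{\star}=(V\cup U^{\star}, H\cup E^{\star})$ contains a \textsc{Sec}-$V$-\textsc{HamCycle} $H'$. The final step — and this is the main obstacle — is to translate $H'$ back into a \textsc{Sec}-$V$-\textsc{HamCycle} in the original $G=(V\cup U, H\cup E)$. $H'$ uses some of the split nodes $u_t$; at each used $u_t$ it enters via one neighbor and leaves via another, and these neighbors all lie in $\{v_i\}\cup\{w_1,\ldots,w_{d-1}\}\subseteq N(u)$ in the original graph. The danger is that $H'$ uses two different split copies $u_s$ and $u_t$ of the same original $u$ simultaneously (then we cannot merge both into a single visit of $u$), or uses $v_i$ "too many times". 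Here is where the hypothesis that $v_i,v_j$ are \emph{consecutive} does the real work: I claim that, using Corollary~\ref{cor:corollary2.1} (existence of a \textsc{Sec}-$V$-\textsc{HamCycle} through a prescribed edge of $H$) applied to $G_b^{\star}$ with the prescribed edge $e=\{v_i,v_j\}$, one can force $H'$ to behave well locally — $H'$ contains the $H$-edge $\{v_i,v_j\}$, hence near $v_i$ it looks like the original cycle except for at most one detour, so at most one split copy of each $u\in U'$ is used. Then replacing each used chain-copy $u_t\sim\{v_i,w_t,w_{t+1}\}$ of $u$ by $u$ itself (re-routing through $u$, which is adjacent to all of $v_i,w_t,w_{t+1}$) yields a valid simple cycle in $G$ that differs from $H$. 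The bookkeeping to guarantee "at most one copy used" and that the resulting cycle is simple and spans $V$ (plus a subset of $U$) is the delicate part; it will likely require choosing the split so that the only $H$-edge any copy $u_t$ could "shortcut" is $\{v_i,v_j\}$ itself, perhaps by also giving each $u_t$ its edge to $v_j$ rather than to two $w$'s, i.e. $u_t\sim\{v_i,v_j,w_t\}$ — but that destroys the "no two consecutive $H$-nodes share a $U$-neighbor" reduction and so must be done after that reduction is no longer needed. I expect the write-up to split into the $|V|$-even and $|V|$-odd subcases (mirroring Lemma~\ref{lem:lemma2.1}), use Corollary~\ref{cor:corollaryForest2} to absorb the parity defect introduced by the splitting, and spend most of its length verifying that the contraction of split copies back to original $U'$-nodes produces a genuine second $\Gamma$-Hamiltonian cycle.
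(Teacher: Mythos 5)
Your high-level plan is the same as the paper's: split each node of $U'$ into low-degree copies anchored at $v_i$ or $v_j$, invoke the degree-at-most-$3$ case through the prescribed edge $\{v_i,v_j\}$ (Corollary~\ref{cor:corollary2.1}), and contract the copies back. But the specific split you commit to breaks the step you yourself identify as the crux. First, your reason for abandoning the star split is mistaken: if $u\in U'$ with neighbors $v_i,w_1,\dots,w_{d-1}$ is replaced by degree-$2$ nodes $u_{w_t}$ adjacent to $v_i$ and $w_t$, the graph stays connected, because $u$ was adjacent to $v_i$ and every subtree that hung off $u$ through some $w_t$ is now reattached to $v_i$ via $u_{w_t}$; this star split (with pure degree-$2$ copies) is exactly what the paper does. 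Second, with the chain split $u_t\sim\{v_i,w_t,w_{t+1}\}$ that you adopt instead, the claim ``$H'$ contains $\{v_i,v_j\}$, hence at most one split copy of each $u$ is used'' is false: a degree-$3$ copy can be traversed through its two $w$-neighbors, e.g.\ $w_t$--$u_t$--$w_{t+1}$, without touching $v_i$ at all, so the second cycle may use $u_1$ via $w_1$--$u_1$--$w_2$ and $u_3$ via $w_3$--$u_3$--$w_4$ simultaneously, and then there is no way to merge both detours into a single visit of $u$ (you cannot simply delete one copy, since $\{w_t,w_{t+1}\}$ need not be an edge). Forcing the edge $\{v_i,v_j\}$ only controls the cycle edges incident to $v_i$ and $v_j$; it controls the copies only when every copy is forced to pass through $v_i$ or $v_j$, i.e.\ precisely when the copies have degree $2$ with one endpoint at $v_i$ (resp.\ $v_j$).

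Your proposed fallback $u_t\sim\{v_i,v_j,w_t\}$ does not repair this either: a second cycle in the modified graph could replace the edge $\{v_i,v_j\}$ by the detour $v_i$--$u_t$--$v_j$, which has no preimage in $G$ because the original $u$ is adjacent to $v_i$ but not necessarily to $v_j$. The correct argument is the degree-$2$ star split anchored at the $H$-neighbor guaranteed by the hypothesis (recall $N_{v_i}(U')\cap N_{v_j}(U')=\emptyset$): then every used copy contributes an edge at $v_i$ or at $v_j$, and since $\{v_i,v_j\}$ already occupies one of the two cycle edges at each of these nodes, at most one copy anchored at $v_i$ and at most one anchored at $v_j$ appear in $H'$; each such copy $u_w$ is traversed as $v_i$--$u_w$--$w$ and is replaced by $v_i$--$u$--$w$ in $G$, which is legitimate since $u$ is adjacent to both. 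Your parity discussion ($|V|$ even versus odd, Corollary~\ref{cor:corollaryForest2}) is not needed here, since Corollary~\ref{cor:corollary2.1} already covers both parities. As written, the proposal has a genuine gap at the contraction step.
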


\begin{proof}
Consider the graph $G'_b$ that is obtained from $G_b$ by the following modification. 
Recall that $N_{v_i}(U')  \cap N_{v_j}(U') =\emptyset$. For each node $u' \in N_{v_i}(U')$ 
(resp., $u' \in N_{v_j}(U')$), and for each $v \in V\setminus \{v_i\}$ (resp., $v \in V\setminus \{v_j\}$) that is adjacent to $u'$, we add a new node $u_v$ to $U$ and update the set $E$ to be $E \setminus \{\{u',v\}\} \cup \{\{v_i,u_v\},\{u_v,v\}\}$ (resp., $E\setminus \{\{u',v\}\} \cup \{\{v_j,u_v\},\{u_v,v\}\}$). Then, we remove the edges $\{v_i,u'\}$ (resp., $\{v_j,u'\}$) from $E$, and the node $u'$ from $U$. The obtained graph $G'_b$ is a connected bipartite graph and each node in $U$ is of degree at most $3$; therefore, by Corollary~\ref{cor:corollary2.1}, the graph obtained by adding the edge set $H$ to $G'_b$ contains a \textsc{Sec}-$V$-\textsc{HamCycle} that contains the edge $\{v_i,v_j\}$. 
Thus, $G$ contains \textsc{Sec}-$V$-\textsc{HamCycle}.
\end{proof}
\begin{corollary}\label{cor:cor2.10}
Let $v_i$ and $v_j$ be two nodes of $H$ such that  $N_{v_i}(U') \cup N_{v_j}(U') = U'$, 
If by removing the nodes on one of the two paths between $v_i$ and $v_j$ on $H$ (and their incident edges) from $G_b$, the graph $G_b$ remains connected, then the conjecture is true.
\end{corollary}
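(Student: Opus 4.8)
The plan is to reduce Corollary~\ref{cor:cor2.10} to Lemma~\ref{lem:lemma2.3} by modifying the graph so that the two high-degree-collecting nodes $v_i$ and $v_j$ become \emph{consecutive} on the Hamiltonian cycle, while preserving both the hypothesis $N_{v_i}(U') \cup N_{v_j}(U') = U'$ and the connectivity of the bipartite part. First I would pick the path $Q$ between $v_i$ and $v_j$ on $H$ whose removal leaves $G_b$ connected, and let $Q = v_i, w_1, w_2, \ldots, w_t, v_j$ be its internal vertices $w_1, \ldots, w_t \in V$. By the standing assumptions we already know $N_{v_i}(U') \cap N_{v_j}(U') = \emptyset$ (two consecutive nodes of $H$ never share a neighbour in $U$; here $v_i,v_j$ need not be consecutive, but the disjointness used in Lemma~\ref{lem:lemma2.3} is exactly what the hypothesis of that lemma assumes, so I would first note we may assume it, handling the shared-neighbour case as trivially true just as in the remark after Conjecture~\ref{con:conjecture2.1}).

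The key step is a \emph{contraction/rerouting} of the path $Q$. Since removing $w_1, \ldots, w_t$ from $G_b$ keeps it connected, these internal nodes carry no ``bridging'' role in the bipartite structure: every tree of the forest $G_b$ (recall we may assume $G_b$ is a tree) that touches a $w_k$ is still attached to the rest via $v_i$, $v_j$, or other vertices. I would form a new instance $\widetilde{H}, \widetilde{G_b}$ on the vertex set $V \setminus \{w_1,\ldots,w_t\}$ (together with $U$, possibly augmented) by: (a) replacing the path $Q$ in $H$ with the single edge $\{v_i, v_j\}$, so that in $\widetilde{H}$ the nodes $v_i$ and $v_j$ are consecutive; and (b) re-attaching, in $\widetilde{G_b}$, each edge of $G_b$ incident to some $w_k$ to one of $v_i, v_j$ (or splitting it through a fresh degree-$2$ node, exactly as in the proof of Lemma~\ref{lem:lemma2.3}, to avoid raising anyone's degree or creating shared consecutive neighbours). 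The re-attachment is possible precisely because $G_b$ minus the $w_k$'s is connected, so each dangling subtree can be reached from $v_i$ or $v_j$; and it keeps $\widetilde{G_b}$ connected and keeps the set $U'$ of degree-$\ge 4$ nodes satisfied by $\{v_i,v_j\}$, since the only neighbours of $U'$ we needed were $v_i$ and $v_j$, which survive.

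Having done this, $\widetilde{H}$ and $\widetilde{G_b}$ satisfy the hypotheses of Lemma~\ref{lem:lemma2.3} with the \emph{consecutive} pair $v_i, v_j$, so $\widetilde{G} = (V' \cup U', \widetilde{H} \cup \widetilde{E})$ contains a \textsc{Sec}-$V'$-\textsc{HamCycle}, and moreover (by the proof of that lemma, which routes through $\{v_i,v_j\}$) we get one containing the edge $\{v_i,v_j\}$. The final step is to lift this cycle back: replace the edge $\{v_i, v_j\}$ in both Hamiltonian cycles of $\widetilde{G}$ by the original path $Q$ through $w_1,\ldots,w_t$, and undo the edge-splittings, obtaining two Hamiltonian cycles on $\Gamma$ (the original $V$) in $G$ that differ in at least one edge — a \textsc{Sec}-$V$-\textsc{HamCycle}. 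The main obstacle I anticipate is the bookkeeping in step (b): guaranteeing that re-attaching the subtrees hanging off the removed internal path does not inflate any $U$-node's degree above what Lemma~\ref{lem:lemma2.3} allows, does not create two consecutive $\widetilde{H}$-nodes with a common $U$-neighbour, and does not disturb the disjointness $N_{v_i}(U') \cap N_{v_j}(U') = \emptyset$; all of this should be handled by the same ``split an edge $\{u',v\}$ into $\{v_i,u_v\},\{u_v,v\}$ through a fresh degree-$2$ node $u_v$'' device already used in Lemma~\ref{lem:lemma2.3}, but verifying it simultaneously for every re-routed edge is where the care is needed.
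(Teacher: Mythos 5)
Your overall strategy -- contract the $H$-path $Q$ between $v_i$ and $v_j$ into the single edge $\{v_i,v_j\}$ so that the pair becomes consecutive, invoke Lemma~\ref{lem:lemma2.3} (whose proof yields a \textsc{Sec}-$V$-\textsc{HamCycle} containing $\{v_i,v_j\}$), and then re-expand that edge back into $Q$ -- is exactly the intended route. But your step (b) introduces a genuine flaw. The connectivity hypothesis of the corollary is not there to make re-attachment of the edges at $w_1,\dots,w_t$ ``possible''; it is there so that you can \emph{delete} those edges (and the $w_k$'s, together with any $U$-nodes that thereby lose relevance) outright and still have a connected bipartite instance. By instead re-routing each edge $\{u,w_k\}$ to $v_i$ or $v_j$ (even through a fresh degree-$2$ node), you create edges that the second Hamiltonian cycle of the reduced instance may well use, and such edges cannot be lifted: after you replace $\{v_i,v_j\}$ by the path $Q$, every internal node $w_k$ already has both of its cycle edges ($H$-edges of $Q$), so ``undoing the splitting'' either requires the nonexistent edge $\{u,v_i\}$ in $G$ or forces a third edge at $w_k$ (a repeated visit), destroying simplicity. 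Nothing in your construction prevents the reduced cycle from using these re-attached edges, so the final lifting step can fail. The fix is to drop step (b): remove $w_1,\dots,w_t$ and their incident $E$-edges (legitimate precisely by the hypothesis), note that the set $U'$ and the adjacencies of $v_i,v_j$ to it are unchanged (degrees can only drop), apply Lemma~\ref{lem:lemma2.3} to the contracted instance, and lift only the contraction $\{v_i,v_j\}\mapsto Q$, for which containment of $\{v_i,v_j\}$ in the cycle is exactly what is needed.

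A secondary inaccuracy: you dismiss the case $N_{v_i}(U)\cap N_{v_j}(U)\neq\emptyset$ as ``trivially true just as in the remark after Conjecture~\ref{con:conjecture2.1}.'' That remark applies only to \emph{consecutive} nodes of $H$; here $v_i,v_j$ are not consecutive in $G$, and even in the contracted instance the trivial detour through a shared neighbour replaces the edge $\{v_i,v_j\}$ and hence does not contain it, so it does not lift back through $Q$. Fortunately this case is not actually problematic: a shared neighbour of degree at most $3$ is irrelevant to Corollary~\ref{cor:corollary2.1}, and a shared node of $U'$ can simply be assigned to one of $v_i,v_j$ in the rerouting of Lemma~\ref{lem:lemma2.3}; but it must be argued this way rather than waved off as the trivial case.
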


\old{
\begin{observation}\label{obs:configur}
Let $e_1,e_2,\ldots,e_n$ be the edges of the Hamiltonian cycle $H$, where the endvertices of the edge $e_i$ are $v_i$ and $v_{i+1}$, $1 \le i \le n-1$, and the endvertices of the edge $e_n$ are $v_n$ and $v_1$. 
If one of the following three configuration exists then $G$ contains a \textsc{Sec}-$V$-\textsc{HamCycle}.
\begin{itemize}
	\item there exist two nodes $u_1,u_2\in U$ and two indices $i<j$, such that  
	$\{v_i,u_1\}$, $\{v_j,u_1\}$, $\{v_{i+1},u_2\}$, $\{v_{j+1}, u_2 \}$ $\in E$ 
	(see Figure~\ref{fig:stru4SecH}, left).
	\item there exist three nodes $u_1,u_2,u_3 \in U$ and three indices $i<j<k$, such that  
	$\{v_i,u_1\}$, $\{v_j,u_1\}$, $\{v_{i+1},u_2\}$, $\{v_{k},u_2\}$, $\{v_{j+1},u_3\}$, $\{v_{k+1},u_3\}$ $\in E$, 
	(see Figure~\ref{fig:stru4SecH}, middle).
	\item there exist three nodes $u_1,u_2,u_3 \in U$ and three indices $i<j<k$, such that 
	$\{v_i,u_1\}$, $\{v_{j+1},u_1\}$, $\{v_{i+1},u_2\}$, $\{v_{k},u_2\}$, $\{v_{j},u_3\}$, $\{v_{k+1},u_3\}$ $\in E$ 
	(see Figure~\ref{fig:stru4SecH}, right).
\end{itemize}
\end{observation}
\begin{figure}[h]
   \centering
       \includegraphics[width=0.95\textwidth]{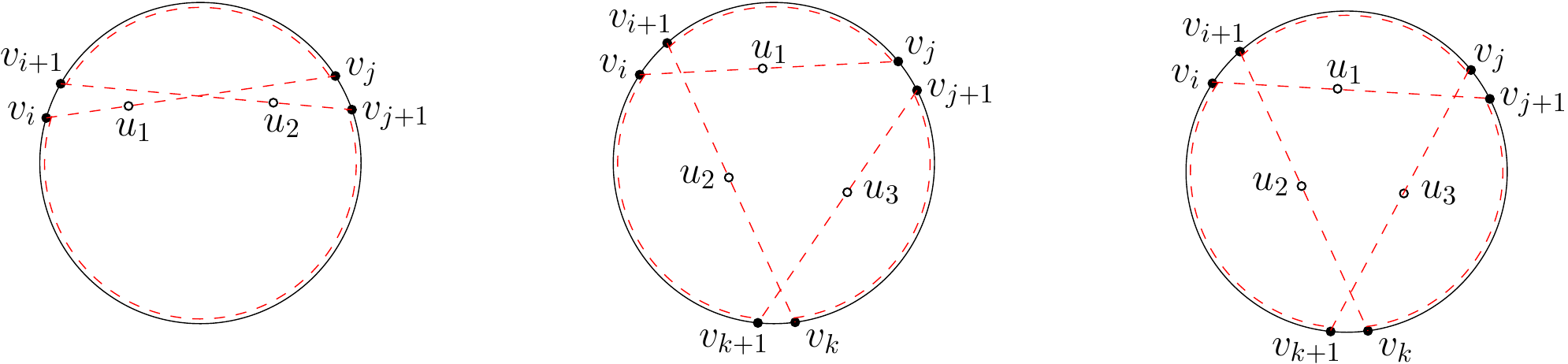}
   \caption{An illustration of configurations that admit a second Hamiltonian cycle.}
   \label{fig:stru4SecH}
\end{figure}
\noindent
\emph{Remark:} Clearly there are many configurations that admit \textsc{Sec}-$V$-\textsc{HamCycle}, however, 
for proving Lemma~\ref{lem:lemma2.5}, it is sufficient to concentrate only on the aforementioned three configurations.
}

\begin{claim}\label{cl:AcycleThatAdimtCycle}
Let $v_i \in V$ be a node such that $|N_{v_i}(U)|= 1$ in $G=(V \cup U, E \cup H)$ (i.e., $v_i$ is a leaf in the tree $(V \cup U, E)$), and  
let $v_{i+1}$ and $v_{i-1}$ be its two neighbors in $H$ (i.e., $\{v_i, v_{i+1} \} , \{v_{i-1}, v_i \} \in H$).
Let $G^*=(V^* \cup U^*, E^* \cup H^*)$ be a graph obtained from $G$ by the following modifications. Assume $N_{v_i} = \{ u \}$, see Figure~\ref{fig:transformation}. 
\begin{eqnarray*}
  V^*  \leftarrow V  & \cup & \{ v_l,v_r \} \\
	U^* \leftarrow U   & \cup  & \{ u' \}  \cup \{ u_j :  \forall v_j \in ( N_u(V)\setminus \{v_i\} ) \}   \setminus \{ u \} \\
	E^* \leftarrow E   & \cup & \{ \{v_l, u' \}, \{v_r, u' \}  \} \\
                     & \cup & \{ \{v_i, u_j\}, \{u_j, v_j \} :  \forall v_j \in N_u(V)   \}  \\
	                   & \setminus & \{ \{u, v_j\} : \forall v_j \in N_u(V) \}  \\
  H^* \leftarrow H & \cup & \{ \{v_{i-1}, v_l \}, \{v_l, v_i \}, \{v_i, v_r \},  \{v_r, v_{i+1} \} \} \\  
                    & \setminus & \{ \{v_{i-1}, v_i \}, \{v_i, v_{i+1} \} \}
\end{eqnarray*}
Then,  \textsc{Sec}-$V^*$-\textsc{HamCycle} in $G^*$ admits a \textsc{Sec}-$V$-\textsc{HamCycle} in $G$.
\end{claim}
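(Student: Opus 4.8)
The statement is a reduction, so the plan is to start from an arbitrary \textsc{Sec}-$V^*$-\textsc{HamCycle} $C^*$ in $G^*$ --- a simple cycle through every vertex of $V^*$ (and possibly through some vertices of $U^*$) that differs from $H^*$ in at least one edge --- and to perform a local surgery, confined to the gadget $\{v_l,v_r,u'\}$ together with $v_i$ and the newly created nodes $u_j$, that turns $C^*$ into a simple cycle $C$ of $G$ visiting all of $V$ and differing from $H$; that is exactly a \textsc{Sec}-$V$-\textsc{HamCycle} in $G$. The surgery is driven by a few degree facts in $G^*$: $u'$ and each $u_j$ have degree $2$ (so $C^*$ traverses $u'$ iff it uses both $\{v_l,u'\},\{v_r,u'\}$, and traverses $u_j$ iff it uses $\{v_i,u_j\}$), while $v_l,v_r$ have degree $3$ and $v_i$ has degree $|N_u(V)|+1$. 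Throughout I use the standing reductions that $G_b$ is a tree, every node of $U$ has degree at least $2$, and no two $H$-consecutive vertices share a $U$-neighbour; in particular $v_{i-1},v_{i+1}\notin N_u(V)$, and I may assume $|V|\ge 3$.

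\textbf{Case 1: $C^*$ does not use $u'$.} Then at $v_l$ the cycle is forced to use $\{v_{i-1},v_l\}$ and $\{v_l,v_i\}$, at $v_r$ to use $\{v_i,v_r\}$ and $\{v_r,v_{i+1}\}$, and consequently at $v_i$ to use exactly $\{v_l,v_i\},\{v_i,v_r\}$ (so no $u_j$ is traversed); thus $C^*$ coincides with $H^*$ on the whole gadget and uses the path $v_{i-1}\,v_l\,v_i\,v_r\,v_{i+1}$. I set $C$ to be $C^*$ with this path replaced by the two $H$-edges $\{v_{i-1},v_i\},\{v_i,v_{i+1}\}$. Every other edge of $C^*$ joins vertices of $(V\setminus\{v_i\})\cup(U\setminus\{u\})$ and hence already lies in $E\cup H$, so $C$ is a legal cycle of $G$ through all of $V$. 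Since $C^*$ and $H^*$ agree on every edge incident to the gadget or to $v_i$ yet differ somewhere, they differ at some non-gadget edge $e$; as the non-gadget $H$-edges all belong to $H^*$, such an $e$ is a tree edge, and it lies in exactly one of $C,H$, giving $C\ne H$.

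\textbf{Case 2: $C^*$ uses $u'$.} Then $C^*$ contains the subpath $v_l\,u'\,v_r$, $v_l$ uses one of $\{v_{i-1},v_l\},\{v_l,v_i\}$, and $v_r$ uses one of $\{v_r,v_{i+1}\},\{v_i,v_r\}$. If both used their edge to $v_i$, then $C^*$ would be the $4$-cycle $v_i\,v_l\,u'\,v_r$, impossible since $C^*$ spans $V^*$. If exactly one does, say $v_l$ (the case of $v_r$ is symmetric), then $v_i$'s second edge must be some $\{v_i,u_j\}$ (because $v_r$ is already saturated), so $C^*$ contains $v_j\,u_j\,v_i\,v_l\,u'\,v_r\,v_{i+1}$; I contract the gadget to the $H$-edge $\{v_i,v_{i+1}\}$ and relabel the unique used $u_j$ as $u$, obtaining a simple cycle $C$ of $G$ through all of $V$ that uses the non-$H$ edge $\{u,v_i\}$, so $C\ne H$. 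Finally, if neither uses its $v_i$-edge, $C^*$ contains the path $v_{i-1}\,v_l\,u'\,v_r\,v_{i+1}$ and meets $v_i$ through two edges $\{v_i,u_{j_1}\},\{v_i,u_{j_2}\}$, hence also contains the vertex-disjoint segment $v_{j_1}\,u_{j_1}\,v_i\,u_{j_2}\,v_{j_2}$ (disjoint since $v_{i\pm1}\notin N_u(V)$); I then perform a double surgery on $C^*$, replacing the first path by $v_{i-1}\,v_i\,v_{i+1}$ (which inserts $v_i$) and the second segment by $v_{j_1}\,u\,v_{j_2}$ (which deletes $v_i$ and inserts $u$). Because the two altered segments are vertex-disjoint, the result $C$ is a single simple cycle, it visits every vertex of $V$ exactly once, and it uses the non-$H$ edge $\{v_{j_1},u\}$, so $C\ne H$.

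\textbf{Main obstacle.} The delicate case is the last one: the naive contraction of $v_{i-1}\,v_l\,u'\,v_r\,v_{i+1}$ would require the edge $\{v_{i-1},v_{i+1}\}$, which is in general absent from $G$, so the path must be rerouted through $v_i$ and compensated for by short-circuiting $v_i$'s two $U^*$-edges into a single pass through $u$. The care needed is in verifying that the two surgeries together leave $C$ a single simple cycle that hits each vertex of $V$ exactly once, and that every edge produced lies in $E\cup H$ --- which holds because every edge of $E^*$ not incident to $u'$ or to some $u_j$ already belongs to $E$, and $H^*$ differs from $H$ only in the four gadget edges. The degenerate configurations (for instance $|V|\le 2$, or $v_{i-1}=v_{i+1}$) are excluded by the standing assumptions.
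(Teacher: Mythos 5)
Your proposal is correct and follows essentially the same route as the paper's own proof: a case analysis on how the second cycle traverses the gadget (equivalently, on which of $\{v_l,v_i\}$, $\{v_i,v_r\}$ it uses), followed by the same local surgeries that replace the gadget paths by $(v_{i-1},v_i,u_{})$-type or $(v_{i-1},v_i,v_{i+1})$ plus $(v_{j_1},u,v_{j_2})$ paths in $G$. Your treatment is in fact slightly more complete than the paper's, since you verify disjointness of the two rerouted segments and spell out why the case in which $C^*$ agrees with $H^*$ on the whole gadget still yields a cycle differing from $H$, which the paper only asserts.
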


\begin{figure}[h]
   \centering
       \includegraphics[width=0.75\textwidth]{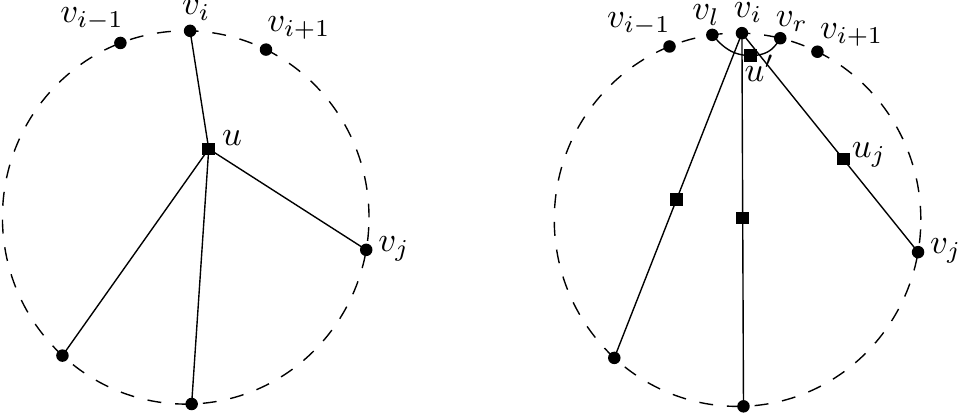}
   \caption{An illustration of the modified graph $G^*$ (on the right) from graph $G$ (on the left), where the edges of $H$ and $H^*$ are dashed, and the edges of $E$ and $E^*$ are solid.}
   \label{fig:transformation}
\end{figure}
\begin{proof}
Let $C^*$ be  \textsc{Sec}-$V^*$-\textsc{HamCycle} in $G^*$, then if $\{v_l, v_i \}, \{v_i, v_r \} \in C^*$, then $C^*$ admits a  \textsc{Sec}-$V$-\textsc{HamCycle} in $G$. Therefore, assume w.l.o.g., that  $\{v_l, v_i \} \notin C^*$;  thus, 
$\{v_l, u' \}, \{u', v_r \} \in C^*$ . We distinguish between two cases:
\begin{itemize}
	\item \textbf{ $\{v_i, v_r \} \in C^*$:}
         The path $P^* = (v_{i-1},v_l, v_r , v_i , u_j)$ is a path in $C^*$. 
         Thus, by replacing the path $P^*$ in $C^*$ with the path $(v_{i-1},  v_i , u)$ in $G$, we have \textsc{Sec}-$V$-\textsc{HamCycle} in $G$.
  \item \textbf{$\{v_i, v_r \} \notin C^*$:}
          The cycle $C^*$ contains two paths $P_1= (v_{i-1},v_l, v_r , v_{i+1} )$ and $P_2=(v_j, u_j, v_i, u'_j, v'_j)$.
          Thus, by replacing the paths $P_1$ and $P_2$ in $C^*$ with the paths $(v_{i-1},v_i, v_{i+1} )$ and 
              $(v_j, u,  v'_j)$ in $G$, respectively, we have \textsc{Sec}-$V$-\textsc{HamCycle} in $G$.
\end{itemize}

\end{proof}

In the next two lemmas we show that the conjecture holds for bounded values of $|V|$.
First, we present a simple proof showing that the conjecture holds for $|V| \le 15$, 
then, we provide a different proof that extends the bound to 23.

\begin{lemma} \label{lem:lemma2.5}
If $|V|\le 15$, then the conjecture is true.
\end{lemma}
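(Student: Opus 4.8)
The plan is to reduce the statement, via the structural normalizations already established, to an application of the degree-$\le 3$ machinery (Lemma~\ref{lem:lemma2.2}, Corollary~\ref{cor:corollary2.1}) together with Lemma~\ref{lem:lemma2.3}, using the smallness of $|V|$ to control the nodes of $U$ of high degree. Recall the assumptions we may carry over from the discussion after Conjecture~\ref{con:conjecture2.1}: $G_b$ is a tree; every node of $U$ has degree at least $2$; and no two consecutive nodes of $H$ have a common neighbour in $U$. In a tree on vertex set $V\cup U$ with $|V|=n\le 15$, the leaves all lie in $U$ (a leaf in $V$ would, after contracting its unique $U$-neighbour, be handled by Claim~\ref{cl:AcycleThatAdimtCycle}, which strictly simplifies the instance — so I would first dispose of that case by induction on $|V|+|U|$). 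Thus every $v\in V$ has degree at least $2$ in the tree, forcing $|E_{G_b}| = |V|+|U|-1 \ge 2|V|$, i.e. $|U|\ge |V|+1 = n+1$; combined with $\sum_{u\in U}\deg(u) = |E_{G_b}| = n+|U|-1$, the average degree in $U$ is $1 + (n-1)/|U| < 2$, which is impossible since every $u$ has degree $\ge 2$. Hence in fact $V$ must contain leaves, and Claim~\ref{cl:AcycleThatAdimtCycle} plus induction drives us to a base case where we may additionally assume there is essentially one obstructing high-degree structure left.

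After this reduction I would split on $U'$, the set of nodes of $U$ of degree $\ge 4$. If $U' = \emptyset$ we are done immediately by Lemma~\ref{lem:lemma2.2}. Otherwise, since $G_b$ is now a tree with all $V$-nodes of degree $\ge 2$ and all $U$-nodes of degree $\ge 2$, a counting argument like the one above bounds $|U'|$: each degree-$\ge 4$ node of $U$ ``consumes'' at least two extra edges, and with only $n\le 15$ nodes in $V$ the tree simply cannot support more than a couple of them — concretely $\sum_{u\in U'}(\deg(u)-2) \le \sum_{v\in V}(\text{surplus}) $ is small, so $|U'|\le 2$ or $3$. Now I would invoke Lemma~\ref{lem:lemma2.3}: it suffices to exhibit two consecutive nodes $v_i,v_{i+1}$ of $H$ covering all of $U'$ between their neighbourhoods. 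When $|U'|=1$, say $U'=\{u\}$, any edge of $H$ incident to a neighbour of $u$ works; since $u$ has $\ge 4$ neighbours among $n\le 15$ cyclically arranged vertices, two of those neighbours are consecutive on $H$ unless they are spread out — but then I instead use Corollary~\ref{cor:cor2.10}, removing one of the two $H$-arcs between a well-chosen pair $v_i,v_j\in N_u(V)$ so that the tree stays connected (it does, because deleting a path of $H$ never disconnects $G_b$, which contains no $H$-edges). When $|U'|=2$, the same idea applies to the (at most) $8$ relevant $V$-nodes; with $n\le 15$ one checks that either some pair of them is $H$-consecutive, or Corollary~\ref{cor:cor2.10} applies, or one of the deleted-path instances has all remaining $U$-nodes of degree $\le 3$ and Corollary~\ref{cor:corollaryForest1}/\ref{cor:corollaryForest2} finishes it.

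The main obstacle I anticipate is the middle regime: bounding $|U'|$ sharply enough and then ruling out the ``evasive'' configurations where the $\ge 4$ high-degree neighbourhoods are arranged on $H$ so that no two are consecutive and no legal path-deletion in Corollary~\ref{cor:cor2.10} applies simultaneously for both members of $U'$. Handling this is where the hypothesis $n\le 15$ is doing real work — it should be a finite case analysis, organized by the cyclic gap pattern of $N_u(V)$ on $H$, and I expect each residual case to collapse to one of Corollary~\ref{cor:corollary2.1}, Corollary~\ref{cor:corollaryForest2}, or Lemma~\ref{lem:lemma2.3}. The $|V|$ odd versus even distinction is absorbed by the duplication trick inside those corollaries, so it does not add a genuinely separate branch. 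I would present the argument as: (1) eliminate $V$-leaves by Claim~\ref{cl:AcycleThatAdimtCycle} and induction; (2) count to bound $|U'|\le 3$; (3) dispatch $|U'|=0$ by Lemma~\ref{lem:lemma2.2}; (4) finite case analysis for $|U'|\in\{1,2,3\}$ via Lemmas~\ref{lem:lemma2.3} and Corollaries~\ref{cor:cor2.10},~\ref{cor:corollaryForest2}.
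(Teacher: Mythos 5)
Your overall skeleton does match the paper's: bound $|U'|$ (the degree-$\ge 4$ part of $U$) by an edge count in the tree, dispatch $|U'|=0$ by Lemma~\ref{lem:lemma2.2}, and use $|V|\le 15$ plus a pigeonhole to produce two $H$-consecutive nodes covering $U'$ so that Lemma~\ref{lem:lemma2.3} applies; your $|U'|=2$ argument is essentially the paper's. But there are genuine gaps in the steps that are supposed to do the work. First, the preliminary reduction ``eliminate $V$-leaves by Claim~\ref{cl:AcycleThatAdimtCycle} and induction on $|V|+|U|$'' is unsound: the transformation in that claim does not delete the leaf or shrink the instance --- it adds the two nodes $v_l,v_r$ to $V$ and replaces the single neighbour $u$ by $|N_u(V)|$ degree-$2$ nodes, so $|V|+|U|$ strictly grows and the hypothesis $|V|\le 15$ is destroyed after one application; moreover, since every node of $U$ has degree $\ge 2$, every leaf of the tree lies in $V$ (your own count proves this), so a ``base case with no $V$-leaves'' does not exist. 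In the paper the claim is invoked only a bounded number of times, precisely in the cases $|U'|=3,4$, to transform away a $U'$-node adjacent to a $V$-leaf, after which Lemma~\ref{lem:lemma2.3} (which needs no bound on $|V|$) closes the argument.

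Second, the quantitative steps are off exactly where the difficulty lies. The correct count is $|E|=|V|+|U|-1\le 14+|U|$ together with $|E|\ge 4|U'|+2(|U|-|U'|)$, giving $2|U'|+|U|\le 14$ and hence $|U'|\le 4$, not $|U'|\le 3$; the case $|U'|=4$ must be treated. For $|U'|\ge 3$ a consecutive pair covering all of $U'$ is not forced by $|V|\le 15$ alone, and this is precisely the ``middle regime'' you leave as an anticipated finite case analysis; your fallback tools do not obviously apply there. In particular, your justification for the hypothesis of Corollary~\ref{cor:cor2.10} (``deleting a path of $H$ never disconnects $G_b$, which contains no $H$-edges'') is incorrect: that corollary removes the \emph{nodes} of the path and their incident edges, and deleting $V$-nodes can certainly disconnect the tree $G_b$. (Also, $|U'|=1$ needs none of this: any neighbour $v_i$ of the unique $u\in U'$ already satisfies $N_{v_i}(U')=U'$, so Lemma~\ref{lem:lemma2.3} applies with either $H$-neighbour of $v_i$.) The paper resolves $|U'|=3$ and $|U'|=4$ by a leaf count (the tree has at least $2|U'|+2$ leaves, all in $V$) and a pigeonhole yielding one, respectively two, $V$-leaves attached to distinct $U'$-nodes together with a consecutive pair covering the remaining $U'$-nodes, and then combines Claim~\ref{cl:AcycleThatAdimtCycle} with Lemma~\ref{lem:lemma2.3}; some argument of this kind is missing from your proposal, so as written it does not yet prove the lemma.
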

\begin{proof}
Let $U'$ be the set of nodes in $U$ of degree at least $4$. Recall that no two consecutive nodes $v_i, v_{i+1}$ in $H$ share a common adjacent node of $U$ in $G_b$ (i.e., $N_{v_i}(U)  \cap N_{v_{i+1}}(U) =\emptyset$). 

If there exists a node $v_i \in V$ such that  $N_{v_i}(U') = U'$, then any adjacent node of $v_i$ in $H$, w.l.o.g. $v_{i+1}$, satisfies  $N_{v_i}(U') \cup N_{v_{i+1}}(U') = U'$, and, by Lemma~\ref{lem:lemma2.3}, we are done.   
Thus, we may assume that no such a node exists, and hence, $|U'| > 1$.  

Recall that we assume that $(V \cup U, E)$ is a tree, thus $|E|= |V|+|U| - 1 = 14 + |U|$.
Moreover, $|E| \geq 4|U'| + 2|U \setminus U'| = 2|U'| + 2|U|$. Hence,   
$2|U'| + 2|U|  \le   14 + |U|$, and we have
\begin{align}\label{formula:numberOfEdgesInU}
2|U'| + |U|  \le  14.
\end{align}
This yields that $|U'| < 5$

We distinguish between the remaining 3 cases of $U'$ cardinality.
\begin{itemize}
	\item  \textbf{$|U'|=2$:} Since $|V| \leq 15$, by the pigeonhole principle, there are two consecutive nodes $v_i,v_{i+1} \in V$ such that $N_{v_i}(U') \cup N_{v_{i+1}}(U') = U'$, and, by Lemma~\ref{lem:lemma2.3}, we are done. 
\item \textbf{$|U'|=3$:} By~(\ref{formula:numberOfEdgesInU}), we have $|U \setminus U'| \le 5$. 
Moreover, the tree $(V \cup U, E)$ has at least 8 leaves. 
Thus, by the pigeonhole principle, there exists a node $v \in V$ such that  $|N_{v}(U)| = |N_{v}(U')| = 1$ (i.e., $v$ is a leaf in the tree $(V \cup U, E)$), and  two consecutive nodes $v_i,v_{i+1} \in V\setminus \{v \}$, such that $N_{v_i}(U') \cup N_{v_{i+1}}(U') = U' \setminus N_{v}(U')$.
Then, by Claim~\ref{cl:AcycleThatAdimtCycle} and by Lemma~\ref{lem:lemma2.3}, we are done. 
\item \textbf{$|U'|=4$:} By~(\ref{formula:numberOfEdgesInU}), we have $|U \setminus U'| \le 2$. 
Moreover, the tree $(V \cup U, E)$ has at least 10 leaves. 
Thus, by the pigeonhole principle, there exist two nodes $v,v' \in V$ such that  $|N_{v}(U)| = |N_{v}(U')| = 1$, 
$|N_{v'}(U)| = |N_{v'}(U')| = 1$ and $N_{v}(U') \neq N_{v'}(U')$, and  two consecutive nodes $v_i,v_{i+1} \in V\setminus \{v,v' \}$, such that $N_{v_i}(U') \cup N_{v_{i+1}}(U') = U' \setminus (N_{v}(U') \cup N_{v'}(U'))$.
Then, by Claim~\ref{cl:AcycleThatAdimtCycle} and by Lemma~\ref{lem:lemma2.3}, we are done. 
\end{itemize}
\end{proof}

In the following lemma we prove that the conjecture holds for $|V| < 24$.  
Actually, we show a stronger claim, that is, we claim that the conjecture holds also for wider family of graphs denoted $\G$. 
Let $\G$ be the family of all graphs $(V \cup U, H \cup E)$, such that 
$(V, U, E)$ is a bipartite graph, where $(V \cup U, E)$ is a forest and 
\begin{enumerate}
	\item[(i)] each tree in $(V \cup U, E)$ has an even number of nodes of $V$, 
\item[(ii)] $H$ is a Hamiltonian cycle on the set of nodes $V$, and 
\item[(iii)] $|V| < 24$.
\end{enumerate}

Notice that, if each graph in $\G$ contains a second Hamiltonian cycle, then this implies that the conjecture is true for the original family of graphs 
(where $(V \cup U, E)$ is a tree) having $|V| < 24$.
\begin{lemma} \label{lem:lemma24}
The conjecture holds for each $G  \in \G$. 
\end{lemma}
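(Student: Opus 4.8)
The plan is to prove the stronger, forest‑formulation by induction on $|U'|$, the number of vertices of $U$ of degree at least $4$, after imposing the standing reductions already available for the conjecture: $G_b$ has no two consecutive $H$‑vertices with a common $U$‑neighbour, every vertex of $U$ has degree at least $2$, and $(V\cup U,E)$ is a forest (this last is exactly the $\G$ setting, so the induction can live inside $\G$). The arithmetic backbone is simple accounting. Writing $c$ for the number of trees of $(V\cup U,E)$, bipartiteness gives $|E|=\sum_{u\in U}\deg u\ge 4|U'|+2(|U|-|U'|)=2|U'|+2|U|$, while $|E|=|V|+|U|-c$; hence $2|U'|+|U|\le |V|-c$, so $|V|<24$ bounds $|U'|$ by a short explicit list. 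Moreover, since no $U$‑vertex is a leaf of the forest, the tree leaf identity forces at least $2c+2|U'|$ leaves, all lying in $V$, and at most $3(|U|-|U'|)$ of them can be adjacent to a degree‑$\le 3$ vertex of $U$.

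\emph{Base cases.} If $|U'|=0$, every $U$‑vertex has degree at most $3$ and each tree of $\G$ carries an even number of $V$‑vertices, so Corollary~\ref{cor:corollaryForest1} already produces a \textsc{Sec}-$V$-\textsc{HamCycle}. If $|U'|$ is small, say $|U'|\le 1$ (and more generally for the first few values, handled as in the $|U'|\in\{2,3,4\}$ analysis of Lemma~\ref{lem:lemma2.5}), the bound $|V|<24$ lets one exhibit, via the pigeonhole over the cyclic order of $H$, two vertices $v_i,v_j$ of $H$ with $N_{v_i}(U')\cup N_{v_j}(U')=U'$ for which one of the two $H$‑arcs between them may be deleted keeping $G_b$ connected; Corollary~\ref{cor:cor2.10} (or Lemma~\ref{lem:lemma2.3} when $v_i,v_j$ are consecutive) then finishes, possibly after first applying the leaf‑splitting transformation of Claim~\ref{cl:AcycleThatAdimtCycle} to a few $V$‑leaves so that the remaining high‑degree vertices are confined to two consecutive positions of $H$.

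\emph{Inductive step, $|U'|$ large.} When the leaf count $2c+2|U'|$ exceeds $3(|U|-|U'|)$ — which, using $2|U'|+|U|\le|V|-c$, holds once $|U'|$ is above a fixed threshold — there is a $V$‑leaf $v_i$ whose unique $U$‑neighbour $u$ lies in $U'$. Apply Claim~\ref{cl:AcycleThatAdimtCycle} at $v_i$: in $G^\ast$ the vertex $u$ is replaced by degree‑$2$ vertices, so $|U'^\ast|=|U'|-1$; the three new vertices $v_l,u',v_r$ form their own tree with two $V$‑vertices, so $(V^\ast\cup U^\ast,E^\ast)$ is again a forest with every tree carrying an even number of $V$‑vertices. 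Since $|V^\ast|=|V|+2$ and $|U'^\ast|<|U'|$, the induction hypothesis (or Corollary~\ref{cor:corollaryForest1}/\ref{cor:corollaryForest2}, which carry no size restriction) applies to $G^\ast$ and yields a \textsc{Sec}-$V^\ast$-\textsc{HamCycle}, which by Claim~\ref{cl:AcycleThatAdimtCycle} gives a \textsc{Sec}-$V$-\textsc{HamCycle} in $G$; where the induction hypothesis is invoked directly one must check $|V^\ast|<24$, and the residual values of $|V|$ near the bound are pushed through the small‑$|U'|$ analysis instead.

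\emph{Main obstacle.} The delicate point is exactly the interaction between the two halves of the argument. The transformation of Claim~\ref{cl:AcycleThatAdimtCycle} is the only lever that lowers $|U'|$, but it inflates $|V|$ by two each time, so it cannot be used freely when $|V|$ is already the largest even value below $24$; and it only lowers $|U'|$ if one can guarantee a $V$‑leaf whose $U$‑neighbour is high‑degree, which is not automatic in an arbitrary forest. Thus the technical heart is an exhaustive pigeonhole that, for each admissible value of $|U'|$ (a short list, by the edge/leaf accounting), either forces enough leaves to be attached to high‑degree $U$‑vertices that a bounded number of absorptions brings $|U'|$ down to $0$ or $1$ while the graph stays inside $\G$ and the even‑$V$‑per‑tree condition is maintained (using Corollary~\ref{cor:corollaryForest2} to absorb the last parity defect if needed), or else directly produces the pair $v_i,v_j$ of Corollary~\ref{cor:cor2.10}. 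Getting this case split to cover all of $|V|<24$ without gap — in particular the tightest cases where the leaf count is only barely large enough — is where the real work lies.
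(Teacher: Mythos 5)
Your write-up is a strategy outline rather than a proof, and the part you defer is exactly where the content of the lemma lives. You reduce to ``an exhaustive pigeonhole'' over the admissible values of $|U'|$ and state yourself that covering all $|V|<24$ without gaps ``is where the real work lies'' --- but that case analysis is never carried out, and it is not routine. In particular, your plan to handle small $|U'|$ ``as in the $|U'|\in\{2,3,4\}$ analysis of Lemma~\ref{lem:lemma2.5}'' does not go through once $|V|$ exceeds $15$: the pigeonhole there rests on the fact that the neighbours of the high-degree $U$-vertices form an independent set on $H$ (no two consecutive $H$-vertices share a $U$-neighbour), so already for $|U'|=2$ one only gets a forced consecutive pair $v_i,v_{i+1}$ with $N_{v_i}(U')\cup N_{v_{i+1}}(U')=U'$ when $|V|\le 15$; for $16\le |V|\le 23$ nothing forces such a pair, and your sketch gives no replacement argument. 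The paper closes precisely this range by a different mechanism: it takes a counterexample minimizing the number of $U$-vertices of degree at least $3$, uses Claim~\ref{claim:claim2.1} to kill odd degrees, Lemma~\ref{lem:lemma2.2} and Lemma~\ref{lem:lemma2.3} to force $|U'|\ge 2$, Claim~\ref{cl:AcycleThatAdimtCycle} to force every neighbour of a $U'$-vertex to be a non-leaf, and then Corollary~\ref{cor:cor2.10} applied to pairs $v\in N_{u_1}(V)$, $v'\in N_{u_2}(V)$ joined along $H$ either directly or through leaves only, concluding by an explicit count that any graph escaping all these tools needs $|V|\ge 24$ (and $|V|\ge 30$ when $|U'|\ge 3$). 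Nothing equivalent to that final structural count appears in your proposal; Corollary~\ref{cor:cor2.10} is only mentioned as a possible outcome of the case split you did not do.

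There is also a concrete problem with your inductive step. After one application of Claim~\ref{cl:AcycleThatAdimtCycle} at a $V$-leaf attached to a $U'$-vertex, the graph $G^\ast$ still contains the remaining $|U'|-1$ vertices of degree at least $4$, so Corollaries~\ref{cor:corollaryForest1} and~\ref{cor:corollaryForest2} do not apply to it (they require all $U$-degrees at most $3$); you must fall back on the induction hypothesis, which needs $|V^\ast|=|V|+2<24$, and for $|V|\in\{22,23\}$ this fails. Your remedy --- push those residual cases through the small-$|U'|$ analysis --- is exactly the analysis shown above to be unavailable for $|V|>15$. Moreover, the existence of a $V$-leaf attached to a $U'$-vertex is obtained only from the inequality $2c+2|U'|>3(|U|-|U'|)$, which you do not verify for the parameter ranges that actually occur (and which can fail, e.g.\ when most leaves hang off degree-$2$ and degree-$3$ vertices of $U$, the situation depicted in the paper's extremal example of Figure~\ref{fig:24-example}). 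So both halves of your induction have unfilled holes, and they are the holes whose resolution produces the constant $24$.
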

\begin{proof}
We prove the lemma by considering a minimal graph in $\G$ that violates the conditions in the above lemmas, claims, and corollaries. 
More precisely, assume that there is a graph in $\G$ that does not contain a second Hamiltonian cycle, and let $G=(V \cup U, H \cup E)$ be a graph in $\G$ that does not contain a second  Hamiltonian cycle, such that the number of nodes in $U$ of degree at least 3 is minimal. 
Let $U' \subseteq U$ be the set of nodes of degree at least 4.
Recall that each node of $U$ is of degree at least $2$.
By the proof of Claim~\ref{claim:claim2.1}, the set $U$ does contain a node of an odd degree, where the proof shows how to reduce the number of nodes of an odd degree (if exits), 
which contradicts the minimality of the number of nodes in $U$ of degree at least 3.

By Lemma~\ref{lem:lemma2.2}, if $|U'|= 0$, then $G$ contains a second Hamiltonian cycle, in contradiction, and, by Lemma~\ref{lem:lemma2.3}, there are no two consecutive nodes $v_i,v_{i+1}$ in $H$ such that $N_{v_i}(U') \cup N_{v_{i+1}}(U') = U'$. Therefore, $U'=\{u_1, \dots , u_k \}$, where $k \geq 2$. 
Moreover, by Claim~\ref{cl:AcycleThatAdimtCycle}
, for each $v \in N_{u_i}(V)$, we have $|N_v(U)| > 1$ (i.e., $v$ is not a leaf in $(V \cup U, E)$), where $u_i \in U'$.
Furthermore, if $|U'| = 2$ (i.e., $U'=\{u_1, u_2 \}$) and $u_1$ and $u_2$ do not belong to the same tree in $(V \cup U, E)$, then, clearly, $|V| \geq 24$, see Figure~\ref{fig:24-example} for illustration. Otherwise, let $v \in N_{u_1}(V)$ and $v' \cup N_{u_2}(V)$ be two nodes, such that $v$ and $v'$ are consecutive nodes in $H$, 
or one of the two paths between $v$ and $v'$ in $H$ consists only of nodes that are leaves in $(V \cup U, E)$. Notice that there are at least two such pairs $v$ and $v'$. 
By Corollary~\ref{cor:cor2.10}, $G$ contains a second Hamiltonian cycle, in contradiction. 
Thus, $V$ must contain at least one additional node for such a pair. 
Therefore, we have that $|V| \geq 24$.
%
%
\begin{figure}[h]
   \centering
       \includegraphics[width=0.85\textwidth]{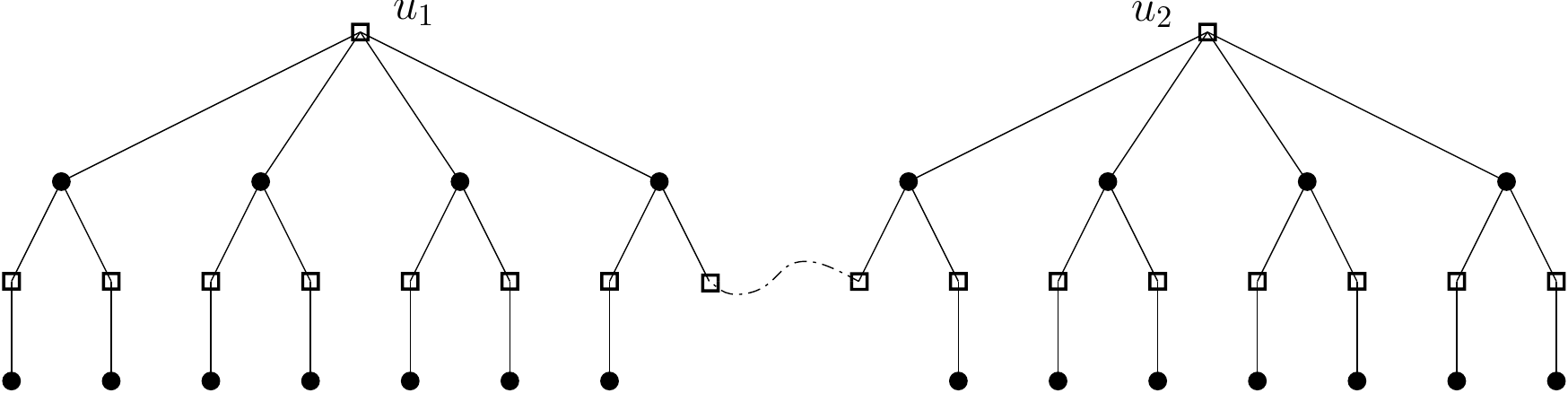}
   \caption{A minimal graph $(V \cup U,E)$ (with respect to $|U'|$) that does not admit a second Hamiltonian cycle 
   by Lemma~\ref{lem:lemma2.2}, Lemma~\ref{lem:lemma2.3}, and Claim~\ref{cl:AcycleThatAdimtCycle}. 
   The circles denote the nodes of $V$ and the squares denote the nodes of $U$. 
   The set $U$ contains at least two nodes ($u_1, u_2$) of degree at least 4, each connected to non-leaf nodes of $V$.}
   \label{fig:24-example}
\end{figure}

Notice that, by extending the aforementioned to the case where $|U'| \geq 3$, we get that $|V|$ is at least 30 (i.e., the minimal graph that follows the above (where $|U'| \geq 3$) has a tree of at least 29 nodes of $V$, however since it needs to be of even number of nodes of $V$, we conclude that $|V| \geq 30$). Thus, we assume that $|U'| = 2$ (i.e., $U'=\{u_1, u_2 \}$).

\end{proof}


In order to apply these lemmas for proving Theorem~\ref{theo:theorem1approxRatio} it is sufficient to prove the following auxiliary lemma.

\begin{lemma} \label{lem:lemma2.4}  
Let $G_{sb}=(V\cup U,E_{sb})$ be a connected graph such that $V$ is an independent set in $G_{sb}$, 
and let $H=(V,E)$ be a Hamiltonian cycle on $V$. 
Then, the graph $G_{sb}$ can be converted to a connected bipartite graph $G_b=(V, U^*,E_b)$ such that $U^* \subseteq U$ and, if the graph $G^*=(V \cup U^*,E \cup E_b)$ contains a  \textsc{Sec}-$V$-\textsc{HamCycle}, then $G=(V\cup U,E\cup E_{sb})$ also contains a \textsc{Sec}-$V$-\textsc{HamCycle}. 
\end{lemma}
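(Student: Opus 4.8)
The goal is to reduce a general connected graph $G_{sb}=(V\cup U, E_{sb})$ (with $V$ independent) to a connected bipartite graph on $V$ and a subset of $U$, in a way that preserves the non-existence of a \textsc{Sec}-$V$-\textsc{HamCycle}. The plan is to eliminate all edges internal to $U$ one at a time, each time replacing an offending edge with a new auxiliary node, while never losing the property that a second cycle in the transformed graph pulls back to a second cycle in the original.

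First I would observe that since $V$ is independent, the only edges of $E_{sb}$ that prevent $G_{sb}$ from being bipartite with parts $V$ and $U$ are edges with both endpoints in $U$. So it suffices to describe one transformation step that removes a single such edge $\{u_1,u_2\}$ with $u_1,u_2\in U$ while keeping the graph connected, and then iterate. The natural step is: delete the edge $\{u_1,u_2\}$ and add a fresh node $w$ together with edges $\{u_1,w\}$ and $\{w,u_2\}$ — but this keeps $w$ on the $U$-side, so it doesn't help directly. Instead, the right move (mirroring the node-splitting tricks already used in Claim~\ref{cl:AcycleThatAdimtCycle} and Lemma~\ref{lem:lemma2.3}) is to contract or reroute through $V$: one replaces $\{u_1,u_2\}$ by a path $u_1 - v^* - u_2$ where $v^*$ is a new node placed \emph{on the Hamiltonian cycle} $H$, inserted between two consecutive nodes $v_i, v_{i+1}$ of $H$, with $H$ updated to $\ldots v_i - v^* - v_{i+1}\ldots$; then $v^*$ belongs to the $V$-side, and $u_1,u_2$ now see each other only through $v^*$. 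Second, I would check the pullback: given a \textsc{Sec}-$V^*$-\textsc{HamCycle} $C^*$ in the new graph, if $C^*$ uses the cycle edges $v_i - v^* - v_{i+1}$ then deleting $v^*$ and reconnecting $v_i - v_{i+1}$ gives a second cycle in $G$; if instead $C^*$ routes through $v^*$ via $u_1 - v^* - u_2$, then contracting this back to the edge $\{u_1,u_2\}$ (and restoring $v_i - v_{i+1}$ on the cycle) again yields a $\Gamma$-Hamiltonian cycle differing from $H$. One must also verify that the resulting $C^*$ is still \emph{distinct} from $H^*$, i.e. that edges survive the pullback — which they do because at least one chord edge through $U$ is preserved.

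Third, after removing all $U$–$U$ edges we have a connected bipartite graph; then I would clean it up to land inside the family handled by the earlier lemmas: discard degree-$1$ nodes of $U$ (they are irrelevant, as noted after Conjecture~\ref{con:conjecture2.1}), and if the graph is not a tree, take a spanning tree of it — removing extra edges only makes a second cycle harder to find, so non-existence is preserved; this defines the subset $U^*\subseteq U$ and the bipartite $G_b=(V,U^*,E_b)$. Connectivity is maintained at every step because each transformation replaces an edge by a path through the same two endpoints (or, in the spanning-tree step, keeps a connected subgraph by construction).

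The main obstacle I expect is the pullback argument in the rerouting step: one has to carefully enumerate how $C^*$ can interact with the inserted node $v^*$ and the two new cycle edges, and argue in every case that the contraction produces a cycle that (a) is Hamiltonian on $V$ (possibly using some nodes of $U$), and (b) genuinely differs from $H$ in at least one edge — the second point is the delicate one, since a naive contraction could in principle collapse $C^*$ onto $H$. The resolution is the same bookkeeping idea as in Claim~\ref{cl:AcycleThatAdimtCycle}: track a distinguished cycle edge (e.g. $\{v_i,v_{i+1}\}$, now $\{v_i,v^*\}$ and $\{v^*,v_{i+1}\}$) and choose the insertion point so that this edge is forced into or out of $C^*$ in a controlled way, guaranteeing a surviving difference after pullback.
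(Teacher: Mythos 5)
Your overall strategy (eliminate the $U$--$U$ edges one at a time and pull a second cycle back through each step) is the right shape, but the specific transformation you chose does not prove the lemma as stated, and its pullback is incomplete. First, the lemma requires the converted graph to be $G_b=(V,U^*,E_b)$ with $U^*\subseteq U$, i.e.\ the \emph{same} node set $V$ and the \emph{same} Hamiltonian cycle edge set $E$; your step inserts a brand-new node $v^*$ into the $V$-side and rewires $H$ around it, so after iterating you end up with an enlarged $V$-side and a different Hamiltonian cycle -- not an object of the required form. This is not a cosmetic issue: in the paper the lemma is composed with Lemma~\ref{lem:lemma2.5}/Lemma~\ref{lem:lemma24}, whose hypotheses are bounds on $|V|$, so a conversion that inflates $|V|$ (by one node per $U$--$U$ edge) defeats the purpose. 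Second, even taken on its own terms, your pullback only treats two of the ways the cycle $C^*$ can traverse $v^*$. The inserted node has four neighbours ($v_i$, $v_{i+1}$, $u_1$, $u_2$), and in the mixed cases, e.g.\ $C^*$ entering from $v_i$ and leaving to $u_1$, there is no evident cycle of $G$ to contract to: $\{v_i,u_1\}$ need not be an edge of $G$, and the edge $\{u_1,u_2\}$ cannot be substituted because $u_2$ may be visited elsewhere (or not adjacent along $C^*$). You flag this as the delicate point but do not resolve it, and with this gadget it is not clear it can be resolved without further gadgetry.

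The paper avoids both problems by going in the opposite direction: instead of subdividing a $U$--$U$ edge with a new $V$-node, it \emph{contracts} it inside $U$. Pick $\{u_i,u_j\}\in E_U$ with $u_i$ adjacent to some node of $V$, transfer all edges of $u_i$ to $u_j$, delete $u_i$, and induct on the number of $U$--$U$ edges. This keeps $V$, $H$ and $U^*\subseteq U$ exactly as the lemma demands, and the pullback is immediate: a \textsc{Sec}-$V$-\textsc{HamCycle} in the contracted graph passes through $u_j$ at most once, hence uses at most two of the transferred edges, and that passage can be simulated in $G$ by routing through $u_i$, through $u_j$, or through both via the original edge $\{u_i,u_j\}$ (nodes of $U$ are allowed on a $V$-Hamiltonian cycle, so this costs nothing). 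If you want to salvage your write-up, replacing the subdivision gadget by this contraction step is the fix; the spanning-tree/degree-one clean-up you describe at the end is fine and matches the remarks after Conjecture~\ref{con:conjecture2.1}, but it is not needed for the lemma itself.
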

\begin{proof}
Let $G_U=(U,E_U)$ be the subgraph of $G_{sb}$ that is induced by $U$, and let $n$ be the number of edges in $E_U$. The proof is by induction on $n$.
\\{\bf Basis:} $n=0$, the claim clearly holds ($G_b=G_{sb}$).
\\{\bf Inductive step:} Let $\{u_i,u_j\} \in E_U$, such that $u_i$ is connected to at least one node $v \in V$.
There exists such a node $u_i$, since the graph $G_{sb}$ is connected. 
Consider the graph $G^*_{sb}=(V, U^*,E^*_{sb})$ that is obtained from $G_{sb}$ by connecting the adjacent nodes of $u_i$ to $u_j$, and removing $u_i$ and the edges incident to it, that is,
\begin{eqnarray*}
    U^*      \  =  \ U  & \setminus &  \{u_i \}  \text{\qquad  and} \\ 
    E^*_{sb}   =  \ E_{sb}  & \cup  & \{ \{u_j,w \} : \forall w \in N_{u_i}(U \cup V) \} \\
                       & \setminus & \{ \{u_i,w \} : \forall w \in N_{u_i}(U \cup V) \}. 
\end{eqnarray*}

By the induction hypothesis, $G^*_{sb}$ can be converted to a connected bipartite graph $G_{b}=(V, U^*,E_b)$ satisfying the lemma.
Thus, since any \textsc{Sec}-$V$-\textsc{HamCycle} $C^*$ in the graph $(V\cup U^*,E\cup E^*_{sb})$ contains at most two edges that are incident to $u_j$ and were generated during the modification of $G_{sb}$, the cycle $C^*$ admits a \textsc{Sec}-$V$-\textsc{HamCycle} in $G=(V\cup U,E\cup E_{sb})$. 
\end{proof}

\section{Dual Power Assignment}\label{sec:dpa}

Let $P$ be a set of wireless nodes in the plane and let $G_R=(P,E_R)$ be the communication graph that is induced by assigning a high transmission range $r_H$ to the nodes in a given subset $R\subseteq P$ and assigning low transmission range $r_L$ to the nodes in $P\setminus R$, and with edge set $E_R=\{(u,v):|uv|\le r_u\}$.

\begin{definition}
A {\bf {\em strongly connected component}} $C$ of $G_R$ is a maximal subset of $P$, such that for each pair of wireless nodes $u,v$ in $C$, there exists a path from $u$ to $v$ in $G_R$.
\end{definition}

\begin{definition}\label{def:definition1}
The {\bf {\em components graph}} $CG_R$ of $G_R$ is an undirected graph in which there is a node $C_i$ for each strongly connected component $C_i$ of $G_R$ (throughout this paper, for convenience of presentation, we will refer to the nodes of $CG_R$ as components, and to the wireless nodes of $G_R$ as nodes). In addition, there exists an edge between two components $C_i$ and $C_j$ if and only if there exist two nodes $u \in C_i$ and $v \in C_j$ such that $|uv| \le r_H$.
\end{definition}

\begin{definition}\label{def:definition2}
A set $Q\subseteq P$ is a {\bf {\em k-contracted set}} of a set $\C=\{C_1,C_2,\ldots,C_k\}$ of $k$ distinct components in $CG_R$ if $|Q\cap C_i| = 1$ for each $C_i \in \C$, and the components in $\C$ are contained in the same strongly connected component in $G_{R\cup Q}$; see Figure~\ref{fig:figure1.1} for illustration.
\end{definition}
\begin{figure}[htb]
   \centering
       \includegraphics[width=0.7\textwidth]{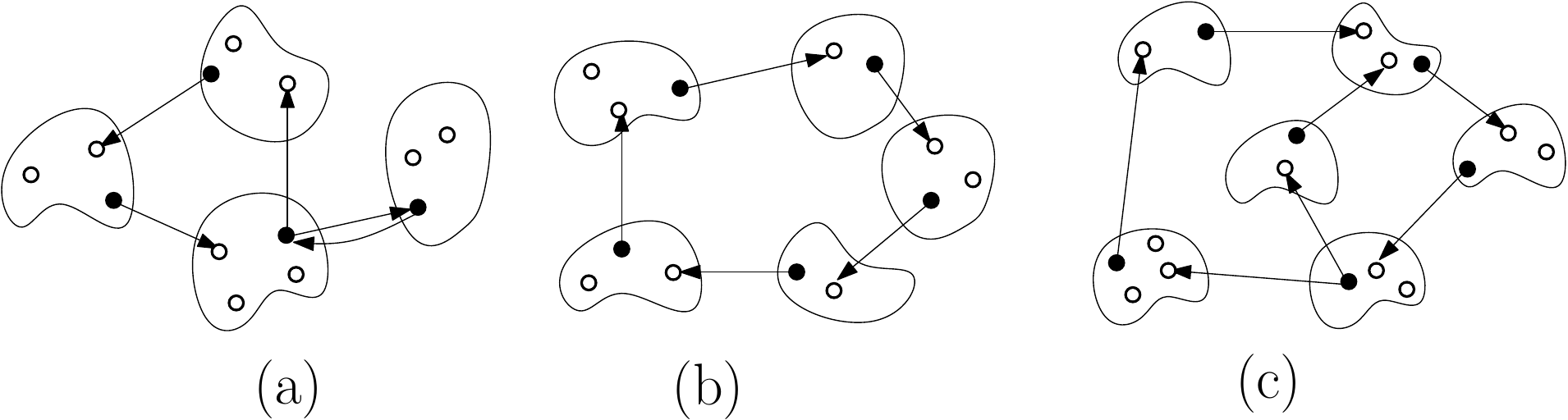}
   \caption{Examples of $k$-contractible structures: (a) $4$-contractible structure, (b) $5$-contractible structure, and (c) $6$-contractible structure. The solid circles in each $k$-contractible structure represent the nodes of the $k$-contracted set of the components.}
   \label{fig:figure1.1}
\end{figure}

Let $\C=\{C_1,C_2,\ldots,C_k\}$ be a set of components in $CG_R$, let $Q$ be a $k$-contracted set of $\C$, and let $v_i$ be the node in $Q\cap C_i$, for each $C_i \in \C$.

\begin{definition}\label{def:definition3}
A {\bf {\em k-contractible structure}} induced by $\C$ and $Q$ is a graph over $\C$ in which there exists a directed edge from $C_i$ to $C_j$ if $v_i$ can reach a node in $C_j$; see Figure~\ref{fig:figure1.1}  for illustration.
\end{definition}
 
\begin{definition}
A {\bf {\em leaf}} in a $k$-contractible structure induced by $\C$ and $Q$ is a component $C_i\in \C$ such that (i) $\C \setminus \{C_i\}$ and $Q\setminus \{v_i\}$ induce a $(k-1)$-contractible structure, (ii) each component in $\C$ is reachable from $C_i$ only via a path containing components from $\C$, and (iii) for each node $u$ in $C_i$, by assigning a high transmission range to $u$, if $u$ reaches a component from $\C$ then every component $C\notin\C$ is not reachable from $u$.
\end{definition}

Given a set $P$ of $n$ wireless nodes in the plane and two transmission ranges $r_L$ and $r_H$ such that the communication graph $G_P$ that is induced by assigning a high transmission range $r_H$ to the nodes in $P$ is strongly connected, in the \emph{dual power assignment} problem the objective is to find a minimum set $R^*\subseteq P$ such that the induced communication graph $G_{R^*}$ is strongly connected. Let $OPT$ denote the size of $R^*$. We present an approximation algorithm that computes a set $R\subseteq P$, such that the graph $G_R$ is strongly connected and the size of $R$ is at most $\frac{11}{7}\cdot OPT$. 

\subsection{Approximation algorithm}
Our algorithm is composed of an initialization and three phases and is based on the idea of Carmi and Katz~\cite{Carmi07} and  Calinescu~\cite{Calinescu10}. 
The main innovation of this algorithm is in achieving a better approximation ratio by utilizing the existence of a second Hamiltonian cycle.
During the execution of the algorithm, we incrementally add nodes to the set $R$ and update the graph $G_R$ accordingly.
The algorithm works as follows.
\\{\bf Initialization.} Set $R=\emptyset$ and compute the induced communication graph $G_R$, i.e., $G_{\emptyset}=(P,E)$, by assigning $r_L$ to each node in $P$ and setting $E=\{(v,u):|vu|\le r_L\}$. 
\\{\bf Phase 1.} While $G_R$ contains a $j$-contracted set, for $j\ge k$ (where $k$ is a constant to be specified later), find a $j$-contracted set, add its $j$ nodes to $R$, and update $G_R$ accordingly.
\\{\bf Phase 2.} Intuitively, we look for contractible structures, where we give priority to those with leaves and then according to their size. 
More precisely, for each iteration $i=k-1, k-2,\ldots,5,4$, while $G_R$ contains an $i$-contracted set, find a contracted set in the following priority order (where $1$ is the highest priority), add its nodes to $R$, and update $G_R$ accordingly (notice that, in each iteration $i$, any contractible structure in $G_R$ is of size at most $i$).
\begin{enumerate}
	\item A $j$-contracted set that induces a contractible structure with at least two leaves, where $j \geq 4 $.
	
	\item A $j$-contracted set that induces a contractible structure with one leaf, such that, if $i > \left\lceil {k}/{2}\right\rceil$ then $j \geq \left\lceil {k}/{2}\right\rceil$, otherwise $j = i$.
	
	\item An $i$-contracted set that induces a contractible structure forming a simple cycle.
	\item An $i$-contracted set that induces a contractible structure of combined cycles.
\end{enumerate}
{\bf Phase 3.} Find a minimum set $R_3 ^* \subseteq P$ such that $G_{R\cup R_3 ^*}$ is strongly connected, and update $R$ to be $R\cup R_3 ^*$. Notice that at the beginning of this phase, any contracted set in $G_{R}$ is of size at most $3$. In Section~\ref{sec:OptG3} we show how to find an optimal solution $R_3  ^*$ for such graphs in polynomial time.

The output of the algorithm is the set $R$, where the resulting graph $G_R$ is strongly connected. In the following section, we analyze the performance guarantee of our algorithm.

\subsection{Time complexity}
An $i$-contracted set can be found naively in $O(n^{i+2})$ time by considering all combinations of sets of nodes of size $i$. 
Moreover, given a constant $k$ finding a contracted set of size greater than $k$ can be found in $O(n^{k+2})$ time. 
For example, a contracted set of size greater than $k$ that induces a simple cycle can be found by considering all paths of 
length $k$ then by checking whether there is a simple path between the path's end-points that avoids the inner nodes of the path.
Finally, since each contracted set reduces the number of components by at least two, the number of contracted sets found by algorithm is $O(n)$.
Thus, the running time of the algorithm is polynomial. 
Notice that for a constant $k$, a $k$-contracted set can be found efficiently
using ideas from Alon et al.~\cite{AlonTZ1995}, where they show how to find simple paths and cycles of a specified length $k$, using the method of \emph{color-coding}.

\subsection{Approximation ratio}

In this section, we prove that the size of $R$ (denoted by $|R|$) at the end of the algorithm is at most $\frac{11}{7}\cdot OPT$. Let $R_i$ denote the set $R$ at the beginning of the $k-i$ iteration of phase 2, for $4 \le i \le k-1$, and let $R_3$ denote the set $R$ at the beginning of phase 3. Given a set ${R_i}$, let $n_i$ denote the number of components of $CG_{R_i}$, and let $OPT(G_{R_i})$ denote the size of a minimum set of nodes $R^*_i \subseteq P$ for which $G_{R_i\cup R^*_i}$ is strongly connected (i.e., $R^*_i$ is an optimal solution for $G_{R_i}$). 
Let $b_i$ (resp., $b_{i,j}$) denote the number of $i$-contracted sets (resp., $j$-contracted sets) found by the algorithm in the $k-i$ iteration.
The following lemma Immediately holds by Definition~\ref{def:definition2}.

\begin{lemma}\label{lem:lemma1.1}
For each $4 \le i < k$, we have \ $$n_i = n_{i-1} + (i-1)\cdot b_i + \sum\limits_{j=4}^{i-1}{(j-1)\cdot b_{i,j}}.$$
\end{lemma}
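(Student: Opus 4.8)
The plan is to track exactly how the number of components of the components graph changes across a single contraction step inside iteration $k-i$ of Phase~2, and then sum these changes. The key observation, which is essentially Definition~\ref{def:definition2} restated, is that when the algorithm finds an $\ell$-contracted set $Q$ of a set $\C=\{C_1,\dots,C_\ell\}$ of $\ell$ components and adds its $\ell$ nodes to $R$, all of $C_1,\dots,C_\ell$ merge into a single strongly connected component of the updated graph $G_{R\cup Q}$, while every other component is unaffected (the $\ell$ chosen nodes lie one per $C_t$, and raising their power only creates the edges realizing the contractible structure over $\C$). Hence such a step decreases the number of components by exactly $\ell-1$: the $\ell$ old components are replaced by $1$ new one.

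Next I would apply this bookkeeping to the $k-i$ iteration. By the priority rules of Phase~2, the contracted sets found in that iteration are either $i$-contracted sets (there are $b_i$ of them, each reducing the component count by $i-1$) or $j$-contracted sets with $4\le j\le i-1$ (there are $b_{i,j}$ of them for each such $j$, each reducing the count by $j-1$); note that by the remark in Phase~2, within iteration $i$ no contractible structure of size exceeding $i$ survives, so these are the only possibilities, and $j\ge4$ because sizes $\le3$ are deferred to Phase~3. Starting from $n_i$ components at the beginning of iteration $k-i$ and ending with $n_{i-1}$ components at the beginning of iteration $k-(i-1)$, the total reduction is $n_i-n_{i-1}$, which therefore equals $(i-1)\cdot b_i+\sum_{j=4}^{i-1}(j-1)\cdot b_{i,j}$. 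Rearranging gives the claimed identity $n_i=n_{i-1}+(i-1)\cdot b_i+\sum_{j=4}^{i-1}(j-1)\cdot b_{i,j}$.

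There is no real obstacle here; the only point requiring a line of care is justifying that a single contraction step does not inadvertently merge components \emph{outside} $\C$ (so that the decrease is exactly $\ell-1$ rather than something larger), and that components not touched by the step are neither split nor merged. Both follow directly from the definition of a $k$-contracted set together with the monotonicity of strong connectivity under adding vertices to $R$ (raising transmission power only adds edges). Summing the per-step contributions over all steps of iteration $k-i$ then yields the stated recurrence by telescoping the component count, and the lemma follows.
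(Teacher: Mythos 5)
Your argument is correct and matches the paper, which offers no separate proof but simply asserts that the identity follows immediately from Definition~\ref{def:definition2}: each $\ell$-contracted set merges its $\ell$ components into one, reducing the component count by exactly $\ell-1$, and summing over the $b_i$ and $b_{i,j}$ contractions of iteration $k-i$ telescopes to the stated recurrence. Your added remark that no extra components can be merged (else a larger contracted set would exist, contradicting the priority/size bound in effect at that iteration) is a reasonable elaboration of the same reasoning rather than a different route.
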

%
%
\begin{lemma}\label{lem:lemma1.2}
For each $3 \le i < k$, we have \ $$\frac{i}{i-1}(n_i - 1)\le OPT(G_{R_i}) \le 2(n_i - 1).$$
\end{lemma}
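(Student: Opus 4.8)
The plan is to establish the two inequalities separately. For the right-hand inequality $OPT(G_{R_i})\le 2(n_i-1)$, I will exploit that the components graph $CG_{R_i}$ is connected: since assigning $r_H$ to every node of $P$ yields a strongly connected graph, any two components of $CG_{R_i}$ are joined (directly or through a path) by Definition~\ref{def:definition1}, so $CG_{R_i}$ is connected. Fix a spanning tree $\mathcal T$ of $CG_{R_i}$; it has exactly $n_i-1$ edges, and each edge $\{C_a,C_b\}\in\mathcal T$ is witnessed by nodes $u\in C_a$ and $v\in C_b$ with $|uv|\le r_H$. Let $R^\star$ be the set of all such witness nodes over the edges of $\mathcal T$; then $|R^\star|\le 2(n_i-1)$. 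In $G_{R_i\cup R^\star}$ each original strongly connected component is still strongly connected and, for every tree edge, both directed arcs $(u,v)$ and $(v,u)$ are present, so contracting along $\mathcal T$ merges all components into one strongly connected component. Hence $G_{R_i\cup R^\star}$ is strongly connected and $OPT(G_{R_i})\le |R^\star|\le 2(n_i-1)$.

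For the left-hand inequality $\frac{i}{i-1}(n_i-1)\le OPT(G_{R_i})$, the crucial structural input is that $G_{R_i}$ contains no $j$-contracted set with $j>i$: Phase~1 eliminated all $j$-contracted sets with $j\ge k$, and iterations $k-1,k-2,\dots,i+1$ of Phase~2 eliminated all $j$-contracted sets with $j\ge i+1$ (this is exactly the parenthetical observation in the description of Phase~2, and for $i=3$ it is the remark at the start of Phase~3). Given an optimal solution $R^{*}$ for $G_{R_i}$, I will decompose the merging of the $n_i$ components of $CG_{R_i}$ into one component (which $R^{*}$ accomplishes) into pairwise disjoint contracted sets $Q_1,\dots,Q_m\subseteq R^{*}$ of sizes $q_1,\dots,q_m$, where $Q_\ell$ merges $q_\ell$ components so that $\sum_{\ell}(q_\ell-1)=n_i-1$. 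Because every contracted set of $G_{R_i}$ has size at most $i$, we have $2\le q_\ell\le i$, and since $q\mapsto \frac{q}{q-1}$ is non-increasing on $[2,i]$, each bundle satisfies $q_\ell\ge\frac{i}{i-1}(q_\ell-1)$. Summing, $OPT(G_{R_i})\ge\sum_{\ell}q_\ell\ge\frac{i}{i-1}\sum_{\ell}(q_\ell-1)=\frac{i}{i-1}(n_i-1)$, as required.

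The main obstacle is the decomposition step: I must show that a feasible solution can be peeled into disjoint contracted sets with the correct accounting — i.e., that when more than one component remains, $R^{*}$ contains a minimal subset which, together with the components it glues, forms a contracted set in the sense of Definition~\ref{def:definition2}, and that peeling it off and passing to the coarser graph preserves both feasibility of the remaining nodes and the bound ``no contracted set of size $>i$''. The delicate point is ruling out a peeled bundle that ``accidentally'' merges more components than its size would pay for; handling this uses the minimality of the peeled bundle together with the structural properties of the graphs $G_{R_i}$ produced by the algorithm (the high-power nodes of $R_i$ originate only from the contractible structures selected in Phases~1 and~2, following the priority order: leaves first, then decreasing size), and mirrors the analyses of Carmi and Katz~\cite{Carmi07} and Calinescu~\cite{Calinescu10}. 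Once the decomposition is in place, the monotonicity of $\frac{q}{q-1}$ and the concluding arithmetic are immediate, and together with the first paragraph this proves the lemma.
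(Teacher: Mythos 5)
Your proposal takes essentially the same route as the paper: the upper bound via a spanning tree of $CG_{R_i}$ with two witness endpoints per tree edge (cost at most $2(n_i-1)$), and the lower bound via the amortized cost $\frac{i}{i-1}$ per merged component, using that at stage $R_i$ no contracted set has size exceeding $i$. The decomposition/peeling step you flag as the delicate point is exactly what the paper dispatches in a single sentence (``the amortized cost of each contracted component of an $i$-contracted set is $\frac{i}{i-1}$''), deferring to the analyses in~\cite{Carmi07,Chen05,Rong04}, so your sketch is, if anything, more explicit than the published argument.
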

\begin {proof}
Let $T$ be a spanning tree of $CG_{R_i}$. For each $\{C_i,C_j\}\in T$, select two nodes $v_i \in C_i$ and $v_j\in C_j$ such that $|v_iv_j|\le r_H$, and add them to $R_i$. Clearly, the resulting communication graph is strongly connected and the cost of this solution is at most $2(n_i - 1)$, which proves the upper bound. The amortized cost of each contracted component of an $i$-contracted set is $\frac{i}{i-1}$. Hence, the lower bound follows. (The proof of this lemma also appears in previous related papers such as~\cite{Carmi07,Chen05,Rong04}.)
\end{proof}

Intuitively, the main ingredient of the algorithm is the way we select our contracted sets, which guarantees that each contracted set that is found in $G_R$ saves high transmission range assignments for an optimal solution for $G_R$. Below we formalize this ingredient.

Let $\C =  \{ C_1, C_2, \dots , C_k \}$ be a set of $k$ components in $CG_R$, let $Q$ be a $k$-contracted set of $\C$, let $v_j$ be the node in 
$Q \cap C_j$ for each $C_j \in C$, and let $S$ be a $k$-contractible structure induced by $Q$.

\begin{observation}\label{obs:observation1.3}
Let $\ell$ be the number of leaves in $S$. Then, $$OPT(G_{R\cup Q}) \le OPT(G_{R}) - \ell.$$
\end{observation}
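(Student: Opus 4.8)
The plan is to argue by an exchange (charging) argument in the ``contraction saves nodes'' direction. Let $R^*$ be an optimal solution for $G_R$, so $|R^*| = OPT(G_R)$ and $G_{R \cup R^*}$ is strongly connected. I will construct a feasible solution $R'$ for $G_{R \cup Q}$, i.e. a set with $G_{R \cup Q \cup R'}$ strongly connected, of size at most $OPT(G_R) - \ell$; since $OPT(G_{R \cup Q}) \le |R'|$, this yields the claim. Concretely I take $R' = R^* \setminus D$, where $D \subseteq R^*$ is a set of $\ell$ distinct nodes, one charged to each leaf of $S$. Then $|R'| = |R^*| - |D| = OPT(G_R) - \ell$, and the whole difficulty reduces to exhibiting $D$ and proving that $G_{R \cup Q \cup (R^* \setminus D)}$ is still strongly connected. (If some charged node happens to lie in $Q$, this only helps the count, since it is then already paid for by $Q$ and can simply be dropped from $R'$.)

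First I would set up the charging. Let $C_{i_1}, \dots, C_{i_\ell}$ be the leaves of $S$. For each leaf $C_{i_t}$ I claim that $R^*$ must contain a node $d_{i_t}$ \emph{dedicated} to establishing $C_{i_t}$'s connection to the remainder of $G_R$, which becomes superfluous once $\C$ is contracted by $Q$. Its existence follows from strong connectivity of $G_{R \cup R^*}$ together with the leaf properties: since $C_{i_t}$ is a maximal strongly connected component of $G_R$, some high-power node of $R^*$ must supply $C_{i_t}$'s outgoing link (or, symmetrically, its incoming link) to the rest of the graph; leaf property (ii) guarantees that from $C_{i_t}$ the other components of $\C$ are reached only through $\C$, and leaf property (iii) guarantees that a high-power node of $C_{i_t}$ reaching into $\C$ cannot simultaneously bridge to any component outside $\C$. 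Together these localize the forced node to one chargeable to $C_{i_t}$ itself, giving such a $d_{i_t} \in R^*$.

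Next I would establish distinctness and simultaneous removability. Distinctness across leaves follows from choosing each $d_{i_t}$ inside its leaf component $C_{i_t}$ (distinct components are disjoint); crucially, leaf property (i) — that deleting a leaf together with its $Q$-node still leaves a $(k-1)$-contractible structure — lets me peel the leaves off one at a time and charge them independently, so no node is charged twice. For removability I compare $G_{R \cup Q \cup R^*}$, which is strongly connected as a supergraph of $G_{R \cup R^*}$ (adding high power only adds edges), with $G_{R \cup Q \cup (R^* \setminus D)}$. Adding $Q$ merges all of $\C$, in particular every leaf, into a single strongly connected component $C^*$, so each leaf now reaches and is reached by the rest of the graph through $C^*$ — exactly the role $d_{i_t}$ played in $G_{R \cup R^*}$. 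Properties (ii) and (iii) ensure $d_{i_t}$'s high-power edges were not needed by any other part of the graph, so the deletions do not sever anything the contraction does not already restore.

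The main obstacle I anticipate is precisely this last point: arguing that the $\ell$ deletions are safe \emph{simultaneously} rather than one at a time. Removing $d_{i_t}$ restores $C_{i_t}$'s connectivity via $C^*$, but I must rule out that some $d_{i_t}$ was doing double duty, e.g. lying on the unique route by which a non-leaf, non-$\C$ part of the graph traverses $\C$. This is exactly where leaf property (iii) is essential, and I expect the bookkeeping to split into the incoming-versus-outgoing cases for each leaf and to verify in each case that $C^*$ supplies a replacement route. A secondary subtlety is the situation where the forced node is an \emph{incoming} node lying outside $C_{i_t}$ (which could in principle serve two leaves at once); handling this case without breaking distinctness, by always relocating the charge to an inside node guaranteed by properties (ii) and (iii), is the part of the argument I would write out most carefully.
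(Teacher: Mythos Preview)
The paper states this as an \emph{Observation} and gives no proof at all; your exchange argument is exactly the intended justification and is correct in outline.

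Your worries in the last two paragraphs are somewhat overstated, and you can sharpen the argument so that the ``simultaneous removal'' and ``incoming node'' subtleties disappear. The clean version is this: for each leaf $C_{i_t}$, strong connectivity of $G_{R\cup R^*}$ forces $R^*\cap C_{i_t}\neq\emptyset$ (some node inside the leaf must supply an outgoing edge). Now use property~(ii) directly on the directed path in $G_{R\cup R^*}$ from $C_{i_t}$ to any other $C_j\in\C$: that path must stay within $\C$, so its very first hop is an edge from some $d_{i_t}\in R^*\cap C_{i_t}$ into a component of $\C$. Property~(iii) then says \emph{all} of $d_{i_t}$'s high-power edges land in $\C$. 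Thus each $d_{i_t}$ lies in its own leaf (so the $d_{i_t}$ are automatically distinct), and every edge you delete by dropping $D=\{d_{i_1},\dots,d_{i_\ell}\}$ is internal to the contracted super-component $C^*$; hence $G_{R\cup Q\cup(R^*\setminus D)}$ is strongly connected with no further case analysis. There is no need to ever consider an ``incoming'' charge outside the leaf, and no interaction between the deletions.
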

\begin{corollary}\label{cor:corollary4.1}
Let $\mathcal{L} _i$ denote the number of leaves contracted in the $k-i$ iteration. Then, $$OPT(G_{R_{i-1}}) \le OPT(G_{R_i}) - \mathcal{L} _i.$$
\end{corollary}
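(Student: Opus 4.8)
The plan is to derive Corollary~\ref{cor:corollary4.1} directly from Observation~\ref{obs:observation1.3} by tracking the effect of the contracted sets selected during a single iteration of Phase~2. Recall that $R_i$ is the set $R$ at the beginning of the $k-i$ iteration, and $R_{i-1}$ is the set $R$ at the beginning of the next iteration; so the nodes in $R_{i-1}\setminus R_i$ are exactly those added by the various contracted sets the algorithm found during the $k-i$ iteration, and $\mathcal{L}_i$ counts the total number of leaves in all the contractible structures associated with those contracted sets.

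First I would note that a single iteration of Phase~2 consists of a sequence of contracted-set selections $Q^{(1)}, Q^{(2)}, \dots, Q^{(m)}$, producing a chain of graphs
\[
G_{R_i} = G^{(0)},\ G^{(1)} = G_{R_i\cup Q^{(1)}},\ \dots,\ G^{(m)} = G_{R_{i-1}},
\]
where each $Q^{(t)}$ is a $j_t$-contracted set of a set of components $\C^{(t)}$ in $CG_{G^{(t-1)}}$ inducing a contractible structure $S^{(t)}$ with $\ell_t$ leaves, and $\mathcal{L}_i = \sum_{t=1}^m \ell_t$. Applying Observation~\ref{obs:observation1.3} to the step from $G^{(t-1)}$ to $G^{(t)}$ gives $OPT(G^{(t)}) \le OPT(G^{(t-1)}) - \ell_t$. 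Summing (telescoping) these inequalities over $t = 1, \dots, m$ yields
\[
OPT(G_{R_{i-1}}) = OPT(G^{(m)}) \le OPT(G^{(0)}) - \sum_{t=1}^m \ell_t = OPT(G_{R_i}) - \mathcal{L}_i,
\]
which is exactly the claimed bound. The inductive/telescoping structure is the whole argument once Observation~\ref{obs:observation1.3} is available.

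The one point that needs care — and which I expect to be the main obstacle — is that Observation~\ref{obs:observation1.3} is stated for a contracted set and contractible structure in a \emph{fixed} graph $G_R$, whereas in the iteration the contracted set $Q^{(t)}$ and its structure $S^{(t)}$ live in $G^{(t-1)}$, which itself already incorporates the earlier choices $Q^{(1)},\dots,Q^{(t-1)}$. So I would make explicit that the observation applies verbatim with $R$ replaced by $R_i\cup Q^{(1)}\cup\cdots\cup Q^{(t-1)}$: the notions of component, contracted set, contractible structure, and leaf are all defined relative to whatever the current graph $G_R$ is, so there is no circularity. I would also remark that the definition of a leaf (condition (iii), that assigning high power to a node of a leaf component cannot reach any component outside $\C$) is precisely what makes the per-step decrement of $OPT$ additive across the iteration — reaching one leaf component does not consume the ``savings'' attributable to another, so the $\ell_t$ genuinely add up. With these remarks in place the corollary is immediate, and no further computation is required.
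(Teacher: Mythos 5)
Your proof is correct and matches the paper's intent: the corollary is stated in the paper as an immediate consequence of Observation~\ref{obs:observation1.3}, obtained exactly by applying that observation to each contracted set found during the $k-i$ iteration and telescoping the resulting inequalities. Your additional remark that the observation applies verbatim to the evolving graph (with $R$ replaced by the current set of high-power nodes) is a reasonable clarification of the same argument, not a different route.
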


\begin{observation}\label{obs:observation1.1}
In $G_{R\cup Q}$, if there exists a node $v$ in an optimal solution for $G_{R}$ that induces only edges of the clique over $\C$ (i.e., $v$ reaches only components of $\C$), then $OPT(G_{R\cup Q}) < OPT(G_R)$.
\end{observation}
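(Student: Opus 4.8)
The plan is to show that an optimal solution for $G_R$ remains feasible for $G_{R\cup Q}$ after we delete the single node $v$, so that $v$ is genuinely saved. Fix an optimal set $R^{*}\subseteq P$ with $|R^{*}|=OPT(G_R)$ for which $G_{R\cup R^{*}}$ is strongly connected; without loss of generality $R^{*}\cap R=\emptyset$, and by hypothesis $v\in R^{*}$. The goal is to prove that $G_{R\cup Q\cup(R^{*}\setminus\{v\})}$ is strongly connected. Granting this, $R^{*}\setminus\{v\}$ witnesses $OPT(G_{R\cup Q})\le|R^{*}\setminus\{v\}|=OPT(G_R)-1<OPT(G_R)$, which is exactly the assertion of the observation.

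Next I would record the two structural facts that make $v$ dispensable. First, since $Q$ is a $k$-contracted set of $\C$, Definition~\ref{def:definition2} puts all the components of $\C$ into a single strongly connected component $D$ of $G_{R\cup Q}$. Second, the hypothesis that $v$ induces only edges of the clique over $\C$ means precisely that $v$ lies in some $C_i\in\C$ and that every high-power out-edge of $v$ has its other endpoint inside a component of $\C$ (if $v$ induced no such cross-component edge it would already be redundant in $R^{*}$, contradicting optimality); hence $v\in D$ and all high-power out-neighbours of $v$ lie in $D$ as well. The case $v\in Q$ is immediate — then $v$ still has high power in $G_{R\cup Q\cup(R^{*}\setminus\{v\})}$, so that graph contains $R\cup R^{*}$ among its high-power nodes and is strongly connected. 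So assume $v\notin Q$; then $v$ carries low power in $G_{R\cup Q}$, which means the strong connectivity of the component $D$ inside $G_{R\cup Q}$ is witnessed by walks that never traverse a high-power out-edge of $v$ (those edges are simply absent there), and such walks persist in the supergraph $G_{R\cup Q\cup(R^{*}\setminus\{v\})}$.

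To conclude, take any two nodes $x,y$ and a simple $x$–$y$ path $W$ in the strongly connected graph $G_{R\cup Q\cup R^{*}}$. If $W$ uses no high-power out-edge of $v$, every edge of $W$ is present in $G_{R\cup Q\cup(R^{*}\setminus\{v\})}$ and we are done. Otherwise $W$ uses exactly one such edge $(v,w)$; since $v,w\in D$, I would splice out the step $v\to w$ and replace it by a walk from $v$ to $w$ inside $D$ drawn from the $G_{R\cup Q}$-certificate of $D$'s strong connectivity, which avoids high-power out-edges of $v$. This yields an $x$–$y$ walk in $G_{R\cup Q\cup(R^{*}\setminus\{v\})}$, establishing strong connectivity. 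The only real obstacle is bookkeeping: making the phrase ``$v$ induces only edges of the clique over $\C$'' precise and checking that the internal strong connectivity of $D$ used in the rerouting does not itself secretly rely on $v$'s high power — both points are settled by the remark that $v$ has low power in $G_{R\cup Q}$ (or is already in $Q$, the trivial case).
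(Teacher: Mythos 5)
Your proof is correct and matches the reasoning the paper leaves implicit (Observation~\ref{obs:observation1.1} is stated there without proof): since $Q$ merges the components of $\C$ into a single strongly connected component of $G_{R\cup Q}$, every high-power edge of $v$ can be rerouted inside that component using edges already present in $G_{R\cup Q}$, so $R^{*}\setminus\{v\}$ stays feasible and $OPT(G_{R\cup Q})\le OPT(G_R)-1$. Your handling of the degenerate cases ($v\in Q$, $v\in R$, or $v$ inducing no cross-component edge) is also sound, so nothing further is needed.
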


\begin{observation}\label{obs:observation1.2}
Let $v_i\in C_i\cap Q$ be a node that reaches only one component $C_j\in \C$; then (i) any path from $C_i$ to $C_j$ via $C'\notin \C$ in $CG_R$ must contain $C_k\in \C$, and (ii) for each node $u\in C_i$, by assigning a high transmission range to $u$, if $u$ reaches $C_j$ then every component $C'\notin \C$ is not reachable from $u$. 
\end{observation}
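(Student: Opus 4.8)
The plan is to prove both parts by contradiction, using only that $C_i$ is a strongly connected component of $G_R$ --- so all of its nodes reach one another within $C_i$ --- and that $Q$ is a $k$-contracted set --- so, after boosting $Q$, the whole of $\C$ lies in one strongly connected component of $G_{R\cup Q}$. In particular, in the $k$-contractible structure $S$ every component of $\C$ is reachable from $C_i$, and the hypothesis says that $C_j$ is the only out-neighbour of $C_i$ in $S$, so every directed walk in $S$ leaving $C_i$ starts with the edge $C_i\to C_j$.

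For part~(i), suppose $CG_R$ contains a path $\pi=(C_i,D_1,\dots,D_m,C_j)$ with $D_1\notin\C$ and no $D_t$ lying in $\C$. The edge $C_iD_1$ is realised by a node $x\in C_i$ at distance $\le r_H$ from a node of $D_1$, and $v_i$ reaches $x$ because $C_i$ is an SCC of $G_R$. I would then argue that $v_i$ in fact reaches $D_1$, and --- walking along $\pi$ and using that each $D_t$ is itself an SCC of $G_R$ --- every component of $\pi$, by a route that leaves $\C$ immediately and only re-enters it at $C_j$. Following the part of $S$ that continues from $C_j$ then produces a second component of $\C$ reachable from $C_i$ by a walk not beginning with $C_i\to C_j$, contradicting the hypothesis.

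For part~(ii), fix $u\in C_i$, give it high power, assume it reaches $C_j$, and suppose for contradiction that it also reaches some $C'\notin\C$. Concatenating the directed $u$-to-$C'$ path and the directed $u$-to-$C_j$ path through their common endpoint $u$, and recording the components visited, yields a walk in $CG_R$ from $C_i$ to $C_j$ passing through $C'\notin\C$; by part~(i) it must also visit some third component $C_k\in\C$. The node of $C_k$ on this walk is reachable from $u$, hence --- after checking that the relevant stretch of the walk can also be entered from $v_i$, using the strong connectivity of $C_i$ --- reachable from $v_i$ as well, so $v_i$ reaches $C_k\in\C\setminus\{C_j\}$, contradicting the hypothesis.

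The step I expect to be the real obstacle is the reroute in part~(i): the witness $x$ of the edge $C_iD_1$ is merely some node of $C_i$, not necessarily $v_i$, so upgrading ``$C_i$ is adjacent to $D_1$ in $CG_R$'' to ``$v_i$ reaches $D_1$ along a $\C$-avoiding route that still ends at $C_j$'' requires a careful use of the SCC structure of $C_i$ and of the $D_t$, together with the assumption that $\pi$ stays outside $\C\setminus\{C_i,C_j\}$. The same ``whose reachability actually grows'' accounting is what makes the passage from $u$ to $v_i$ in part~(ii) non-immediate, and is where I would spend most of the care.
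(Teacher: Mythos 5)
There is a genuine gap, and it is exactly at the step you yourself flagged; unfortunately it is not a technical obstacle to be smoothed over but the point where the approach breaks. An edge of $CG_R$ between $C_i$ and $D_1$ only records that \emph{some} pair of nodes $x\in C_i$, $y\in D_1$ satisfies $|xy|\le r_H$ (Definition~\ref{def:definition1}); it does not provide any directed edge of the communication graph, because $x$ is not in $R\cup Q$ and therefore transmits at range $r_L$. So from ``$v_i$ reaches $x$ inside $C_i$'' you cannot conclude ``$v_i$ reaches $D_1$'', and the walk along $\pi$ never gets started. In fact no argument of this shape can work: the existence of a $\C$-avoiding path in $CG_R$ from $C_i$ to $C_j$ is perfectly consistent with $v_i$ reaching only $C_j$ (take a $3$-cycle contractible structure on $\C=\{C_1,C_2,C_3\}$ and an extra component $D$ whose nodes happen to lie within $r_H$ of nodes of $C_1$ and of $C_2$: then $C_1-D-C_2$ is such a path, yet $v_1$ still reaches only $C_2$). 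The real content of the observation is algorithmic, not metric: if such a path $(C_i,D_1,\dots,D_m,C_j)$ existed, one could replace $v_i$ by the witness $x\in C_i$ and pick one witness node in each $D_t$, obtaining a contracted set on $\C\cup\{D_1,\dots,D_m\}$ of size greater than $k$ (the loss of $v_i$'s out-edges is harmless precisely because $v_i$ reaches only $C_j$, and that reachability is re-created through the $D_t$'s); this contradicts the invariant maintained by Phases~1--2 that, when $S$ is found, no larger contracted set remains. Your proof never invokes this maximality, and without it the statement is simply not derivable from the stated hypotheses. (The paper gives no proof at all --- it is stated as an Observation --- but this is the justification it is leaning on, as the parallel wording in the definition of a leaf and the use in Lemma~\ref{lem:lemma1.5} indicate.)

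Part~(ii) inherits the same problem and adds another: the contradiction you aim for concerns the wrong graph. If $u\in C_i$ is the node given high power, then the components reachable from $u$ (and hence from $v_i$ through $u$) are reachable in $G_{R\cup\{u\}}$, whereas the hypothesis that $v_i$ reaches only $C_j$ concerns the situation in which $v_i$, not $u$, is boosted; so ``$v_i$ reaches $C_k$'' obtained this way contradicts nothing. Here too the intended argument is that a node $u\in C_i$ reaching both $C_j$ and some $C'\notin\C$ would let one exchange $v_i$ for $u$ and enlarge the contractible structure, which is excluded by the same maximality of the stage at which $S$ is selected.
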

For simplicity of presentation we prove Lemma~\ref{lem:lemma1.4} and Lemma~\ref{lem:lemma1.5} for $k=8$, therefore, the approximation ratio we obtained is based on $k=8$.
However, even-though we prove the lemmas for $k=8$,  the lemmas hold for greater values of $k$, therefore, we keep the statements of the lemmas in a general formulation.


Let $Q_i$ denote an $i$-contracted set that is found during the $k-i$ iteration, and let $S_i$ denote the contractible structure induced by $Q_i$. Recall that $b_i$ (resp., $b_{i,j}$) denote the number of $i$-contracted sets (resp., $j$-contracted sets) found by the algorithm in the $k-i$ iteration. 
\begin{lemma} \label{lem:lemma1.4}
For each $4 \leq i \leq \left\lceil {k}/{2}\right\rceil$, we have 
$$OPT(G_{R_{i-1}}) \le OPT(G_{R_i}) - 2b_i - 2\sum\limits_{j=4}^{i-1}{b_{i,j}}.$$
\end{lemma}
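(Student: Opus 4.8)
\textbf{Proof proposal for Lemma~\ref{lem:lemma1.4}.}

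The plan is to charge each contracted set found in the $k-i$ iteration of Phase~2 against the decrease in the optimal value $OPT(G_{R})$, showing that every such set forces at least two ``savings'' in any optimal solution. We process the iteration step by step, and at each step $G_{R}$ denotes the current graph (before the step) and $Q$ the contracted set found; we will show $OPT(G_{R\cup Q}) \le OPT(G_{R}) - 2$, and then summing over all $b_i$ sets of size $i$ and all $b_{i,j}$ sets of size $j<i$ collected in that iteration yields the claimed inequality, since $R_{i-1}$ is obtained from $R_i$ by exactly these additions. The key point is that in the regime $4 \le i \le \lceil k/2\rceil$, the priority rules of Phase~2 restrict the structure of every $Q$ that the algorithm actually picks: either $S$ has at least two leaves (priority~1), or $S$ has one leaf and $j \ge \lceil k/2\rceil$ but we are in iteration $i \le \lceil k/2\rceil$ so in fact the rule forces $j=i$ in this range — hence a one-leaf structure of size exactly $i$ — or $S$ is a simple cycle (priority~3), or $S$ is a structure of combined cycles (priority~4).

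The four cases are handled as follows. If $S$ has at least two leaves, Observation~\ref{obs:observation1.3} gives $OPT(G_{R\cup Q}) \le OPT(G_R) - 2$ directly. If $S$ is a simple cycle, contracting $Q$ merges $i$ components into one using $i$ nodes; an optimal solution for $G_R$ needs, by the amortized-cost argument of Lemma~\ref{lem:lemma1.2} (cost $\tfrac{j}{j-1}$ per contracted component), strictly more than $i$ high-power assignments to connect these $i$ components to the rest together with the internal connectivity, so the gain is at least $2$ — here one uses that Phase~1 has already removed all contracted sets of size $\ge k$, so the $i$ components cannot be ``cheaply'' cycled by a larger structure, and that a simple cycle on $i \ge 4$ components cannot be optimally realized with fewer than $i+1$ nodes unless it is itself part of an optimal contracted set of the same size, in which case Observation~\ref{obs:observation1.1}/\ref{obs:observation1.2} applies. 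The combined-cycles case reduces to the simple-cycle case by peeling off cycles. The one-leaf case with $j=i$ is the delicate one: one leaf gives only one guaranteed saving via Observation~\ref{obs:observation1.3}, so I would argue that the remaining $i-1$ components of $S$, being internally connected by a structure that is not a single cycle (otherwise priority~3 or the two-leaf rule would have fired first) and not reducible to more leaves, must themselves cost an optimal solution at least one extra assignment, invoking Observation~\ref{obs:observation1.2}(ii) to rule out a ``shortcut'' through a component outside $\C$; combined with the leaf this yields the second saving.

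I expect the main obstacle to be precisely this last bookkeeping: making rigorous that a one-leaf (or cycle, or combined-cycle) structure of size $i$ in the range $i \le \lceil k/2\rceil$ cannot be ``paid for'' by an optimal solution with only one net saving. This is where the restriction $i \le \lceil k/2 \rceil$ is essential — it guarantees that two disjoint structures of size $i$ together have size $\le k$, so the algorithm would have found their union in an earlier (larger-$i$) iteration or in Phase~1 if the optimum connected them jointly; equivalently, the fact that the algorithm did \emph{not} find a larger contracted set at this point means the optimal solution must ``spend'' separately on each $Q$. Formalizing this non-interference, and simultaneously covering the $b_{i,j}$ terms (sets of size $j<i$ picked in iteration $i$ only under priorities~1 or~2, hence always with $\ge 2$ leaves or with the $j \ge \lceil k/2\rceil$ one-leaf condition, which for $j < i \le \lceil k/2 \rceil$ is vacuous so they too must have two leaves), is the crux of the argument.
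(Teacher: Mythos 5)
There is a genuine gap, and it sits exactly where you flag it yourself. The statement you need per step is a \emph{local} decrement bound: contracting the set $Q$ found by the algorithm reduces the optimum by $2$. For structures with two leaves this follows from Observation~\ref{obs:observation1.3}, as you say. But for the remaining cases your argument does not go through: the appeal to the amortized cost $\tfrac{j}{j-1}$ of Lemma~\ref{lem:lemma1.2} is not valid here, because that lemma bounds the \emph{global} optimum against the number of components and says nothing about how much a single contraction decreases $OPT$; in particular it cannot show that a leafless simple cycle of size $i$ ``saves two.'' That this is genuinely hard is visible in the paper itself: when leafless simple cycles (and combined cycles) are treated in Lemma~\ref{lem:lemma1.5}, the paper extracts only \emph{one} saving per structure, and even that requires the entire second-Hamiltonian-cycle machinery (Lemma~\ref{lem:lemma2.5} together with Lemma~\ref{lem:lemma2.4}) plus a careful case analysis of how the optimal solution exits the cycle. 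Your one-leaf case and your ``non-interference'' principle (two structures of size $\le \lceil k/2\rceil$ cannot be jointly paid for, else a larger structure would have been found earlier) are the right intuition, but you explicitly leave them unformalized, and that formalization \emph{is} the content of the lemma; Observations~\ref{obs:observation1.1} and~\ref{obs:observation1.2} alone do not deliver the second unit of saving.

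It is also worth noting how the paper sidesteps most of what you are trying to prove: its proof of this lemma is given only for $k=8$, for which the range $4\le i\le\lceil k/2\rceil$ collapses to the single value $i=4$ and the sum $\sum_{j=4}^{i-1}b_{i,j}$ is empty. The argument then combines the priority order (a two-leaf structure would have been taken first) with the fact that at this point no contractible structure of size greater than $4$ exists, to conclude that a $4$-structure with at most one leaf still forces two savings. Your proposal aims at the general statement for all $i\le\lceil k/2\rceil$ and all $b_{i,j}$ terms (your observation that sets of size $j<i$ in this regime can only arise from the two-leaf priority is a correct and useful reading of the algorithm), but as written the cycle, combined-cycle, and one-leaf cases are asserted rather than proved, so the proposal does not yet constitute a proof.
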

\begin {proof}
Recall that we put $k=8$. Thus, $i=4$ and $\sum\limits_{j=4}^{i-1}{b_{i,j}} =0$. Let $G_{R^{'}_4}$ be the graph in which a contractible structure $S_4$ is found. 
We need to show that $S_4$ saves two to $OPT$ of the remain graph, that is $OPT(G_{R^{'}_4}) \geq  OPT(G_{R^{'}_4\cup Q_4}) +2$. If $S_4$ has two leaves, then, by Observation~\ref{obs:observation1.3}, $S_4$ saves two to $OPTG_{R^{'}_4}$. Therefore, $S_4$ has at most one leaf and there are two such contractible structures, and, since there is no contractible structures of size greater than $4$ in $G_{R_i}$ (and in particular in $G_{R^{'}_4}$), $S_i$ saves two to $OPT(G_{R^{'}_4})$. 
\end{proof}
\begin{lemma} \label{lem:lemma1.5}
For each $\left\lceil {k}/{2}\right\rceil < i < k$, we have 
$$OPT(G_{R_{i-1}}) \le OPT(G_{R_i}) - b_i - 2\sum\limits_{j=4}^{\left\lceil {k}/{2}\right\rceil}{b_{i,j}} - \sum\limits_{j=\left\lceil {k}/{2}\right\rceil + 1}^{i-1}{b_{i,j}}.$$ 
\end{lemma}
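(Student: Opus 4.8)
### Proof proposal for Lemma~\ref{lem:lemma1.5}

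The plan is to mimic the structure of the proof of Lemma~\ref{lem:lemma1.4}, but now tracking contractible structures of every admissible size $4 \le j \le i$ separately, since in iteration $k-i$ (for $i > \lceil k/2 \rceil$) the algorithm may contract structures strictly smaller than $i$ when they have leaves. First I would fix $k=8$ and an iteration index $i$ with $4 < i < 8$, let $G_{R'}$ be the graph at the moment a contracted set $Q$ is found, let $S$ be the contractible structure it induces, and argue separately for each of the four priority classes of Phase~2 how much $S$ saves relative to $OPT(G_{R'})$. The target bookkeeping is: every $i$-contracted set contributes at least $1$ to the decrease of $OPT$, every $j$-contracted set with $4 \le j \le \lceil k/2 \rceil$ contributes at least $2$, and every $j$-contracted set with $\lceil k/2\rceil < j \le i-1$ contributes at least $1$. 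Summing these contributions over all contracted sets found in the iteration and using Corollary~\ref{cor:corollary4.1} to absorb the leaf count gives the claimed inequality.

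The argument for each class: (1) A $j$-contracted set with at least two leaves, $j \ge 4$, saves at least $2$ directly by Observation~\ref{obs:observation1.3} — this covers all the ``$2$'' contributions for small $j$ and is more than enough for the ``$1$'' contributions. (2) A $j$-contracted set with one leaf: if $j \ge \lceil k/2\rceil$, the leaf saves $1$ by Observation~\ref{obs:observation1.3}, and I must squeeze a second unit out of the structure — here the size constraint (no contractible structure larger than $i$ survives) plus the fact that we took priority~2 only when priority~1 failed should force a second saving via Observation~\ref{obs:observation1.1}, exactly as in Lemma~\ref{lem:lemma1.4}; when $i \le \lceil k/2\rceil$ one instead has $j=i$ and only needs to produce $1$ saving, which the single leaf already gives. (3)–(4) An $i$-contracted set inducing a simple cycle or a union of combined cycles: here there is no leaf, so Observation~\ref{obs:observation1.3} gives nothing, and I would invoke the second-Hamiltonian-cycle machinery. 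Viewing the cycle structure together with the optimal solution's edges as $H \cup E$ where $H$ is the contractible cycle and $E$ comes from $OPT(G_{R'})$, Lemma~\ref{lem:lemma2.4} converts it to a bipartite instance, and then Lemma~\ref{lem:lemma24} (valid since $|V| \le i < 24$) yields a second $V$-Hamiltonian cycle; this reroute exhibits a strictly smaller optimal solution for $G_{R' \cup Q}$, giving the one unit of saving we need for an $i$-contracted set. For combined cycles one applies the forest corollaries (Corollary~\ref{cor:corollaryForest1}, Corollary~\ref{cor:corollaryForest2}) componentwise.

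I would then assemble the per-structure savings: in the $k-i$ iteration the algorithm finds $b_i$ structures of size $i$ (each saving $\ge 1$), $b_{i,j}$ structures of size $j$ for $4 \le j \le \lceil k/2\rceil$ (each saving $\ge 2$), and $b_{i,j}$ structures of size $j$ for $\lceil k/2\rceil < j \le i-1$ (each saving $\ge 1$), with the leaf-induced savings already counted inside $\mathcal{L}_i$. Telescoping via Corollary~\ref{cor:corollary4.1} across the single iteration gives
\[
OPT(G_{R_{i-1}}) \le OPT(G_{R_i}) - b_i - 2\sum_{j=4}^{\lceil k/2\rceil} b_{i,j} - \sum_{j=\lceil k/2\rceil+1}^{i-1} b_{i,j},
\]
as required.

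The main obstacle I anticipate is case (2) with a single leaf and $j$ only moderately large: extracting the \emph{second} unit of saving is not automatic from Observation~\ref{obs:observation1.3}, and one must carefully use the priority ordering (priority~1 failed, so no two-leaf structure exists anywhere) together with the size cap to force a node of the optimal solution that reaches only components of $\C$, invoking Observation~\ref{obs:observation1.1}. Getting this combinatorial case analysis airtight — and making sure it does not secretly need structures larger than $i$, which have already been eliminated in earlier iterations — is the delicate part; the cycle cases, by contrast, reduce cleanly to the second-Hamiltonian-cycle lemmas proved in Section~2.
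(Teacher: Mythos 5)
Your high-level bookkeeping (reduce the inequality to per-structure savings, with Observation~\ref{obs:observation1.3} handling structures with leaves, coefficient $2$ for the small structures and $1$ for the rest) matches the paper, but the two leafless cases, which are the heart of the lemma, are not actually proved in your proposal. For an $i$-contracted simple cycle you claim that the second Hamiltonian cycle ``exhibits a strictly smaller optimal solution for $G_{R'\cup Q}$'' and that this case ``reduces cleanly'' without secretly needing the absence of structures larger than $i$; this is backwards. The paper's argument is a proof by contradiction: assume $OPT(G_{R'_i})=OPT(G_{R'_i\cup Q_i})$, build the component-level graph induced by an optimal solution $R'^{*}_i$, and either some optimal node in a component of $\mathcal{C}$ reaches only components of $\mathcal{C}$ (then Observation~\ref{obs:observation1.1} gives the saving, a contradiction), or every component of $\mathcal{C}$ has an optimal node reaching outside $\mathcal{C}$; one then takes a minimum connected subgraph spanning $\mathcal{C}$, subdivides the edges joining two components of $\mathcal{C}$ so that $\mathcal{C}$ becomes an independent set (a step your ``$H\cup E$'' construction skips, and without which Lemma~\ref{lem:lemma2.4} does not apply), invokes Lemma~\ref{lem:lemma2.4} together with Lemma~\ref{lem:lemma2.5}, and in both resulting subcases (the second $\mathcal{C}$-Hamiltonian cycle uses an outside component, or it uses only $\mathcal{C}$ and the outside reach is appended) obtains a contractible structure of size at least $i+1$ in $G_{R'_i}$ --- contradicting exactly the invariant that no such structure survives at this point of the algorithm. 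So the ``no larger structure'' invariant is not a side issue for the cycle case; it is what the second Hamiltonian cycle is converted into, and your proposal never makes this conversion.

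The second genuine gap is the combined-cycles case (priority~4): applying Corollary~\ref{cor:corollaryForest1} and Corollary~\ref{cor:corollaryForest2} ``componentwise'' does not work, because those corollaries require $H$ to be a Hamiltonian cycle on $V$ with a bipartite forest attached, whereas a union of overlapping cycles is not a Hamiltonian cycle on its component set, so the Section~2 machinery does not apply to it directly. The paper instead gives a separate combinatorial analysis: it isolates a smallest cycle $\mathcal{C}_1$, shows (for $k=8$) that both $\mathcal{C}\setminus\mathcal{C}'$ and $\mathcal{C}'\setminus\mathcal{C}$ consist of exactly three components, and performs a three-way case analysis on the path $\delta(C_3,\mathcal{C}')$ by which the optimal solution connects $C_3$ to $\mathcal{C}'$, each case yielding either a unit saving via Observation~\ref{obs:observation1.1} or a contradiction in the form of a larger contractible structure or a leaf structure that would have had higher priority. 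Your proposal has no substitute for this analysis. (A smaller point: in your case (2) you try to extract two savings from every one-leaf structure with $j\ge\lceil k/2\rceil$, which the claimed inequality does not require for $j>\lceil k/2\rceil$; your final assembly uses the correct coefficients, so this is only a detour, not an error in the target.)
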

\begin {proof}
By Observation~\ref{obs:observation1.3}, we are left with providing a proof for contractible structures $S_i$ without leaves, where $\left\lceil {k}/{2}\right\rceil< i < k$. Let $G_{R^{'}_i}$ be the graph in which $S_i$ is found. 
First, we consider the case where $S_i$ is a simple cycle ($5 \le i \le 7$), and assume towards a contradiction that $OPT(G_{R^{'}_i})=OPT(G_{R^{'}_i\cup Q_i})$. 

Let $H=(\C,E_H )$ be the undirected version of $S_i$ in $CG_{R^{'}_i}$, and let $R^{'*}_i$ be an optimal solution for $G_{R^{'}_i}$. 
Let $G$ be a spanning subgraph of $CG_{R^{'}_i}$, where there is an edge in $G$ between $C_l \in CG_{R^{'}_i}$ and $C_j \in CG_{R^{'}_i}$ if there exists a node  
$v_l \in R^{'*}_i \cap C_l$ that can reach a node in $C_j$ via high transmission range. 

If there exist $C_l \in \C$ and $v \in R^{'*}_i\cap C_l$, such that $v$ can reach only components in $\C$, then by Observation~\ref{obs:observation1.1} we are done. 
Otherwise, let $G^{'}=(\C \cup U',E')$ be a minimum subgraph of $G$ in which all the components in $\C$ are connected, where $U'$ and $E'$ are sets of components and edges in $CG_{R^{'}_i}$, respectively.
Let $U$ be an empty set of nodes. 
For each edge $\{C_l,C_j\}$ of $G^{'}$, such that $C_l,C_j \in \C$, we add a new node $u_{l,j}$ to $U$ and update the set 
$E^{'}$ to be $E^{'}\cup\{ \{C_l,u_{l,j}\},\{u_{l,j},C_j\}\} \setminus \{C_l,C_j\}$.
The obtained graph $G^{'}=(\C \cup U' \cup U,E')$  is a connected graph where $\C$ is an independent set. 
By Lemma~\ref{lem:lemma2.5} and Lemma~\ref{lem:lemma2.4}, the graph $(\C \cup U \cup U',E'\cup E_H)$ contains a \textsc{Sec}-$\C$-\textsc{HamCycle} $H'$. If $H'$ contains nodes from $U'$, then $H'$ admits   
a contracted structure of size at least $i+1$ in $G_{R^{'}_i}$, in contradiction.
Otherwise, $H'$ contains only the nodes of $\C$ and nodes from $U$. Let $H_{\C}$ be the cycle obtained from $H'$ by replacing each pair of consecutive edges $\{C_l,u_{l,j}\},\{u_{l,j},C_j\}$, where $u_{l,j} \in U$ and $C_l, C_j \in \C$, by the edge $\{C_l, C_j\}$. Recall that $\{C_l, C_j\}$ is an edge in $G_{R^{'}_i}$ and for each $C_l \in \C$,  there exists a node  $v \in R^{'*}_i\cap C_l$,  such that $v$ can reach a component $C' \notin \C$ (i.e., $C' \in U'$). Thus, the cycle $H_{\C}$ with the component $C'$ is a contracted structure of size at least $i+1$ in $G_{R^{'}_i}$, in contradiction. 

We now consider contractible structure $S_i$ ($i < 8$) that is neither a simple cycle nor a structure with leaves, that is $S_i$ is a contractible structure of combined (overlapping) simple cycles $\{\C_1, \C_2, \dots, \C_m \}$. 
W.l.o.g., let $\C_1 \subset S_i$ be a simple cycle such that $\{\C_2, \dots, \C_m \}$  is a  contractible structure.
Let $\C= \C_1$ and $\C' = \{\C_2, \dots, \C_m \}$.
Moreover, let $(C_1, \dots, C_t )$  be the components in $\C \setminus  \C' $, such that there exists a component in $\C'$ that has a directed edge to $C_1$ and there is a directed edge from $C_t$ to a component in $\C'$,
 see Figure~\ref{fig:smallestCycleinCombin} for illustration. 

\begin{figure}[h]
   \centering
       \includegraphics[width=0.55\textwidth]{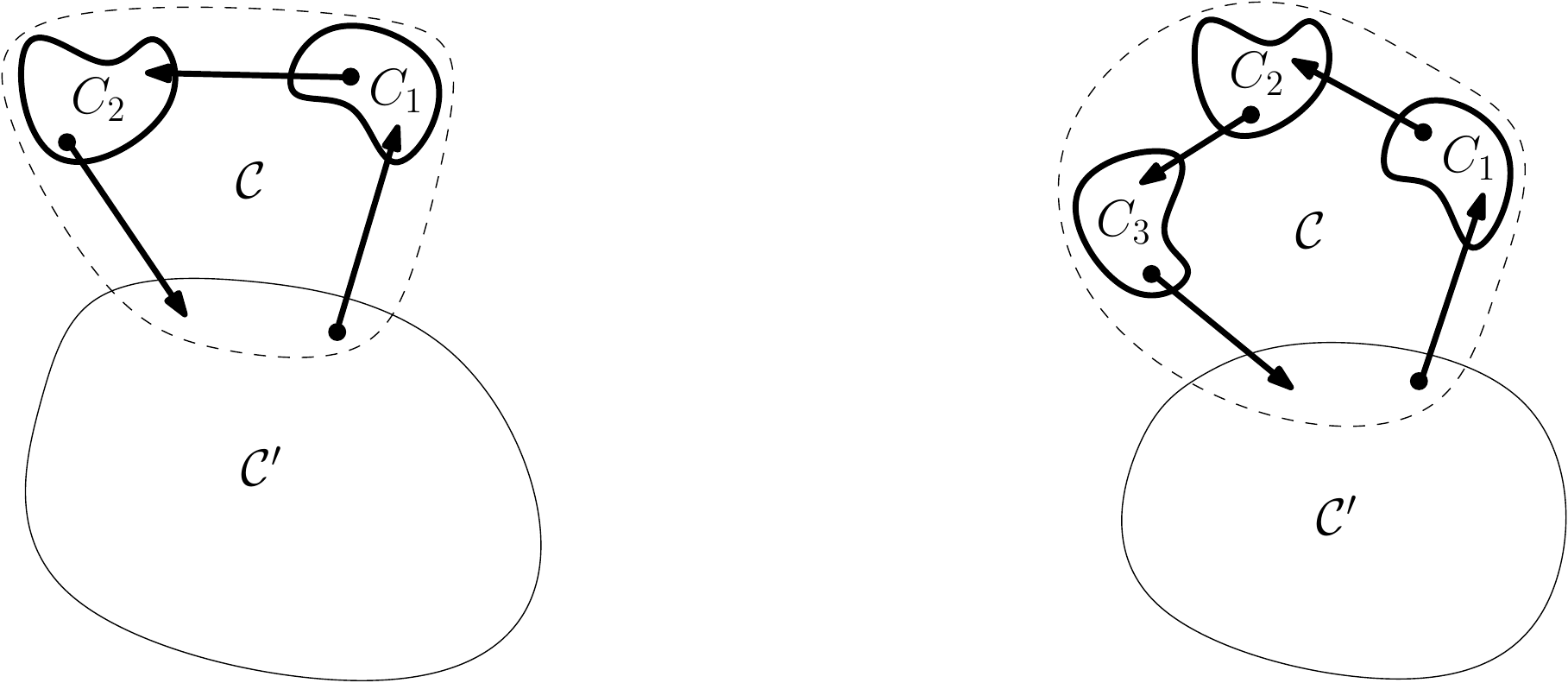}
   \caption{Ilustration of two  contractible structures, where on the left a contractible structure with a leaf, and on the right a contractible structure of combined cycles. }
   \label{fig:smallestCycleinCombin}
\end{figure}

Notice that, if $1 \leq t \leq 2$, then $C_t$ is a leaf, thus $S_i$ is a contractible structure with a leaf. 
However, contractible structures with a leaf have already been considered, 
therefore $t \ge 3$.  Moreover, since $i < 8$ and $t \geq 3$, $\C'$ is also a simple cycle, 
thus the number of components in $(\C' \setminus \C)$ is at least 3 (otherwise, $S_i$ is a contractible structure with a leaf). Therefore, 
the number of components in $(\C \setminus \C')$ and  $(\C' \setminus \C)$ is exactly  $3$ (i.e. $t=3)$.

For $C_i \in C$,  let $\delta(C_i,\C')$  be the path that connects $C_i$ to $\C'$ in the optimal solution. 
Then, $\delta(C_3,\C')$ is the path that connects $C_3$ to $\C'$ in the optimal solution. 
Consider the three cases of  $\delta(C_3,\C')$.


\begin{itemize}
 \item  $\delta(C_3,\C')$ does not pass through   $C_1$ nor trough  $C_2$,  then it must go directly to a component in $\C'$ (otherwise, we have a contractible structure of size greater than $i$), thus, by Observation~\ref{obs:observation1.1},  it saves one to $OPT(G_{R_i})$.

	\item  $\delta(C_3,\C')$ passes through $C_2$.
	$C_3$ can not go though another component $C_x \notin S_i$ to $C_2$ since in this case, by replacing the edge $(C_2, C_3)$ with the reverse path from  $(C_3, C_2)$ that goes though $C_x$,
	 we obtain a contractible structure of size greater than $i$. 
	Moreover, $C_3$ can not go to $C_2$ and to another component $C_x \notin S_i$, since the reverse order of $\C$ with $C_x$ admits a contractible structure with a leaf of size at least $5$. 
	However, contractible structures with a leaf have already been considered.
	Thus, by Observation~\ref{obs:observation1.1}, we save one to $OPT(G_{R_i})$.  

	\item  $\delta(C_3,\C')$ passes through $C_1$. Thus, there is a path $\delta_{C_3,C_1}$ from $C_1$ to $C_3$ that does not include components of $S_i \setminus \{C_1, C_3 \}$. Denote by  
	$ \delta_{\overleftarrow{C_3,C_1} }$ 	the reverse path of $\delta_{C_3,C_1}$. 
	Consider  $\delta(C_2,\C')$, following the same ideas of the two aforementioned cases, $\delta(C_2,\C')$ can not pass though neither $C_1$, $C_3$, nor directly to a component in $\C'$.
	Thus, $\delta(C_2,\C')$ goes through another component $C_x \notin S_i$. Therefore, by replacing the path $(C_1,C_2, C_3, C)$ with the path $ \delta_{\overleftarrow{C_3,C_1} }$, the edge $(C_3,C_2)$, and  $\delta(C_2,\C')$,  where $C \in \C'$ is a reachable component from $C_3$ in $S_i$, we obtain a contractible structure of size greater than $i$.
\end{itemize}



\end{proof} 

\begin{theorem}\label{theo:theorem1approxRatio}
The aforementioned range assignment algorithm is an ${11}/{7}$-approximation algorithm for the dual power assignment problem.
\end{theorem}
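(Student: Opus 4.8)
The plan is to combine the structural lemmas proved for the second Hamiltonian cycle with the accounting inequalities from Lemmas~\ref{lem:lemma1.1}, \ref{lem:lemma1.2}, \ref{lem:lemma1.4}, and \ref{lem:lemma1.5}, together with an exact polynomial-time solution for the residual graph $G_{R_3}$ (whose contracted sets all have size at most $3$). Concretely, I would first set $k=8$, so that Phase~1 removes all contractible structures of size at least $8$, Phase~2 runs over $i=7,6,5,4$, and Phase~3 optimally solves the leftover instance. The goal is to bound $|R| = |R_3| + OPT(G_{R_3})$ in terms of $OPT = OPT(G_{R_4}) + (\text{nodes added before Phase 2})$... more precisely, to track how many high-power nodes the algorithm spends per ``unit of $OPT$'' it consumes in each phase, and show the total never exceeds $\tfrac{11}{7}\,OPT$.

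The core of the argument is a phase-by-phase amortization. In Phase~1, each $j$-contracted set ($j\ge 8$) uses $j$ nodes while reducing the number of components by $j-1$; by Lemma~\ref{lem:lemma1.2} each such reduction is ``worth'' at least $\tfrac{j}{j-1}$ to $OPT$, and $\tfrac{j}{j-1}\le \tfrac{8}{7}$, so the per-node cost there is at most $\tfrac{8}{7} < \tfrac{11}{7}$. In Phase~2, I would use Lemma~\ref{lem:lemma1.4} for the iterations $i=4$ (the ``large'' side $i\le\lceil k/2\rceil$) and Lemma~\ref{lem:lemma1.5} for $i=5,6,7$ (the side $i>\lceil k/2\rceil$): these say that each $i$-contracted set found in these iterations decreases $OPT$ of the residual graph by at least $2$ (for $i=4$) or at least $1$ (for $i\ge 5$), while spending $i$ nodes. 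Translating Lemma~\ref{lem:lemma1.1} (which bookkeeps $n_i$) with Lemma~\ref{lem:lemma1.2} gives, for each iteration, the needed inequality relating nodes spent, drop in $OPT$, and drop in component count. Summing the telescoping inequalities $OPT(G_{R_{i-1}}) \le OPT(G_{R_i}) - (\text{savings}_i)$ from $i=7$ down to $i=4$, and then over Phase~1, and finally adding $OPT(G_{R_3})$ for Phase~3, I would collect all the $b_i$, $b_{i,j}$ terms and verify that in every case the coefficient of each $b$ in $|R|$ is at most $\tfrac{11}{7}$ times its coefficient in $OPT$. The role of the second-Hamiltonian-cycle machinery (Lemmas~\ref{lem:lemma2.4}, \ref{lem:lemma2.5}) is precisely to justify Lemma~\ref{lem:lemma1.5}: it rules out the bad case in which an optimal solution restricted to a simple-cycle contractible structure of size $i\le 7$ fails to reveal a contractible structure of size $i+1$, which is what would have been needed for the algorithm to have acted in an earlier, cheaper iteration.

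The main obstacle I expect is the careful case analysis of Phase~2's residual graph $G_{R_3}$ and the proof that it can be solved optimally in polynomial time (deferred to Section~\ref{sec:OptG3}): once all contractible structures have size $\le 3$, the components graph has a rigid enough structure (essentially a collection of paths and at most ternary branch points, in the spirit of the bipartite-forest reductions of Section~2) that an exact algorithm is possible, but spelling out that this residual optimum is itself bounded appropriately against $OPT(G_{R_3})$ — i.e. that Phase~3 does not blow the ratio — requires knowing that $OPT(G_{R_3}) \le \tfrac{3}{2}(n_3-1)$ or a comparable bound, and matching it against the $\tfrac{3}{2}(n_3-1)$-type lower bound from Lemma~\ref{lem:lemma1.2}. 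The remaining risk is purely arithmetic: one must check that the weakest link among all the phase-wise per-node costs — Phase~1 at $\tfrac{8}{7}$, Phase~2 iterations at ratios like $\tfrac{i}{i-1}\cdot\tfrac{i-1}{2}$ or $\tfrac{i}{i-1}$ adjusted by the leaf savings, and Phase~3 at $\tfrac{3}{2}$ against a $\tfrac{4}{3}$ lower bound — indeed yields $\tfrac{11}{7}$ as the governing maximum, and that the choice $k=8$ is what balances these. I would finish by remarking that larger $k$ (and the full Conjecture~\ref{con:conjectureMain} in place of Lemma~\ref{lem:lemma2.5}) would push the ratio down further, consistent with the $3/2$ claim in the introduction.
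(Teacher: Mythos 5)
Your proposal follows essentially the same route as the paper's proof: fix $k=8$, bound $|R|$ by the Phase-1 amortized cost $\frac{k}{k-1}(n-n_{k-1})$ plus the $i\cdot b_i$ and $j\cdot b_{i,j}$ spending plus $OPT(G_{R_3})$, substitute Lemma~\ref{lem:lemma1.1}, telescope the savings from Lemmas~\ref{lem:lemma1.4} and~\ref{lem:lemma1.5}, verify the collected $b$-coefficients are nonpositive, and finish with $OPT(G_{R_\emptyset})\ge n$ and $OPT(G_{R_\emptyset})\ge OPT(G_{R_{k-1}})$ to obtain $\frac{3k-2}{2(k-1)}=\frac{11}{7}$. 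The one correction to your risk assessment: Phase~3 does not require a bound of the form $OPT(G_{R_3})\le\frac{3}{2}(n_3-1)$; the paper only uses the upper bound $OPT(G_{R_3})\le 2(n_3-1)$ of Lemma~\ref{lem:lemma1.2} (equivalently $n_3\ge OPT(G_{R_3})/2$) together with the global lower bound $OPT\ge n$, so the accounting is a single global coefficient comparison rather than a per-phase per-node ratio check.
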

\begin{proof}
Set $k$ to be $8$ and let $n$ be the number of components of $CG_{\emptyset}$. By Lemma~\ref{lem:lemma1.2}, $\frac{k}{k-1}$ is the amortized cost of each contracted component of a $k$-contracted set. Then, according to the algorithm description,
\begin{align*}
|R| & \le \frac{k}{k-1}(n-n_{k-1})+\sum\limits_{i=4}^{k-1} i \cdot b_i + \sum\limits_{i=5}^{k-1} {\sum\limits_{j=4}^{i-1}{j \cdot b_{i,j} }}+ OPT(G_{R_3}) \,.
\end{align*} 
By Lemma~\ref{lem:lemma1.1}, $n_{k-1}= n_3 + \sum\limits_{i=4}^{k-1} (i-1)\cdot b_i + \sum\limits_{i=5}^{k-1} {\sum\limits_{j=4}^{i-1}{(j-1) \cdot b_{i,j}}}$, then
\begin{align*}
|R|  \le \frac{k}{k-1}&\left(n- n_3 - \sum\limits_{i=4}^{k-1} (i-1)\cdot b_i - \sum\limits_{i=5}^{k-1} {\sum\limits_{j=4}^{i-1}{(j-1) \cdot b_{i,j}}}\right) \,\\
&+\sum\limits_{i=4}^{k-1} i \cdot b_i + \sum\limits_{i=5}^{k-1} {\sum\limits_{j=4}^{i-1}{j\cdot b_{i,j}}} + OPT(G_{R_3}) \,\\
= \frac{1}{k-1}&\left(k\cdot n-k\cdot n_3 + \sum\limits_{i=4}^{k-1} (k-i)\cdot b_i
+ \sum\limits_{i=5}^{k-1} {\sum\limits_{j=4}^{i-1}{(k-j)\cdot b_{i,j}}}+ (k-1)\cdot OPT(G_{R_3})\right)\, ,
\end{align*}
and, by Lemma~\ref{lem:lemma1.2}, $OPT(G_{R_3}) \le 2(n_3 - 1)$, then  $ n_3 \geq \frac{OPT(G_{R_3})}{2} $, and we have,
\begin{align*}
|R| &\leq \frac{1}{k-1}\left(k\cdot n + \sum\limits_{i=4}^{k-1} (k-i)\cdot b_i
+ \sum\limits_{i=5}^{k-1} {\sum\limits_{j=4}^{i-1}{(k-j)\cdot b_{i,j}}}+  (k-1 - \frac{k}{2}) \cdot OPT(G_{R_3}) \right)\, \\
&= \frac{1}{k-1} \cdot \left(k\cdot n + \sum\limits_{i=4}^{k-1} (k-i)\cdot b_i + \sum\limits_{i=5}^{k-1} {\sum\limits_{j=4}^{i-1}{(k-j)\cdot b_{i,j}}} \right) + \frac{k-2}{2(k-1)} \cdot OPT(G_{R_3}) \, ,
\end{align*}
and, by Lemma~\ref{lem:lemma1.4} and Lemma~\ref{lem:lemma1.5},\\
\begin{eqnarray*}
OPT(G_{R_3}) \le & OPT(G_{R_{k-1}}) - 2\sum\limits_{i=4}^{\left\lceil {k}/{2}\right\rceil} b_i -2\sum\limits_{i=5}^{\left\lceil {k}/{2}\right\rceil}\sum\limits_{j=4}^{i-1}{b_{i,j}} - \sum\limits_{i=\left\lceil {k}/{2}\right\rceil+1}^{k-1} b_i \\
& - 2\sum\limits_{i=\left\lceil {k}/{2}\right\rceil+1}^{k-1}\sum\limits_{j=4}^{\left\lceil {k}/{2}\right\rceil}{b_{i,j}} - \sum\limits_{i=\left\lceil {k}/{2}\right\rceil+1}^{k-1}\sum\limits_{j=\left\lceil {k}/{2}\right\rceil + 1}^{i-1}{b_{i,j}},
\end{eqnarray*}
then,
\begin{align*}
|R| \le & \frac{1}{k-1} \cdot \left(k\cdot n + \sum\limits_{i=4}^{k-1} (k-i)\cdot b_i + \sum\limits_{i=5}^{k-1} {\sum\limits_{j=4}^{i-1}{(k-j) \cdot b_{i,j}}}\right)\\
& \  \qquad + \frac{k-2}{2(k-1)} \cdot \left( OPT(G_{R_{k-1}})- 2\sum\limits_{i=4}^{\left\lceil {k}/{2}\right\rceil} b_i  -2\sum\limits_{i=5}^{\left\lceil {k}/{2}\right\rceil}\sum_{j=4}^{i-1}{b_{i,j}} \right.\nonumber\\ 
& \qquad \left. {} - \sum\limits_{i=\left\lceil {k}/{2}\right\rceil+1}^{k-1} b_i - 2\sum\limits_{i=\left\lceil {k}/{2}\right\rceil+1}^{k-1}\sum\limits_{j=4}^{\left\lceil {k}/{2}\right\rceil}{b_{i,j}} - \sum\limits_{i=\left\lceil {k}/{2}\right\rceil+1}^{k-1}\sum\limits_{j=\left\lceil {k}/{2}\right\rceil + 1}^{i-1}{b_{i,j}} \right).
\end{align*}
Since 
$$\frac{1}{k-1} \cdot \sum\limits_{i=4}^{k-1} (k-i)\cdot b_i - 
\frac{k-2}{2(k-1)} \cdot \left( 2\sum\limits_{i=4}^{\left\lceil {k}/{2}\right\rceil} b_i  
                              + \sum\limits_{i=\left\lceil {k}/{2}\right\rceil+1}^{k-1} b_i \right) \le 0 $$
and 
%
\begin{align*}
\frac{1}{k-1} \cdot \left(  \sum\limits_{i=5}^{k-1} {\sum\limits_{j=4}^{i-1}{(k-j) \cdot b_{i,j}}} \right) & - 
    \ \frac{k-2}{2(k-1)} \cdot \left(  2\sum\limits_{i=5}^{\left\lceil {k}/{2}\right\rceil}\sum_{j=4}^{i-1}{b_{i,j}} \right.\nonumber\\  &  \left. {} 
  +   2\sum\limits_{i=\left\lceil {k}/{2}\right\rceil+1}^{k-1}\sum\limits_{j=4}^{\left\lceil {k}/{2}\right\rceil}{b_{i,j}} + \sum\limits_{i=\left\lceil {k}/{2}\right\rceil+1}^{k-1}\sum\limits_{j=\left\lceil {k}/{2}\right\rceil + 1}^{i-1}{b_{i,j}} \right)  \le 0, 
\end{align*}
we have    
\old{
\begin{align*}
= & \frac{1}{k-1}\cdot \left( k\cdot n + \frac{k-2}{2}OPT(G_{R_{k-1}}) + \sum\limits_{i=4}^{k-1} (k-i)\cdot b_i - \sum\limits_{i=4}^{\left\lceil {k}/{2}\right\rceil} (k-2)\cdot b_i\, \right.\nonumber\\ 
& \qquad \left. {} - \frac{1}{2} \sum\limits_{i=\left\lceil {k}/{2}\right\rceil+1}^{k-1} (k-2)\cdot b_i + \sum\limits_{i=5}^{k-1} {\sum\limits_{j=4}^{i-1}{(k-j) \cdot b_{i,j}}} - \sum\limits_{i=5}^{\left\lceil {k}/{2}\right\rceil}\sum\limits_{j=4}^{i-1}{(k-2) b_{i,j}} \,\right.\nonumber\\ 
& \qquad \left. {}
- \sum\limits_{i=\left\lceil {k}/{2}\right\rceil+1}^{k-1}\sum\limits_{j=4}^{\left\lceil {k}/{2}\right\rceil}{(k-2)b_{i,j}} -\frac{1}{2}\sum\limits_{i=\left\lceil {k}/{2}\right\rceil+1}^{k-1}\sum\limits_{j=\left\lceil {k}/{2}\right\rceil + 1}^{i-1}{(k-2) b_{i,j}} \right)\\
 = & \frac{1}{k-1}\cdot \left( k\cdot n + \frac{k-2}{2}OPT(G_{R_{k-1}}) + \sum\limits_{i=4}^{\left\lceil {k}/{2}\right\rceil} (2-i)\cdot b_i + \sum\limits_{i=\left\lceil {k}/{2}\right\rceil+1}^{k-1} (\frac{k}{2}+1-i)\cdot b_i\, \right.\nonumber\\
 &\qquad \left. {} +\sum\limits_{i=5}^{\left\lceil {k}/{2}\right\rceil}\sum\limits_{j=4}^{i-1}{(2-j) b_{i,j}} + \sum\limits_{i=\left\lceil {k}/{2}\right\rceil+1}^{k-1}\sum\limits_{j=4}^{\left\lceil {k}/{2}\right\rceil}{(2-j)b_{i,j}} 
\,\right.\nonumber\\ 
& \qquad \left. {}
+ \sum\limits_{i=\left\lceil {k}/{2}\right\rceil+1}^{k-1}\sum\limits_{j=\left\lceil {k}/{2}\right\rceil+1}^{i-1}{(\frac{k}{2}+1-j)b_{i,j}}\right).
\end{align*}
Notice that all $b_i$ and $b_{i,j}$'s coefficients are negative, thus we have
}
$$|R| \le \frac{k}{k-1} \cdot n + \frac{k-2}{2(k-1)} \cdot OPT(G_{R_{k-1}}).$$

Finally, since $OPT(G_{R_{\emptyset}}) \ge n$ and $OPT(G_{R_{\emptyset}}) \ge OPT(G_{R_{k-1}})$, we have
$$|R| \le \frac{3k-2}{2(k-1)} \cdot OPT(G_{R_{\emptyset}}), $$  
thus, for $k=8$, we have $$ |R| \le \frac{11}{7}\cdot OPT.$$
\end{proof}

\subsection{Optimal solution for $G_{R_3}$}\label{sec:OptG3}

Given a set $P$ of $n$ wireless nodes in the plane, two transmission ranges $r_L$ and $r_H$, and $R_3 \subseteq P$, such that  $G_{R_3}$ does not contain a contracted set of size greater than $3$.  Then finding a minimum set $R^*_3\subseteq P$, such that the induced communication graph $G_{R_3\cup R^*_3}$ is strongly connected can be done in polynomial time.

Our algorithm is based on the idea of Carmi and Katz~\cite{Carmi07} and works as follows. Set $R=\emptyset$ and compute the induced communication graph $G_{R_3}$ by assigning $r_H$ to each node in $R_3$ and assigning $r_L$ to each node in $P\setminus R_3$. Next, while $G_{R_3\cup R}$ contains a $3$-contracted set forming a simple cycle, find such a contracted set, and add its $3$ nodes to $R$. When $G_{R_3\cup R}$ does not contain a $3$-contracted set forming a simple cycle, it induces a tree of well-separated $j$-contracted sets, we solve the subproblem in each strongly connected component of $G_{R_3\cup R}$ independently, and add to $R$ the nodes that are in the solution.

Notice that the resulting $CG_{R_3\cup R}$ has one component, and, therefore, $G_{R_3\cup R}$ is strongly connected. In the following, we prove that this algorithm solves the problem optimally, i.e., $|R|=|R^*_3|$.

Let $\C=\{C_1,C_2,C_3\}$ be a set of $3$ components in $CG_{R_3\cup R}$ and let $Q$ be a $3$-contracted set of $\C$, such that the $3$-contractible structure induced by $Q$ forms a simple cycle. The following two observations follow from the fact that the graph $CG_{R_3\cup R}$ does not contain a contracted set of size greater than $3$.

\begin{observation}\label{lem:lemma1.6}
By adding the nodes in $Q$ to $R$ the problem is separated into at least three independent subproblems. I.e., by removing the components in $\C$ and the edges incident to them from $CG_{R_3\cup R}$, the graph $CG_{R_3\cup R}$ remains with at least three connected components.
\end{observation}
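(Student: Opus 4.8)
The plan is to argue by contradiction, using the stated fact that $CG_{R_3\cup R}$ contains no contracted set of size greater than $3$. Suppose that deleting $C_1,C_2,C_3$ together with their incident edges from $CG_{R_3\cup R}$ leaves at most two connected components. Since the graph obtained by assigning $r_H$ to every node of $P$ is strongly connected, $CG_{R_3\cup R}$ is itself connected; hence every remaining connected component is nonempty and is joined to $\C$ by at least one edge, and, because the $3$-contractible structure induced by $Q$ is a simple cycle $C_1\to C_2\to C_3\to C_1$, the three components already induce a triangle in $CG_{R_3\cup R}$ while each $v_i\in Q\cap C_i$ reaches a node of $C_{i+1}$ (indices mod $3$). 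From this data I would manufacture a contracted set of size at least $4$, contradicting the fact and thereby forcing at least three remaining components.

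First I would treat the case in which some remaining component $D$ is $CG_{R_3\cup R}$-adjacent to at least two of $C_1,C_2,C_3$, say to $C_1$ and $C_2$. Since $D$ is connected, choose inside $D$ a simple path of components $D^{(1)},\dots,D^{(t)}$, each consecutive pair $CG$-adjacent, with $D^{(1)}$ adjacent to $C_1$ and $D^{(t)}$ adjacent to $C_2$. Now pick one node per component: in $C_1$ a node witnessing the edge to $D^{(1)}$, in each $D^{(\ell)}$ a node witnessing the edge to $D^{(\ell+1)}$, in $D^{(t)}$ a node witnessing the edge to $C_2$, and, keeping the original choices, $v_2$ in $C_2$ and $v_3$ in $C_3$. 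Using the strong connectivity inside each individual component to ``relay'' between the point where a high-power hop lands and the designated high-power node, one checks that $C_1,D^{(1)},\dots,D^{(t)},C_2,C_3$ all lie in one strongly connected component of the induced graph; the chosen nodes therefore form a contracted set of size $3+t\ge 4$, a contradiction. The subcases where the shared pair of attachments is $\{C_2,C_3\}$ or $\{C_1,C_3\}$ are identical after relabeling.

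The remaining case is when each of the (at most two) remaining components attaches to exactly one of $C_1,C_2,C_3$; here the task is to show such a configuration cannot occur. For a pendant component $D$ attached only through $C_i$, a node of $C_i$ realizing the $CG$-edge to $D$ reaches into $D$ and (via a suitable choice of node in $D$) back into $C_i$; the point is that the hypothesis that the $3$-contractible structure is a \emph{simple} cycle (rather than an arbitrary contractible structure) forces a second route into $C_i$ through the triangle, so one can again splice $D$ into the cyclic reachability, or else exhibit a larger contracted set hidden inside $D$ itself. I expect this last step to be the main obstacle, and for two coupled reasons: a contracted set may designate only one high-power node per component, so every component that enters the enlarged structure must possess a single node simultaneously discharging all of its reachability duties (its original cyclic edge together with the detour), and one must genuinely rule out the degenerate pendant configurations rather than merely reroute in them. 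It is precisely here that the ``simple cycle'' hypothesis and the invariant that no contracted set of size greater than $3$ exists are doing the work.
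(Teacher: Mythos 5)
Your first case is sound and is essentially the argument the paper leaves implicit: if some connected piece $D$ of $CG_{R_3\cup R}$ minus $\C$ is adjacent to two of $C_1,C_2,C_3$, then choosing one witness node per component along a simple path through $D$ and keeping the two untouched nodes of $Q$, and relaying inside each strongly connected component by low power, yields a directed cycle through $3+t\ge 4$ components, i.e.\ a contracted set of size greater than $3$, contradicting the standing fact about $G_{R_3\cup R}$. The paper offers no more than this observation-level justification, so up to that point you are on the intended track.

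The genuine gap is your second case, which you do not prove but only flag as ``the main obstacle'' --- and the obstacle you identify is real, so the proposal as written establishes strictly less than the statement. When every remaining piece attaches to exactly one member of $\C$, the splicing device gives no contradiction: to enter and to leave a pendant piece $D$ attached only to $C_i$, both high-power hops must pass through $C_i$, and a contracted set designates a single high-power node in $C_i$ (and a single one in $D$); unless one node of $C_i$ simultaneously lies within $r_H$ of $D$ and of another component of $\C$, no contracted set of size $\ge 4$ is forced, so the hypothesis ``no contracted set of size greater than $3$'' is not visibly violated by a configuration with only one or two pendant pieces. Consequently, your case analysis only delivers that each remaining connected component is adjacent to exactly one $C_i$ (independence of the subproblems hanging off the cycle), not the stated count of at least three remaining components; closing that gap needs an additional argument (for instance, one that accounts for why each $C_i$ must carry its own attached branch in the situation where the observation is invoked), which your proposal does not supply.
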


\begin{observation}\label{obs:observation1.7}
There exists an optimal solution $R^*_3$ for $G_{R_3}$ that contains the nodes in $Q$.
\end{observation}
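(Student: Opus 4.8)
The plan is to argue by exchange: working within the induction that establishes optimality of the phase~3 greedy (so that the already-chosen set $R$ may be assumed contained in the optimal solution under consideration), I start from an arbitrary optimal solution $R^*_3$ for $G_{R_3}$ and transform it into an optimal solution of the same size that contains the three nodes of $Q$. Two structural facts drive the argument. First, each $C_i$ is a strongly connected component of $G_{R_3\cup R}$, so under the current assignment every node of $C_i$ can already reach $v_i$ and be reached from $v_i$ inside $C_i$ at no cost; consequently, connecting $C_i$ to the rest of the network only ever requires \emph{outgoing} high-transmission edges from $C_i$, whereas an incoming connection may enter $C_i$ at any node for free. Second, by Definition~\ref{def:definition3} the $3$-contractible structure induced by $Q$ is a simple cycle, so assigning $r_H$ to $v_1,v_2,v_3$ places $C_1,C_2,C_3$ in a single strongly connected component via the directed cycle $C_1\to C_2\to C_3\to C_1$.

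First I would exploit the separation property. By Observation~\ref{lem:lemma1.6}, deleting $\C=\{C_1,C_2,C_3\}$ together with its incident edges from $CG_{R_3\cup R}$ leaves at least three connected pieces $P_1,\dots,P_m$ with $m\ge 3$, each adjacent only to components of $\C$. Hence in any strongly connected $G_{R_3\cup R^*_3}$ every piece must both reach and be reached through the gateways $\C$, and the three gateway components must themselves lie in a common strongly connected component. I would show that meeting all of these demands forces $R^*_3$ to spend at least three high-transmission assignments ``at the cut'', i.e. at least three nodes whose high-power edges cross out of $C_1\cup C_2\cup C_3$: the three gateways must be woven into a common cycle, and the separation into $m\ge 3$ independent pieces prevents those assignments from being shared with the internal cost of any single piece. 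The same separation lets me decompose the cost of $R^*_3$ as the cost incurred at the gateway triangle plus the (mutually independent) internal cost of each piece.

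The exchange step then replaces the at-least-three gateway assignments of $R^*_3$ by exactly $v_1,v_2,v_3\in Q$: by the second fact these three restore strong connectivity among $C_1,C_2,C_3$, and by the first fact every piece whose incoming or outgoing connection passed through some node of $C_i$ can be re-routed through $v_i$ inside $C_i$ at no extra cost, while the internal solution of each piece is left untouched; the result is a strongly connected solution of size at most $|R^*_3|$ that contains $Q$. The main obstacle I anticipate is exactly this re-routing under a tight budget: I must verify that the three nodes of $Q$ can \emph{simultaneously} carry the gateway cycle and feed all $m\ge 3$ pieces with correctly oriented in- and out-connections, so that no piece becomes unreachable after the swap. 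Here I would lean on the hypothesis that $G_{R_3\cup R}$ contains no contracted set of size greater than $3$ (Definition~\ref{def:definition2}), which guarantees that the local structure around $\C$ is exactly a simple $3$-cycle with pieces hanging off the gateways --- nothing richer --- so that the orientation demands of the pieces match what $v_1,v_2,v_3$ provide and the lower bound of three is met with equality.
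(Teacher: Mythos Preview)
The paper does not supply a proof for this observation; it only asserts that it follows from the absence of contracted sets of size greater than~$3$. Your exchange outline is the natural way to fill the gap, and invoking Observation~\ref{lem:lemma1.6} for the separation is right. But the step you yourself flag as the ``main obstacle'' is not resolved, and the resolution you sketch is incorrect: the nodes $v_1,v_2,v_3$ do \emph{not} feed the pieces. Each $v_i$ reaches only $C_{i+1}$ (if it also reached some $D\notin\C$, then $v_i$ together with the other two $v$'s and the symmetric partner $d\in D$ would already form a $4$-contracted set), so swapping in $Q$ supplies only the triangle and contributes nothing in the $C_i\to P_j$ direction. Consequently, if the ``at-least-three gateway assignments'' you remove include any node of $R^*_3$ that was providing an outgoing edge from some $C_i$ into a piece, the swap destroys strong connectivity. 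Your closing sentence (``the orientation demands of the pieces match what $v_1,v_2,v_3$ provide'') is therefore where the argument breaks.

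The missing idea is that the no-$4$-contracted-set hypothesis forces the high-power nodes of $R^*_3$ lying in $\C$ to split cleanly into two disjoint kinds: \emph{internal} nodes, reaching only other $C_j\in\C$, and \emph{external} nodes, reaching only components outside~$\C$. Indeed, if some $w\in C_i$ reached both a $C_j$ and a $D\notin\C$, then $w$, two of the $v$'s, and a node of $D$ would form a $4$-contracted set. The same hypothesis shows every piece is adjacent in $CG_{R_3\cup R}$ to exactly one gateway $C_i$ (a piece touching two gateways would again yield a cycle of length $\ge 4$ through $\C$), so no piece can bridge two gateways; hence each $C_i$ must contain at least one internal node, giving at least three internal nodes overall. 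The correct exchange removes only the internal nodes, inserts $v_1,v_2,v_3$, and \emph{keeps} every external node so that all $C_i\to P_j$ links survive. This yields a feasible solution of size at most $|R^*_3|$ that contains~$Q$.
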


When $G_{R_3\cup R}$ does not contain a $3$-contracted set forming a simple cycle, it induces a tree of well-separated $j$-contracted sets. Thus, assigning a high transmission range to a node in one strongly connected component cannot result in forcing an assignment of a high transmission range to a node in another strongly connected component. Therefore, each strongly connected component of $G_{R_3\cup R}$ is an independent subproblem. Each node in a strongly connected component of $G_{R_3\cup R}$ can reach at most two other strongly connected components via high transmission range. Hence, each strongly connected component is an instance of the \emph{2 set cover} problem, which can be solved optimally.

Thus, we conclude that the algorithm described above solves the problem optimally.


\section{Application of a second Hamiltonian cycle to SCSS}\label{sec:conjecture}
In this section, we show that the correctness of Conjecture~\ref{con:conjecture2.1} implies that the approximation algorithm of Khuller et al.~\cite{Young02}, which achieves a performance guarantee of  $\approx 1.61$ for the SCSS problem, is a $3/2$-approximation algorithm in symmetric unweighted digraphs. This matches the best known approximation ratio for this problem, achieved by Vetta~\cite{Vetta01}. Even though Vetta's result is very novel, it is much more complicated. 

Given a strongly connected graph, the algorithm finds a cycle of length at least some constant $k$ while there exists such a cycle, and then a longest cycle in the current graph, contracts the cycle, and recurses. The contracted graph remains strongly connected. When the graph, finally, collapses into a digraph with cycles of length at most $3$, it solves the subproblem optimally and returns the set of edges contracted during the course of the algorithm as the desired SCSS.

This algorithm differs from the DPA algorithm (described in Section~\ref{sec:dpa}) in the contracted structures. More precisely, only simple cycle structures are found (since simple cycle structures are the only contracted structures exist). Thus, assuming Conjecture~\ref{con:conjecture2.1} holds, each structure found during the algorithm saves at least two edges for an optimal solution. This implies the following lemma that is similar but stronger than Lemma~\ref{lem:lemma1.4}.

\begin{lemma} \label{lem:SCSSlemma}
For each $4 \leq i \leq k-1$, we have \ $OPT(G_{i-1}) \le OPT(G_i) - 2b_i $, where $OPT(G_i)$ is the size of an optimal solution for the component graph at the beginning of the $k-i$ iteration, and $b_i$ is the number of contracted structures found and contracted by the algorithm in the $k-i$ iteration.
\end{lemma}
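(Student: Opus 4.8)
The plan is to mirror the proof of Lemma~\ref{lem:lemma1.4}, but now exploiting that in the SCSS setting every contracted structure is a \emph{simple cycle} (there are no contractible structures with leaves and no combined-cycle structures to worry about, since cycles are the only structures that arise). First I would fix a value of $k$ and, for the $k-i$ iteration, let $G_{R'_i}$ denote the contracted graph just before a particular length-$i$ cycle $S_i$ is contracted; the goal is to show $OPT(G_{R'_i}) \ge OPT(G_{R'_i \cup Q_i}) + 2$, i.e., contracting $S_i$ saves at least two edges for any optimal solution, and then sum over the $b_i$ cycles contracted in that iteration to get $OPT(G_{i-1}) \le OPT(G_i) - 2b_i$.

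The core of the argument is the single-cycle saving bound, and I would lift it verbatim from the simple-cycle case inside the proof of Lemma~\ref{lem:lemma1.5}. Assume for contradiction that $OPT(G_{R'_i}) = OPT(G_{R'_i \cup Q_i})$. Let $H$ be the undirected version of $S_i$ on the component set $\C$, let $R'^*_i$ be an optimal solution for $G_{R'_i}$, and form the spanning subgraph of the components graph whose edges record which components a node of $R'^*_i$ can reach. If some component $C_l \in \C$ has a node of $R'^*_i$ reaching only components of $\C$, we are immediately done by Observation~\ref{obs:observation1.1}. Otherwise take a minimal connected subgraph $G' = (\C \cup U', E')$ of that graph spanning $\C$, subdivide each $\C$-internal edge by a new node of $U$ to make $\C$ an independent set, and apply Lemma~\ref{lem:lemma2.4} followed by Conjecture~\ref{con:conjecture2.1} (in place of Lemma~\ref{lem:lemma2.5}) to produce a \textsc{Sec}-$\C$-\textsc{HamCycle} $H'$ in $(\C \cup U \cup U', E' \cup E_H)$. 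If $H'$ uses a node of $U'$ it directly yields a contracted structure of size $\ge i+1$; if not, contracting the subdivision nodes back gives a cycle $H_\C$ on $\C$ that, together with a component $C' \in U'$ reachable from some $C_l \in \C$ via $R'^*_i$ (which exists by the ``otherwise'' assumption), again gives a contracted structure of size $\ge i+1$. Either way we contradict the fact that the algorithm already exhausted all contracted sets of size $\ge i+1$ before the $k-i$ iteration. Hence the saving is at least two, and since there are no leaf or combined-cycle cases to handle separately, the bound $OPT(G_{i-1}) \le OPT(G_i) - 2b_i$ holds for the full range $4 \le i \le k-1$, which is the claimed strengthening of Lemma~\ref{lem:lemma1.4}.

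The main obstacle is exactly the same delicate point as in Lemma~\ref{lem:lemma1.5}: one must be careful that the subdivision construction and the translation of $H'$ back to a contracted structure in $G_{R'_i}$ genuinely produces a structure of size at least $i+1$ — in particular that the extra component $C'$ is distinct from the components of $\C$ and is truly reachable in the contracted sense — so that it contradicts the algorithm's exhaustiveness. Invoking Conjecture~\ref{con:conjecture2.1} rather than one of the proven special cases is what upgrades the saving from the case analysis of Lemma~\ref{lem:lemma1.5} to a clean factor of two for every cycle length, and this is why the conjecture (rather than merely Lemmas~\ref{lem:lemma2.2}--\ref{lem:lemma24}) is needed here; plugging $OPT(G_{i-1}) \le OPT(G_i) - 2b_i$ into the same amortized accounting as in Theorem~\ref{theo:theorem1approxRatio} then yields the $3/2$ ratio as $k \to \infty$.
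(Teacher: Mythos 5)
There is a genuine gap, and it sits exactly at the factor of two. The paper does not really prove Lemma~\ref{lem:SCSSlemma} in detail --- it asserts, in the sentence preceding it, that under Conjecture~\ref{con:conjecture2.1} every contracted simple cycle saves \emph{two} to an optimal solution, and notes that this is ``similar but stronger than Lemma~\ref{lem:lemma1.4}''. Your proposal supplies detail, but the detail you supply only establishes a saving of one. You state the goal as $OPT(G_{R'_i}) \ge OPT(G_{R'_i \cup Q_i}) + 2$, yet the argument you then lift verbatim from the simple-cycle case of Lemma~\ref{lem:lemma1.5} begins with ``assume for contradiction that $OPT(G_{R'_i}) = OPT(G_{R'_i \cup Q_i})$''. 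Refuting that equality only rules out a saving of zero, i.e.\ it yields $OPT(G_{R'_i}) \ge OPT(G_{R'_i \cup Q_i}) + 1$, hence $OPT(G_{i-1}) \le OPT(G_i) - b_i$. This is precisely why, in the paper, simple cycles without leaves carry coefficient $1$ (not $2$) in Lemma~\ref{lem:lemma1.5}, while the coefficient $2$ in Lemma~\ref{lem:lemma1.4} is obtained by a different case analysis (two leaves via Observation~\ref{obs:observation1.3}, etc.), not by the second-Hamiltonian-cycle contradiction alone. Your closing remark that invoking the full conjecture ``upgrades the saving \ldots to a clean factor of two'' conflates two separate issues: the conjecture removes the bound on the cycle length ($|V|<24$ in Lemma~\ref{lem:lemma24}), but it does not change the fact that the contradiction scheme you run certifies only one unit of saving per cycle. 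With only $-b_i$ per iteration, the accounting of Theorem~\ref{theo:theorem1approxRatio} does not give $3/2$.

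To actually obtain $-2b_i$ you would have to rule out a saving of exactly one as well, e.g.\ start from the hypothesis $OPT(G_{R'_i \cup Q_i}) \ge OPT(G_{R'_i}) - 1$ and show it still forces, in the symmetric/unweighted edge setting, either two deletable intra-$\C$ edges of the optimal solution after contraction, or (after accounting for the single saved edge) a connected structure linking $\C$ to an outside component to which the \textsc{Sec}-$\C$-\textsc{HamCycle} machinery (Lemma~\ref{lem:lemma2.4} plus Conjecture~\ref{con:conjecture2.1}) can again be applied to produce a cycle of length at least $i+1$, contradicting the algorithm's exhaustiveness. None of this second layer appears in your write-up, so as it stands the proposal proves a strictly weaker inequality than the lemma claims.
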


By combining this lemma with Lemma~\ref{lem:lemma1.1} and Lemma~\ref{lem:lemma1.2}, we get the following theorem.
\begin{theorem}\label{thm:theorem1approxRatio}
The algorithm of Khuller et al. in~\cite{Young02} (described above) is a ${3}/{2}$-approximation algorithm for the SCSS problem in symmetric unweighted digraphs, assuming Conjecture~\ref{con:conjecture2.1} holds.
\end{theorem}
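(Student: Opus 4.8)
The plan is to mirror the analysis of Theorem~\ref{theo:theorem1approxRatio}, but in the simpler SCSS setting where the only contracted structures are simple cycles, so that every structure found saves at least two edges for the optimum. First I would set up the same accounting: pick a constant $k$ (taking $k=8$, as in Section~\ref{sec:dpa}), let $n$ be the number of vertices of the initial component graph, let $n_i$ and $OPT(G_i)$ denote, respectively, the number of components and the size of an optimal SCSS at the beginning of the $k-i$ iteration, and let $b_i$ be the number of length-$i$ cycle structures contracted in that iteration. Since contracting a simple cycle of length $i$ merges $i$ components into one, Lemma~\ref{lem:lemma1.1} degenerates to $n_i = n_{i-1} + (i-1)b_i$ (no $b_{i,j}$ terms, because in this algorithm no structures of size exceeding the current iteration index survive). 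Lemma~\ref{lem:lemma1.2} carries over verbatim, giving $\frac{i}{i-1}(n_i-1)\le OPT(G_i)\le 2(n_i-1)$; in particular $\frac{k}{k-1}$ is the amortized per-component cost of a $k$-cycle contracted in Phase~1, and $n_3 \ge OPT(G_3)/2$.

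Next I would invoke Lemma~\ref{lem:SCSSlemma}, which (conditional on Conjecture~\ref{con:conjecture2.1}) upgrades Lemma~\ref{lem:lemma1.4}/Lemma~\ref{lem:lemma1.5} to the uniform bound $OPT(G_{i-1}) \le OPT(G_i) - 2b_i$ for every $4 \le i \le k-1$: because the structures are always simple cycles and a simple cycle on $i\ge 4$ components together with a spanning connected auxiliary structure forms a graph in the family $\G$ (via Lemma~\ref{lem:lemma2.4}), the \textsc{Sec}-$\C$-\textsc{HamCycle} guaranteed by the conjecture produces either a longer contractible cycle (contradicting maximality in that iteration) or shows the optimum reaches outside $\C$, so two edges are saved. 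Telescoping over $i=4,\dots,k-1$ yields $OPT(G_3) \le OPT(G_{k-1}) - 2\sum_{i=4}^{k-1} b_i$.

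Then the arithmetic is exactly the telescoping computation in the proof of Theorem~\ref{theo:theorem1approxRatio}, only cleaner. Writing
\[
|R| \le \frac{k}{k-1}(n - n_{k-1}) + \sum_{i=4}^{k-1} i\, b_i + OPT(G_3),
\]
substituting $n_{k-1} = n_3 + \sum_{i=4}^{k-1}(i-1)b_i$, using $n_3 \ge OPT(G_3)/2$ and then $OPT(G_3)\le OPT(G_{k-1}) - 2\sum b_i$, the coefficient of each $b_i$ becomes $\frac{k-i}{k-1} - \frac{k-2}{k-1} = \frac{2-i}{k-1} \le 0$ for $i\ge 4$, so all the $b_i$ terms drop out and we are left with $|R| \le \frac{k}{k-1} n + \frac{k-2}{2(k-1)} OPT(G_{k-1})$. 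Since the graph is unweighted and initially strongly connected, $OPT(G_{\emptyset}) \ge n$ and $OPT(G_{\emptyset}) \ge OPT(G_{k-1})$, giving $|R| \le \frac{3k-2}{2(k-1)} OPT$; for $k=8$ this is $\frac{11}{7}$, and as $k\to\infty$ it tends to $\frac{3}{2}$. I would then note that the same argument with $k$ chosen large enough (the cycle-finding step for a fixed constant length stays polynomial, as in Section~\ref{sec:dpa}) drives the ratio below $\frac{3}{2}+\epsilon$ for any $\epsilon>0$; to reach exactly $\frac{3}{2}$ one observes that the Phase~1 term $\frac{k}{k-1} n$ can be absorbed since each long cycle contracted there also saves edges, so $n$ may be replaced by $OPT(G_{k-1})$ throughout in the same telescoping manner.

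The main obstacle is the transition from the per-iteration inequality of Lemma~\ref{lem:SCSSlemma} to the final clean $\frac{3}{2}$: one has to be careful that the ``two edges saved'' bound really holds for the \emph{longest}-cycle phase and not just the bounded-length phase, and that the contracted auxiliary graph fed into Conjecture~\ref{con:conjecture2.1} genuinely has the independence and connectivity properties demanded (the role of Lemma~\ref{lem:lemma2.4} here). The symmetry hypothesis is what makes the auxiliary graph bipartite with $V$ independent, so the reduction to the conjecture is legitimate; verifying that no edge-case (cycles of length exactly $3$ in the residual graph, handled optimally by the base case of the algorithm) leaks into the estimate is the remaining bookkeeping. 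Everything else is the arithmetic already carried out for Theorem~\ref{theo:theorem1approxRatio} with the simplifying feature that only simple-cycle structures occur.
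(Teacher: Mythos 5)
Your proposal follows the paper's own route: the paper's proof of Theorem~\ref{thm:theorem1approxRatio} is exactly the computation you describe, namely rerunning the accounting of Theorem~\ref{theo:theorem1approxRatio} with Lemma~\ref{lem:SCSSlemma} (the uniform ``save $2$ per contracted cycle'' bound, valid under Conjecture~\ref{con:conjecture2.1}) in place of Lemmas~\ref{lem:lemma1.4} and~\ref{lem:lemma1.5}, telescoping to $|R|\le \frac{k}{k-1}n+\frac{k-2}{2(k-1)}OPT(G_{k-1})\le \frac{3k-2}{2(k-1)}\,OPT$, and observing that this tends to $3/2$ as $k$ grows. Your arithmetic (coefficient $\frac{k-i}{k-1}-\frac{k-2}{k-1}=\frac{2-i}{k-1}\le 0$ for $i\ge 4$) is correct, and your remark that the symmetry hypothesis plus Lemma~\ref{lem:lemma2.4} is what legitimizes feeding the optimal solution's auxiliary graph into the conjecture matches the intended use.

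The one step that does not hold as stated is your final ``absorption'' argument for reaching exactly $3/2$ at a fixed $k$: you claim that each cycle contracted in Phase~1 ``also saves edges,'' so that $n$ may be replaced by $OPT(G_{k-1})$. The save-$2$ argument (Lemma~\ref{lem:SCSSlemma}, via the proof pattern of Lemma~\ref{lem:lemma1.5}) derives its contradiction from the fact that no contractible cycle longer than the one being contracted exists at that moment; this is guaranteed in the iterations $i=k-1,\dots,4$, but in Phase~1 the algorithm contracts an arbitrary cycle of length at least $k$, not a longest one, so the \textsc{Sec}-$\C$-\textsc{HamCycle} produced by the conjecture may simply exhibit a longer cycle that genuinely exists, and no saving can be concluded. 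Hence that patch fails under the algorithm as described. Note that the paper itself does not close this gap either: its proof only asserts that the ratio $\frac{3k-2}{2(k-1)}$ tends to $3/2$ as $k$ increases (equivalently, one gets exactly $3/2$ for the idealized variant in which every contraction is of a longest cycle, where the save-$2$ bound applies to all contractions and the bound $|R|\le n-1+\frac{1}{2}OPT\le \frac{3}{2}OPT-1$ follows directly). Apart from this last flourish, your argument is the paper's.
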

\begin{proof}
Applying a similar (yet simpler) analysis of the performance of the dual power assignment algorithm (Section~\ref{sec:dpa}) yields an upper bound of $\frac{3k-2}{2(k-1)}OPT$. This approximation ratio tends to $3/2$ as $k$ increases. 
\end{proof}

Since we verified Conjecture~\ref{con:conjecture2.1} for $|V| < 24$ (see Lemma~\ref{lem:lemma24}), 
we have the following corollary.
\begin{corollary}
The algorithm of Khuller et al. in~\cite{Young02} is a ${35}/{23}$-approximation algorithm ($\thickapprox 1.522$) for the SCSS problem in symmetric unweighted digraphs.
\end{corollary}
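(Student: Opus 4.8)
The plan is to revisit the proof of Theorem~\ref{thm:theorem1approxRatio}, which establishes that the Khuller et al. algorithm is a $\frac{3k-2}{2(k-1)}$-approximation for the SCSS problem on symmetric unweighted digraphs \emph{conditional on} Conjecture~\ref{con:conjecture2.1} holding in full generality. The key observation is that the only place the conjecture enters the argument is in Lemma~\ref{lem:SCSSlemma}, whose proof mirrors the simple-cycle case of Lemma~\ref{lem:lemma1.5}: there, when a simple-cycle contractible structure $S_i$ of size $i$ is found, one builds a bipartite auxiliary graph $(\C \cup U \cup U', E' \cup E_H)$ from the optimal solution's reachability structure and invokes a \textsc{Sec}-$\C$-\textsc{HamCycle} to derive a contradiction with the maximality of the contracted structures. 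That invocation only ever needs the conjecture for instances with $|\C| = i$ nodes on the Hamiltonian cycle, and since the algorithm in Phase-style iterations contracts cycles of length at most $k-1$ before collapsing, all relevant instances have $|V| = |\C| < k$.

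So the concrete step is: choose $k = 23$ instead of letting $k \to \infty$. With $k = 23$, every simple-cycle structure encountered has at most $22 < 24$ components, so by Lemma~\ref{lem:lemma24} the conjecture (in the stronger form for the family $\G$, which as noted implies the tree-based conjecture for $|V| < 24$) holds for exactly the instances we need, \emph{unconditionally}. Then Lemma~\ref{lem:SCSSlemma} holds unconditionally for $4 \le i \le 22$, and feeding this into the analysis exactly as in the proof of Theorem~\ref{thm:theorem1approxRatio} (combining with Lemma~\ref{lem:lemma1.1} and Lemma~\ref{lem:lemma1.2}) yields the bound $\frac{3k-2}{2(k-1)} \cdot OPT$ with $k = 23$, which is $\frac{3 \cdot 23 - 2}{2 \cdot 22} = \frac{67}{44}$. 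I should double-check this arithmetic against the claimed $\frac{35}{23}$: note $\frac{35}{23} \approx 1.5217$ while $\frac{67}{44} \approx 1.5227$, so the two are close but not equal — the paper's stated ratio $\frac{35}{23}$ suggests the intended accounting gives $\frac{3k-2}{2(k-1)}$ evaluated differently, or uses $k$ such that $\frac{3k-2}{2(k-1)} = \frac{35}{23}$, i.e. $23(3k-2) = 35 \cdot 2(k-1)$, giving $69k - 46 = 70k - 70$, so $k = 24$. The resolution is that with the conjecture verified for $|V| < 24$, one may take $k = 24$ in the SCSS algorithm: cycles of length up to $k - 1 = 23 < 24$ are contracted, so all simple-cycle structures have $|\C| \le 23$ and Lemma~\ref{lem:lemma24} applies. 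Plugging $k = 24$ into $\frac{3k-2}{2(k-1)}$ gives exactly $\frac{70}{46} = \frac{35}{23} \approx 1.522$.

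The write-up is therefore short: state that by Lemma~\ref{lem:lemma24} Conjecture~\ref{con:conjecture2.1} holds whenever $|V| < 24$; observe that in the Khuller et al. algorithm with parameter $k = 24$, every simple-cycle contractible structure processed has fewer than $24$ components, so Lemma~\ref{lem:SCSSlemma} holds unconditionally for all $4 \le i \le k - 1 = 23$; then rerun the (simpler) analysis of Theorem~\ref{theo:theorem1approxRatio} to obtain $|R| \le \frac{3k-2}{2(k-1)} \, OPT = \frac{35}{23}\, OPT$. I expect the main obstacle — really the only subtle point — to be pinning down precisely which size-bounded version of the conjecture is needed and confirming it matches what Lemma~\ref{lem:lemma24} delivers (the distinction between the tree-based family and the forest family $\G$, and the off-by-one between "cycle length $\le k-1$" and "$|V| < 24$"); once that bookkeeping is nailed down, the rest is an immediate specialization of the already-established conditional theorem.
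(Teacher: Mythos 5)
Your proposal is correct and matches the paper's (largely implicit) argument: since Lemma~\ref{lem:lemma24} verifies Conjecture~\ref{con:conjecture2.1} for $|V|<24$, one runs the analysis of Theorem~\ref{thm:theorem1approxRatio} with $k=24$, so every simple-cycle structure needing Lemma~\ref{lem:SCSSlemma} has at most $k-1=23$ components and the bound $\frac{3k-2}{2(k-1)}=\frac{35}{23}$ follows unconditionally. Your self-correction from $k=23$ to $k=24$ and the bookkeeping remark about the forest family $\G$ versus the tree case are exactly the right points to check, and they resolve as you expect.
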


\subsection{SCSS for symmetric digraphs with bounded cycle length}
In~\cite{Khuller02}, Khuller et al. consider the SCSS problem in a strongly connected digraphs with bounded cycle length. They give a proof that, for graphs where each directed cycle has at most three edges is equivalent to the maximum bipartite matching, and, thus can be solved optimally.
Moreover, in \cite{Young02} Khuller et al. prove that the problem remains NP-hard even when the maximum cycle length is at most five.
In this section, we consider the same problem in symmetric digraphs with bounded cycle length, and show the following.
\begin{theorem}
The algorithm of Khuller et al. in~\cite{Young02} (described above) is a $\frac{3k -2}{2k}$-approximation algorithm for the SCSS problem in symmetric unweighted digraphs, where $k$ is the maximum cycle length in the graph, assuming Conjecture~\ref{con:conjecture2.1} holds or $k < 24$.
\end{theorem}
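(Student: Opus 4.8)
The plan is to mirror the analysis of Theorem~\ref{theo:theorem1approxRatio} and Theorem~\ref{thm:theorem1approxRatio}, but to exploit the hypothesis that the maximum cycle length in the input digraph is exactly $k$. First I would observe that, since every directed cycle in a symmetric digraph with bounded cycle length $k$ has at most $k$ edges, the algorithm of Khuller et al.\ never finds a cycle of length larger than $k$; hence the ``while there exists a long cycle'' preprocessing phase contracts only cycles of length at most $k$, and the recursion bottoms out at digraphs with cycles of length at most $3$, which are solved optimally via bipartite matching. The key point is that here $k$ is a genuine bound coming from the instance, not a tunable constant, so there is no loss in the first phase: the initialization phase that in the DPA analysis contributes the $\frac{k}{k-1}$ amortized term is replaced by a phase that contracts $k$-cycles optimally (each $k$-cycle saves $k-1$ for the optimum among the $k$ contracted components, giving amortized cost $\frac{k}{k}=1$ per saved component rather than $\frac{k}{k-1}$).

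Next I would invoke Lemma~\ref{lem:SCSSlemma} (whose hypothesis is Conjecture~\ref{con:conjecture2.1}, or the unconditional Lemma~\ref{lem:lemma24} when $k<24$): every contracted simple-cycle structure of size $i$ with $4\le i\le k-1$ saves at least $2$ for an optimal solution, so $OPT(G_{i-1})\le OPT(G_i)-2b_i$. Summing the recurrences in Lemma~\ref{lem:lemma1.1} and Lemma~\ref{lem:lemma1.2} exactly as in the proof of Theorem~\ref{theo:theorem1approxRatio}, but with the improved amortized cost $1$ (instead of $\frac{k}{k-1}$) in the first phase, the bound $|R|\le \frac{3k-2}{2(k-1)}OPT$ improves to
$$|R|\le \frac{1}{k}\cdot k\cdot n + \frac{k-2}{2(k-1)}\cdot OPT(G_{k-1}) \le \Bigl(1+\frac{k-2}{2(k-1)}\cdot\frac{k-1}{k}\Bigr)OPT = \frac{3k-2}{2k}\cdot OPT,$$
using $OPT\ge n$ and $OPT\ge OPT(G_{k-1})$. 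So the only change from the earlier analysis is that the leading $n$-term has coefficient $1$ rather than $\frac{k}{k-1}$, which replaces the denominator $k-1$ by $k$ throughout.

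I would then note the two regimes matching the theorem's disjunctive hypothesis: if we assume Conjecture~\ref{con:conjecture2.1}, the argument works for every $k$; if instead $k<24$, then Lemma~\ref{lem:lemma24} supplies the needed second-Hamiltonian-cycle property unconditionally (since the contracted structures arising have at most $k<24$ components), so the same bound $\frac{3k-2}{2k}$ holds without any conjecture. The main obstacle — and the step deserving the most care — is verifying that in this bounded-cycle-length setting the contracted structures encountered in the intermediate phases really are \emph{simple cycles} of the stated sizes, so that Lemma~\ref{lem:SCSSlemma} applies verbatim; this is where one uses that the instance's underlying graph is symmetric and that no cycle exceeds length $k$, exactly as in the reduction preceding Lemma~\ref{lem:lemma1.5}, together with Lemma~\ref{lem:lemma2.4} to pass from the strongly-connected contracted picture to the bipartite-forest picture required by Conjecture~\ref{con:conjecture2.1}. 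The remaining arithmetic is routine and identical in form to the proof of Theorem~\ref{theo:theorem1approxRatio}.
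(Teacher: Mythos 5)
There is a genuine gap: the improvement from $\frac{3k-2}{2(k-1)}$ to $\frac{3k-2}{2k}$ does not come from where you place it, and the two inequalities you write are not justified (the first is in fact false). Your claim that the size-$k$ contractions can be charged ``amortized cost $1$ per saved component'' is incorrect: contracting a $k$-cycle still costs $k$ edges while reducing the number of components by $k-1$, i.e.\ amortized $\frac{k}{k-1}$, exactly as before; nothing about the instance makes this phase cheaper. Consequently your intermediate bound $|R|\le n+\frac{k-2}{2(k-1)}OPT(G_{k-1})$ fails already on simple instances, e.g.\ $m\ge 2$ vertex-disjoint $k$-cycles joined into a chain by symmetric edges: there $|R|=mk+2(m-1)$ while your bound is $mk+\frac{k-2}{k-1}(m-1)$. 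Moreover, your second step silently uses $OPT(G_{k-1})\le\frac{k-1}{k}\,OPT$, which does not follow from the stated facts $OPT\ge n$ and $OPT\ge OPT(G_{k-1})$ and is false in general (if the instance happens to contain no $k$-cycle, then $OPT(G_{k-1})=OPT$); with only the facts you cite, your chain yields $\bigl(1+\frac{k-2}{2(k-1)}\bigr)OPT=\frac{3k-4}{2(k-1)}\,OPT$, which is strictly weaker than the claimed ratio.

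The paper's proof gets the gain differently, and in two places. First, since no contraction of size greater than $k$ exists, \emph{every} contraction, including those of size exactly $k$, is charged its full cost in $\sum_{i=4}^{k} i\cdot b_i$ and receives the ``saves $2$'' credit of Lemma~\ref{lem:SCSSlemma} (valid under Conjecture~\ref{con:conjecture2.1}, or unconditionally via Lemma~\ref{lem:lemma24} when $k<24$); splitting $OPT(G_{R_3})$ into two halves, bounding one by Lemma~\ref{lem:SCSSlemma} and the other by $2(n_3-1)$ from Lemma~\ref{lem:lemma1.2}, and collapsing with Lemma~\ref{lem:lemma1.1} gives $|R|\le n-1+\frac{1}{2}OPT$. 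Second --- and this is the only place where the hypothesis on the maximum cycle length enters quantitatively --- the \emph{lower} bound of Lemma~\ref{lem:lemma1.2} gives $OPT\ge\frac{k}{k-1}(n-1)$, i.e.\ $n-1\le\frac{k-1}{k}OPT$, precisely because no cycle has more than $k$ edges; plugging this in yields $|R|\le\frac{k-1}{k}OPT+\frac{1}{2}OPT=\frac{3k-2}{2k}OPT$. So the denominator changes from $k-1$ to $k$ via a strengthened lower bound on $OPT$ in terms of $n$, not via a cheaper first phase; your high-level ingredients (redundant long-cycle phase, the saves-2 lemma, Lemmas~\ref{lem:lemma1.1} and~\ref{lem:lemma1.2}, and the disjunction conjecture/$k<24$) are the right ones, but the accounting needs to be redone along these lines.
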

\begin{proof}
The length of the longest cycle is at most $k$, thus the first phase of the algorithm (looking for cycles of length greater than $k$) is redundant.
Therefore, we have 
\begin{align*}
|R| & \le \sum\limits_{i=4}^{k} i \cdot b_i + OPT(G_{R_3}) \, \\
    & = \sum\limits_{i=4}^{k} i \cdot b_i + \frac{OPT(G_{R_3})}{2} + \frac{OPT(G_{R_3})}{2} \, \\
    & \stackrel{(1)}{\le} \sum\limits_{i=4}^{k} i \cdot b_i + \frac{OPT(G_R) -2\sum\limits_{i=4}^{k} \cdot b_i }{2} + \frac{OPT(G_{R_3})}{2} \, \\
    &  \stackrel{(2)}{\le} \sum\limits_{i=4}^{k} i \cdot b_i + \frac{OPT(G_R) -2\sum\limits_{i=4}^{k} \cdot b_i }{2} + \frac{2(n_3 -1)}{2} \, \\
    &  \stackrel{(3)}{=}  n - n_3 + \sum\limits_{i=4}^{k}  \cdot b_i + \frac{OPT(G_R)}{2} - \sum\limits_{i=4}^{k} \cdot b_i + n_3 - 1 \, \\
    & = n +      \frac{OPT(G_R)}{2} - 1 \, \\
    &  \stackrel{(2)}{\le} OPT(G_R) \cdot ( \frac{k-1}{k} + \frac{1}{2}) \, \\
    & = \frac{3k-2}{2k} \cdot OPT(G_R)
\end{align*}  
where (1) follows from Lemma~\ref{lem:SCSSlemma}, (2) follows from Lemma~\ref{lem:lemma1.2}, and (3) follows from Lemma~\ref{lem:lemma1.1}.
 
\end{proof}

\section*{Acknowledgment}
The authors would like to thank Carsten Thomassen for his help with the correctness of Lemma~\ref{lem:lemma2.1}.

\bibliographystyle{plain}
\bibliography{ref}
\end{document}